\title{On path-based coalgebras and weak notions of bisimulation}
\author{Harsh Beohar Sebastian K\"{u}pper\\
Theoretical Computer Science Group\\
Universit\"{a}t Duisburg-Essen\\
\texttt{\{harsh.beohar,sebastian.kuepper\}@uni-due.de}}
\newtheorem{theorem}{Theorem}
\newtheorem{lemma}[theorem]{Lemma}
\newtheorem{example}[theorem]{Example}
\theoremstyle{plain}
\newtheorem{proposition}[theorem]{Proposition}
\newtheorem{corollary}{Corollary}
\theoremstyle{definition}
\newtheorem{definition}{Definition}
\newtheorem{remark}{Remark}
\newcommand{\act}{A}
\newcommand{\tact}{A_\tau}
\newcommand{\trace}[1]{\mathrm{trace}(#1)}
\newcommand{\history}[1]{\downarrow\!{#1}}
\newcommand{\future}[1]{\uparrow\!{#1}}
\newcommand{\dom}[1]{\mathrm{dom}(#1)}
\newcommand{\cod}[1]{\mathrm{cod}(#1)}
\newcommand{\eseq}[1]{\varepsilon_{#1}}
\newcommand{\step}[1]{\xrightarrow {#1}}
\newcommand{\exec}[1]{\mathrm{Exec}(#1)}
\newcommand{\last}[1]{\mathrm{last}(#1)}
\newcommand{\mcal}[1]{\mathcal {#1}}
\newcommand{\paths}[1]{\mathrm{Path}(#1)}
\newcommand{\stutter}[1]{{#1}^{\dag}}
\newcommand{\id}[1]{\mathrm{id}_{#1}}
\newcommand{\qset}[1]{{#1}_{\sim}}
\newcommand{\spaths}[1]{\qset{\mathrm{Path}}(#1)}
\newcommand{\tpaths}{\mathrm{Path}}
\newcommand{\tspaths}{\mathrm{Path}_\sim}
\newcommand{\inv}[1]{{#1}^{-1}}
\newcommand{\meas}{\mathbf{Meas}}
\newcommand{\set}{\mathbf{Set}}
\newcommand{\G}{\mathcal G}
\newcommand{\eval}[1]{\epsilon_{#1}}
\newcommand{\nnreals}{\left[0,\infty\right]}
\newcommand{\closure}[1]{\mathrm{cl}(#1)}
\newcommand{\fsubseteq}{\subseteq_{\mathrm{f}}}
\newcommand{\lpaths}[1]{\mathrm{Path}^{\scriptscriptstyle\infty}(#1)}
\newcommand{\slpaths}[1]{\mathrm{Path}^{\scriptscriptstyle\infty}_{\sim}(#1)}
\newcommand{\scott}[1]{{#1}_{{\scriptscriptstyle\infty}}}
\providecommand*{\twoheadrightarrowfill@}{%
  \arrowfill@\relbar\relbar\twoheadrightarrow
}
\providecommand*{\xtwoheadrightarrow}[2][]{%
  \ext@arrow 0395\twoheadrightarrowfill@{#1}{#2}%
}
\newcommand{\steps}[1]{\xtwoheadrightarrow{#1}{}}
\newcommand{\stepsX}[1]{\steps{#1}_{\!\!X}}
\begin{document}

\maketitle

\begin{abstract}
  It is well known that the theory of coalgebras provides an abstract definition of behavioural equivalence that coincides with strong bisimulation across a wide variety of state-based systems. Unfortunately, the theory in the presence of so-called silent actions is not yet fully developed. In this paper, we give a coalgebraic characterisation of branching (delay) bisimulation in the context of labelled transition systems (fully probabilistic systems). It is shown that recording executions (up to a notion of stuttering), rather than the set of successor states, from a state is sufficient to characterise the respected bisimulation relations in both cases.
\end{abstract}
\section{Introduction}\label{sec:intro}

Since its inception, coalgebra-based modelling of systems provides a simple and abstract definition of behavioural equivalence that coincides with the so-called strong bisimulation relations across a wide variety of dynamical systems (see \cite{Rutten:2000:coalgebra} for an introduction). Two states are said to be behaviourally equivalent if they are mapped to a common point by a coalgebra homomorphism. Unfortunately, the theory in the presence of so-called \emph{silent} actions is not yet well developed, albeit some general constructions (with varying level of generality) characterising Milner's weak bisimulation \cite{Milner:1980:CCS} are proposed in the literature (see, for instance, \cite{Goncharov:2014:wbisim,Sokolova:2005:actiontype,Brengos:2015:journal,Brengos:2014,Brengos13:order-enriched}
and the references therein).

Another refinement of strong bisimulation is \emph{branching bisimulation} proposed by \citeauthor{bbisim:1996} \cite{bbisim:1996}, which is the coarsest equivalence (in the \citeauthor{vanGlabbeek:1993:specII} spectrum \cite{vanGlabbeek:1993:specII}) preserving the branching structure of a state \cite{vanGlabbeek:2001:btime}. In this context, we are unaware of any prior work that captured branching bisimulation in the framework of coalgebras. Moreover, a natural notion of behavioural equivalence should preserve the branching structure of a state just like strong bisimulation does in the absence of silent action.

\citeauthor{Bonchi2015:kill-epsilons} \cite{Bonchi2015:kill-epsilons} have considered silent transitions coalgebraically by removing them all together by considering the labels as words rather than single letters. This approach is not useful when characterising branching bisimulation (or even weak bisimulation) because not all silent transitions can always be removed from the system without violating the transfer properties of branching (weak) bisimulation.
In \cite{Goncharov:2014:wbisim,Brengos:2015:journal,Brengos:2014,Brengos13:order-enriched}, weak bisimulation is captured in two phases: first, a given coalgebra is transformed into a coalgebra (possibly over a different base category) which captures the ``saturation'' effect of a silent action; second, it is shown that the notion of behavioural equivalence on this transformed coalgebra coincides with weak bisimulation on the corresponding dynamical system. In \cite{Sokolova:2005:actiontype} the authors used Aczel-Mendler style formulation of strong bisimulation in the latter step.

\begin{figure}
  \centering
  \vspace{-0.5cm}
  \includegraphics[width=5cm]{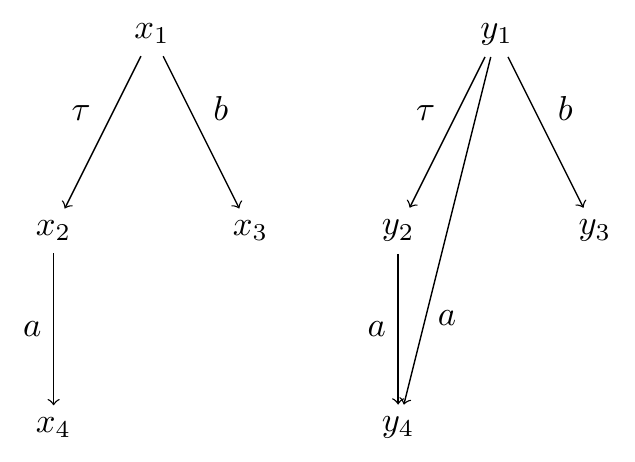}
  \caption{The states $x_1,y_1$ are weakly bisimilar, but not branching bisimilar.}
  \vspace{-0.5cm}
  \label{fig:cex:bbisim}
\end{figure}
Nevertheless, it is well known that the saturation step (i.e. adding strong $a$-transitions for each weak $a$-transition to the transition system) is not sound with respect to branching bisimulation even in the case of labelled transition systems \cite{bbisim:1996}; thus, a different approach to characterise it is required. The reason is that $\tau$-steps can enable or disable choice in an observable behaviour, something that is hidden by the saturation step. For weak bisimulation this point is irrelevant; however, branching bisimulation (which is finer than weak bisimulation) requires that $\tau$-steps which are required to answer an observable action may only lead to states that are still in bisimulation relation with the original state. For instance, consider the states $x_1,y_1$ described in Figure~\ref{fig:cex:bbisim}. They are \emph{not} branching bisimilar because the transition $y_1 \step{a} y_4$ can be simulated by the transitions $x_1 \step {\tau} x_2 \step {a} x_4$, but the intermediate state $x_2$ cannot be related with the state $y_1$ because $y_1$ can fire a $b$-transition which $x_2$ cannot simulate.

Our research hypothesis is that recording executions (up to a notion of stuttering) generated by a state (instead of the set of successor states) is sufficient to capture branching bisimulation across different classes of systems. Stated differently, it is the set of executions (not the set of successor states) which specifies the branching structure of a state in the presence of silent actions. In particular, we will substantiate this claim for the class of labelled transition systems and fully probabilistic systems in this paper.

In lieu of the above hypothesis, we restructure the classical coalgebraic machinery in the following way. We begin by studying a notion of \emph{paths} on an arbitrary set $X$ (denoted $\paths X$) in Section~\ref{sec:prelim}, which is general enough to specify the executions of a labelled transition system and a fully probabilistic system. Intuitively, a path on $X$ can be viewed as a finite sequence that alternates between the elements of $X$ and an action in the alphabet $\tact=\act\uplus\{\tau\}$, where $\tau\not\in\act$ is the silent action. Now for every path $p\in\paths X$ there is a unique stutter invariant path $\stutter p\in\paths X$ associated with it, which intuitively can be constructed by removing the $\tau$ self-loops. This is reminiscent of a coloured trace from \cite{bbisim:1996}, which is obtained from a concrete coloured trace in a process graph whose nodes are labelled by a fixed set of colours. 
In the sequel, stutter invariance induces an equivalence relation $\sim$ on the set $\paths X$ whose quotient is denoted as $\spaths X$. Furthermore, it turns out that both the mappings $\paths X,\spaths X$ are endofunctors on the category of sets $\set$.

Now what is missing in our approach is the type of dynamics (also known as the branching type in the theory of coalgebras).
For instance, a labelled transition system can be viewed as a coalgebra of type $\mcal P\circ (\tact \times \id{})$ over the base category $\set$. Here $\mcal P$ is the covariant powerset functor and $\tact \times \id{}$ is the product functor whose left component is fixed. In other words, the branching type of labelled transition system is nondeterministic. Therefore, to characterise branching bisimulation, we consider coalgebras of type $\mcal P\circ \tspaths$ over the base category $\set$. In Section~\ref{sec:lts}, we show that behavioural equivalence in this coalgebra coincides with the traditional branching bisimulation relation \cite{bbisim:1996}.  Moreover, this framework can also be used to characterise the weak bisimulation, delay bisimulation, and eta-bisimulation relations; however, for reasons of space, this is worked out in Appendix~\ref{sec:weak-eta-del}.

Nevertheless, the situation is not so straightforward in the case of a fully probabilistic system. Often such systems are modelled as coalgebras of type $\mcal D\circ(\tact \times \id{})$ over the base category $\set$, where $\mcal D$ is the sub-distribution functor. It turns out that one needs a notion of measurable space and a measure on the set of maximal executions\footnote{An execution of a fully probabilistic system is \emph{maximal} if it is an infinite execution or it stops in a state with the sum of probabilities of all outgoing transitions as $0$.} in order to define branching (weak) bisimulation relations over the states of a fully probabilistic system (cf. \cite{BaierHermanns:1997,sokolova:2009:sacs}). Thus, it is natural to consider fully probabilistic systems as `weighted' coalgebras of type $\G\circ (\tact\times\id{})$ over the base category of measurable spaces $\meas$. Here, $\G$ is the well-known Giry monad of probability measures \cite{Panangaden:2009:LMP}. 

In Section~\ref{sec:prob}, just like in the discrete case, we consider coalgebras of type $\G \circ \tspaths$ over the base category $\meas$ to characterise probabilistic delay bismulation, which was mistakenly \cite{sokolova-comm} called probabilistic branching bisimulation in \cite{sokolova:2009:sacs,Sokolova:2005:actiontype}. The crux of the matter is in defining $\paths X$ and $\spaths X$ as endofunctors on the category $\meas$. In other words, we need to resolve the following issues: first, which subsets of $\paths X$ and $\spaths X$ are measurable; second, whether $\paths X \rTo^{\paths f} \paths Y$ (for a given $X \rTo^f Y$ in $\set$) is a measurable function or not; third, constructions of measures on the sets $\paths X$ and $\spaths X$. These issues are explored in Section~\ref{sec:pmeas-paths}, for which some preliminary knowledge on topology, domain theory, and measure theory is required. In Section~\ref{sec:conc}, we discuss future directions for research and present some concluding remarks. All the complete proofs pertaining to each section can be found in the appendices~\ref{app-prelim}-\ref{app-prob}.


\section{Preliminaries}\label{sec:prelim}


This section is devoted to formally introduce a notion of path and stutter equivalent path on a set $X$, which will be used throughout the paper. As mentioned earlier, a path on $X$ can be intuitively viewed as a finite sequence that alternates between the elements of the set $X$ and an action in the alphabet $\tact$. However, we abstain from this operational view in favour of Definition~\ref{def:path} to reason about paths from a functional perspective.

Let $\tact^*$ be the set of finite words with $\eseq{}\in\tact^*$ denoting the empty sequence. We write $\preceq$ to denote the prefix ordering on words and let $\history\sigma= \{\sigma' \mid \sigma'\preceq \sigma\}$.

\begin{definition}\label{def:path}
  A \emph{path} $p$ on a set $X$ is a function whose codomain is $X$ and domain is the set of all prefixes of some word in $\tact^*$.
\end{definition}


Let $\paths X$ be the set of all paths on a given set $X$. Then, this lifts to an endofunctor on the category of sets $\set$ by letting: $\paths f(p)=f\circ p$, for every function $X \rTo^f Y$.

Every path $p\in \paths X$ has a \emph{trace} associated with it. Moreover, every path $p\in \paths X$ reaches a \emph{last} element from the set $X$. Symbolically, we write
\[\trace p = \max \dom p\quad \text{and} \quad \last p = p (\trace p).\]
\begin{proposition}\label{prop:trace-last-pres}
  Let $X \rTo^f Y$ be a function. Then, for every path $p\in \paths X$ we have $\trace p = \trace {fp}$ and $f \last p = \last{fp}$.
\end{proposition}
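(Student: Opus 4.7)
The proposition is essentially a bookkeeping check: functoriality of $\paths{}$ preserves domains, and hence preserves the distinguished maximal element and the image at that element. My plan is to unpack the definitions in order and let the equalities fall out.

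First I would observe that, by Definition~\ref{def:path}, the path $p$ has domain $\history{\sigma}$ for some word $\sigma \in \tact^*$, and that $\paths{f}(p) = f \circ p$ by the definition of the action of $\paths{}$ on morphisms. Ordinary composition leaves the source untouched, so $\dom{f \circ p} = \dom{p} = \history{\sigma}$. The first equality then reads
\[
\trace{fp} = \max \dom{fp} = \max \dom{p} = \trace{p}.
\]
For this step to make sense I would briefly note that $\history{\sigma}$ is linearly ordered by $\preceq$ (it is the set of prefixes of a single word) and therefore has a greatest element, namely $\sigma$ itself; this justifies using $\max$ on both sides.

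For the second equality I would just substitute: using the first equality and the definitions of $\last{\cdot}$ and of $\paths{f}$,
\[
\last{fp} = (fp)(\trace{fp}) = (fp)(\trace{p}) = f\bigl(p(\trace{p})\bigr) = f\,\last{p}.
\]

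I do not expect any genuine obstacle here; the only point that requires even a moment's thought is the well-definedness of $\max \dom{p}$, which follows immediately from the shape of the domain prescribed by Definition~\ref{def:path}. The rest is just composition of functions.
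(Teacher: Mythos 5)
Your proof is correct and follows essentially the same route as the paper's: both arguments reduce to the observation that $\dom{f\circ p}=\dom p$, from which $\trace{fp}=\max\dom{fp}=\max\dom p=\trace p$ and $\last{fp}=(fp)(\trace{fp})=f(p(\trace p))=f\last p$ follow by unwinding definitions. Your added remark on the well-definedness of $\max$ is a harmless elaboration the paper leaves implicit.
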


\begin{wrapfigure}{r}{0.32\textwidth}
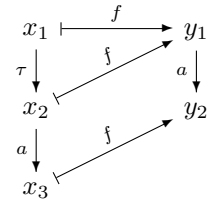

  \centering
  \vspace{-1cm}
  \begin{diagram}[height=1.5em,scriptlabels]
  x_1 & \rMapsto^f & y_1\\
  \dTo^{\tau} & \ruMapsto^f & \dTo^a\\
  x_2 &  & y_2\\
  \dTo^a & \ruMapsto^f &\\
  x_3 & &
\end{diagram}
\vspace{-0.6cm}
\caption{Two systems having same branching structure.}\label{ex:inerttau}
\vspace{-0.5cm}
\end{wrapfigure}
The above proposition states that the functor $\tpaths$ preserves the trace of a path, which is quite strong for our purpose.
To exemplify this, consider two labelled transition systems and a function between the states as shown in Figure~\ref{ex:inerttau}.
Since the $\tau$-step from the state $x_1$ does not disable any choice of observable action offered by $x_1$, we would like to declare the states $x_1$ and $x_2$ as equivalent.
In other words, we would like to assert that $f$ is a homomorphism between the coalgebras $(X=\{x_1,x_2,x_3\},\alpha)$ and $(Y=\{y_1,y_2\},\beta)$, where the functions $\alpha,\beta$ return all the generated executions. However, we note that this is not the case because the $f$-image of the execution $p=\langle x_1\ \tau \ x_2 \ a \ x_3 \rangle \in \alpha(x_1)$ is $\langle y_1 \ \tau\ y_1 \ a \ y_2 \rangle$ which is not an execution from $y_1$. Thus, the key observation is to relate the executions of two systems up to stuttering, which leads to the following definition.

\begin{definition}\label{def:stutter}
 A function $\phi$ with $\dom\phi=\dom p$ and $\cod\phi=A_\tau^*$ is a \emph{stutter basis} for a path $p\in\paths X$ if it can be constructed inductively by the following rules: 
\begin{enumerate}
  \item $\phi(\eseq{})=\eseq{}$.
  \item if $\sigma'\tau\in\dom p$ and $p(\sigma'\tau)=p(\sigma')$ then $\phi(\sigma'\tau)=\phi(\sigma')$.
  \item if $\sigma'\tau\in\dom p$ and $p(\sigma'\tau)\neq p(\sigma')$ then $\phi(\sigma'\tau)= \phi(\sigma')\tau$.
  \item if $\sigma'a\in\dom p$ and $a\in\act$ then $\phi(\sigma' a) = \phi(\sigma') a$.
\end{enumerate}
\end{definition}
As an example, consider a path $p=\langle x_1 \ \tau\ x_1 \ a \ x_2 \rangle$. Then, the function $\phi$ defined as $\phi(\tau)=\phi(\eseq{})=\eseq{}$ and $\phi(\tau a)=a$ is a stutter basis $\phi$ for the path $p$. However, if $p=\langle x_1\ \tau \ x_2 \ a\ x_3 \rangle$ with $x_1\neq x_2$, then $\phi=\id{}$ is a stutter basis for $p$.


\begin{theorem}\label{thm:stutter-unique}
For any path there is a unique stutter basis.
\end{theorem}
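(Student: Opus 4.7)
The plan is to proceed by induction on the length of prefixes in $\dom p$, exploiting the fact that $\dom p = \history\sigma$ for some $\sigma \in \tact^*$ and hence is a finite chain under $\preceq$ of length $|\sigma|+1$. For each prefix $\sigma' \in \dom p$ of length $n$, the inductive hypothesis yields a unique value $\phi(\sigma')$, and the induction step determines $\phi(\sigma'x)$ for every one-letter extension $\sigma'x \in \dom p$ using exactly one of rules (2)--(4). We then conclude both existence and uniqueness of the stutter basis simultaneously from this induction.

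First I would treat the base case. Here $\phi(\eseq{})$ is forced by rule (1) to equal $\eseq{}$, and no other rule applies since rules (2)--(4) each require the argument to be of the form $\sigma'x$ for some letter $x$. So at length $0$ the value is uniquely determined.

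For the inductive step, suppose $\phi$ has been uniquely defined on every prefix of length $\leq n$, and let $\sigma'x$ be a prefix of length $n+1$. I would then make a case distinction on $(x, p(\sigma'x) \stackrel{?}{=} p(\sigma'))$. If $x = \tau$ and $p(\sigma'\tau) = p(\sigma')$, only rule (2) matches its hypotheses, forcing $\phi(\sigma'\tau) = \phi(\sigma')$. If $x = \tau$ and $p(\sigma'\tau) \neq p(\sigma')$, only rule (3) matches, forcing $\phi(\sigma'\tau) = \phi(\sigma')\tau$. If $x = a \in \act$, only rule (4) matches (its hypothesis $a \in \act$ excludes rule (2) and rule (3)), forcing $\phi(\sigma'a) = \phi(\sigma')a$. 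These three sub-cases are pairwise disjoint and cover all possible extensions $\sigma'x$ with $x \in \tact$, so exactly one rule fires and its value is determined by the inductively unique $\phi(\sigma')$. Hence $\phi(\sigma'x)$ exists and is unique.

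The main obstacle is not really a mathematical depth issue but rather a bookkeeping one: one must be careful that rules (2)--(4) are \emph{both} exhaustive (every one-letter extension in $\dom p$ triggers some rule) \emph{and} mutually exclusive (no extension triggers two rules giving conflicting values). The exclusivity between rules (2) and (3) is immediate from the complementary conditions $p(\sigma'\tau) = p(\sigma')$ versus $p(\sigma'\tau) \neq p(\sigma')$, while the exclusivity of rule (4) follows from $\act$ and $\{\tau\}$ being disjoint in $\tact = \act \uplus \{\tau\}$. Once that check is recorded, the induction closes and yields the theorem.
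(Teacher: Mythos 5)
Your proof is correct and follows essentially the same route as the paper: structural induction on the prefixes in $\dom p$ with a case distinction on whether the appended letter is in $\act$ or equals $\tau$, and on whether $p(\sigma'\tau)=p(\sigma')$. The only difference is that you also make the existence half explicit via the exhaustiveness/exclusivity of rules (1)--(4), whereas the paper's proof only argues that two given stutter bases must coincide.
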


\begin{lemma}\label{lemma:stutt-hpres}
  Let $\phi$ be the stutter basis for a path $p\in\paths X$ with $\dom p=\history {\sigma}$, for some $\sigma\in\tact^*$. Then, $\phi(\history {\sigma})=\ \history{\phi(\sigma)}.$
\end{lemma}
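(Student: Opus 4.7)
The natural strategy is induction on the length of the word $\sigma\in\tact^*$ that generates the domain of $p$, exploiting the fact that Definition~\ref{def:stutter} itself is inductively defined on prefixes. Because of Theorem~\ref{thm:stutter-unique} the stutter basis $\phi$ is uniquely determined, so no ambiguity arises when we restrict $p$ to a shorter prefix.

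For the base case $\sigma=\eseq{}$, we have $\history{\eseq{}}=\{\eseq{}\}$ and by clause~(1) of Definition~\ref{def:stutter} $\phi(\eseq{})=\eseq{}$, so $\phi(\history{\eseq{}})=\{\eseq{}\}=\history{\phi(\eseq{})}$. For the inductive step, write $\sigma=\sigma' x$ with $x\in\tact$ and apply the induction hypothesis to the restriction $p\!\restriction\!\history{\sigma'}$, whose stutter basis is the corresponding restriction of $\phi$. Then observe that $\history{\sigma}=\history{\sigma'}\cup\{\sigma\}$, hence
\[
\phi(\history{\sigma})\;=\;\phi(\history{\sigma'})\cup\{\phi(\sigma)\}\;=\;\history{\phi(\sigma')}\cup\{\phi(\sigma)\}.
\]
It therefore suffices to show that $\history{\phi(\sigma')}\cup\{\phi(\sigma)\}=\history{\phi(\sigma)}$, which I split according to the three remaining clauses of Definition~\ref{def:stutter}. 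If $x=a\in\act$ clause~(4) gives $\phi(\sigma)=\phi(\sigma')a$, so $\history{\phi(\sigma)}=\history{\phi(\sigma')}\cup\{\phi(\sigma')a\}$, as required. If $x=\tau$ and $p(\sigma'\tau)\neq p(\sigma')$, clause~(3) gives $\phi(\sigma)=\phi(\sigma')\tau$ and the same argument applies. If $x=\tau$ and $p(\sigma'\tau)=p(\sigma')$, clause~(2) yields $\phi(\sigma)=\phi(\sigma')$, so the set $\history{\phi(\sigma')}\cup\{\phi(\sigma)\}$ collapses to $\history{\phi(\sigma')}=\history{\phi(\sigma)}$.

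I do not expect a real obstacle; the only thing to be careful about is that in the stuttering case (clause~(2)) the newly added prefix $\sigma$ of the original path need not correspond to a new prefix of $\phi(\sigma)$, so the argument hinges on the observation that $\phi(\sigma')$ is already in $\history{\phi(\sigma')}$ and therefore does not produce a new element on the right-hand side. Everything else is bookkeeping on prefixes.
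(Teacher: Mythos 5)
Your proof is correct, and it is organised differently from the paper's. You proceed by induction on the length of $\sigma$, using $\history{\sigma'x}=\history{\sigma'}\cup\{\sigma'x\}$ and a clause-by-clause case analysis of Definition~\ref{def:stutter} to show that the single new element $\phi(\sigma)$ either extends $\history{\phi(\sigma')}$ by exactly one prefix (clauses (3) and (4)) or is absorbed into it (clause (2)); your appeal to Theorem~\ref{thm:stutter-unique} to identify the stutter basis of $p\!\restriction\!\history{\sigma'}$ with the restriction of $\phi$ is exactly the justification needed, and the paper itself uses the same restriction argument elsewhere (in the proof of Theorem~\ref{thm:quotientmea}). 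The paper instead proves the set equality as two inclusions by establishing two order-theoretic properties of $\phi$: order preservation (giving $\phi(\history{\sigma})\subseteq\history{\phi(\sigma)}$) and a backward simulation property, $\sigma'\preceq\phi(\sigma)\implies\exists\sigma''\preceq\sigma$ with $\phi(\sigma'')=\sigma'$ (giving the reverse inclusion). Both arguments bottom out in the same observation --- each step of the stutter basis either appends one letter or leaves the value unchanged --- so the mathematical content is the same; the paper's packaging isolates two reusable facts about $\phi$ that align with the notion of history-preserving maps between prefix orders used later in the paper, whereas your version is more self-contained and makes the bookkeeping on prefix sets fully explicit. No gap.
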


\begin{definition}
  Given a path $p\in\paths X$ and its corresponding stutter basis $\phi$, then a function $\phi(\dom p)\rTo^{\stutter p} X$ is the \emph{stutter invariant} path relative to $p$ if $\stutter p \circ \phi =p$.
\end{definition}

Notice that for the function $\stutter p$ to be a path its domain should be a prefix closed subset of a word, which follows directly from Lemma~\ref{lemma:stutt-hpres}.


The notion of stutter invariant path induces an equivalence relation on the set of all paths as follows. Two paths $p,q\in\paths X$ are said to be \emph{stutter equivalent}, denoted $p\sim q$, if and only if they have the identical stutter invariant path, i.e., $\stutter p = \stutter q$. Let $\spaths X$ be the set of all the paths up to stutter equivalence. This lifts to a functor as well:
\begin{equation}\label{eq:spaths}
  \spaths f [p]_\sim =[f\circ p]_\sim \qquad \text{for any $p\in\paths X$ and $X\rTo^f Y$}.
\end{equation}
To prove that the above map is well-defined, we need the following lemma.
\begin{lemma}\label{lemma:stutt-preserve}
  For any $p\in\paths X$ and any $X\rTo^f Y$, we have $f\circ \stutter p \sim f\circ p$.
\end{lemma}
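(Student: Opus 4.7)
Unfolding the definition of $\sim$, the goal reduces to proving the identity $\stutter{(f\circ\stutter p)} = \stutter{(f\circ p)}$. Let $\phi$ denote the stutter basis of $p$ and $\psi$ the stutter basis of $f\circ\stutter p$. Since $\dom{f\circ\stutter p} = \dom{\stutter p} = \phi(\dom p)$ by Lemma~\ref{lemma:stutt-hpres}, the composite $\psi\circ\phi$ is a well-typed function from $\dom p$ to $\tact^*$, so it is a candidate stutter basis for $f\circ p$. My strategy is to show that $\psi\circ\phi$ satisfies the four defining rules of Definition~\ref{def:stutter} relative to $f\circ p$; by the uniqueness asserted in Theorem~\ref{thm:stutter-unique} it must then equal the actual stutter basis $\chi$ of $f\circ p$, and the lemma will follow.

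Rules 1 and 4 are immediate: the empty prefix is preserved by both $\phi$ and $\psi$, and an observable letter $a\in\act$ is preserved through $\phi$ and then through $\psi$ by successive applications of their rule 4. Rule 3 is also routine: if $(f\circ p)(\sigma'\tau)\ne (f\circ p)(\sigma')$ then already $p(\sigma'\tau)\ne p(\sigma')$, so $\phi$ appends a $\tau$ by its rule 3, and $\psi$ keeps that $\tau$ by its own rule 3 because $(f\circ\stutter p)(\phi(\sigma')\tau) = f(p(\sigma'\tau))\ne f(p(\sigma'))$. The subtle case---and what I expect to be the main obstacle---is rule 2: the hypothesis $(f\circ p)(\sigma'\tau) = (f\circ p)(\sigma')$ does \emph{not} imply $p(\sigma'\tau) = p(\sigma')$, which forces a case split. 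If $p(\sigma'\tau) = p(\sigma')$ then rule 2 of $\phi$ already gives $\phi(\sigma'\tau) = \phi(\sigma')$ and there is nothing left to do. Otherwise $\phi$ inserts a $\tau$ by rule 3, but since $f$ has identified the two endpoints we still have $(f\circ\stutter p)(\phi(\sigma')\tau) = (f\circ\stutter p)(\phi(\sigma'))$, so rule 2 of $\psi$ erases that $\tau$ again. Conceptually this case is the heart of the lemma: the $\tau$-self-loops newly created by $f$ are precisely what a second pass of the stutter construction sweeps away.

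Once $\chi = \psi\circ\phi$ is established, both $\stutter{(f\circ p)}$ and $\stutter{(f\circ\stutter p)}$ are paths on the common domain $\psi(\phi(\dom p))$, and the chain $\stutter{(f\circ p)}\circ\chi = f\circ p = f\circ\stutter p\circ\phi = \stutter{(f\circ\stutter p)}\circ\chi$ shows that they agree after precomposition with $\chi$. Since $\chi$ is surjective onto its image by two applications of Lemma~\ref{lemma:stutt-hpres}, the two stutter invariant paths agree pointwise on their common domain, yielding the required identity.
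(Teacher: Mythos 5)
Your proposal is correct and follows essentially the same route as the paper: both establish the key identity that the stutter basis of $f\circ p$ factors as the stutter basis of $f\circ\stutter p$ composed with that of $p$ (your $\chi=\psi\circ\phi$ is the paper's $\phi=\phi''\phi'$), via the same case analysis on $\tau$-steps, and both conclude by surjectivity of the basis onto the common domain. Framing the induction as "verify the defining rules and invoke uniqueness" rather than a direct inductive comparison of values is only a cosmetic difference.
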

\begin{theorem}\label{thm:spath-functor}
  The mapping in \eqref{eq:spaths} is well defined and $\tspaths$ is an endofunctor on $\set$.
\end{theorem}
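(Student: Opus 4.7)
The plan is to first verify well-definedness of $\spaths f$ on equivalence classes, then check the two functor axioms. The functor axioms will be essentially bookkeeping, so the main content lies in well-definedness, which reduces to a short argument using Lemma~\ref{lemma:stutt-preserve}.

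For well-definedness, I would assume $p \sim q$ in $\paths X$, meaning $\stutter p = \stutter q$ by definition of $\sim$, and aim to show $\stutter{f \circ p} = \stutter{f \circ q}$. The key move is to apply Lemma~\ref{lemma:stutt-preserve} to both $p$ and $q$: it gives $f \circ \stutter p \sim f \circ p$ and $f \circ \stutter q \sim f \circ q$, which unpacks to $\stutter{f \circ \stutter p} = \stutter{f \circ p}$ and $\stutter{f \circ \stutter q} = \stutter{f \circ q}$. Since $\stutter p = \stutter q$ as functions, their post-compositions with $f$ are literally equal, so $\stutter{f \circ \stutter p} = \stutter{f \circ \stutter q}$. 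Chaining these three equalities yields $\stutter{f \circ p} = \stutter{f \circ q}$, i.e., $f \circ p \sim f \circ q$, and hence $[f \circ p]_\sim = [f \circ q]_\sim$, so the definition in \eqref{eq:spaths} does not depend on the chosen representative.

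For functoriality, I would verify the two laws pointwise on representatives. For identities, $\spaths{\id X}[p]_\sim = [\id X \circ p]_\sim = [p]_\sim$, which is exactly $\id{\spaths X}[p]_\sim$. For composition, given $X \rTo^{f} Y \rTo^{g} Z$, one computes
\[
\spaths{g \circ f}[p]_\sim = [(g \circ f) \circ p]_\sim = [g \circ (f \circ p)]_\sim = \spaths g\, [f \circ p]_\sim = \spaths g\, \spaths f\, [p]_\sim,
\]
using associativity of function composition and the definition of $\spaths{}$ twice. Since both $\id{}$ and composition on $\set$ are thus respected, $\tspaths$ is an endofunctor on $\set$.

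The only conceptually nontrivial step is well-definedness, and even there the work has already been done by Lemma~\ref{lemma:stutt-preserve}; the proof consists of recognising that stutter equivalence is symmetric and transitive, so $p \sim q$ together with the two instances of the lemma gives the desired equality of stutter invariants through a short diagram chase. I do not expect any obstacles beyond making sure the equivalence relation is used in the correct direction.
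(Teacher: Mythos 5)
Your proposal is correct and follows essentially the same route as the paper's proof: well-definedness via two applications of Lemma~\ref{lemma:stutt-preserve} chained through the literal equality $f\circ\stutter p = f\circ\stutter q$, followed by the routine pointwise verification of the identity and composition laws. No gaps.
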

\textbf{Notation} We write $\pi_X$ for the quotient map that maps a path $p\in\paths X$ to $\qset{[p]}$.

We end this subsection with few properties on (stutter equivalent) paths. 
\begin{lemma}\label{blemma}
  Let $p\in\paths X,q\in \paths Y$, and $X \rTo^f Y$.
  \begin{enumerate}
    \item\label{blemma:stutt-pres-startstate}If $fp\sim q$ and $\trace q \in \tau^*a\tau^*$ then $q(\eseq{})= fp(\eseq{})\land \trace p \in \tau^*a \tau^*$.
    \item\label{blemma:last-stutter}  $\last p = \last {\stutter p}$.
    \item\label{blemma:stutt-pres-laststate} If $fp \sim q$, then $f(\last p)=\last q$.
    \item\label{blemma:pi-pres} $\pi_Y \circ \paths f = \spaths f \circ \pi_X$.
    \item\label{blemma:inverse} Let $V_i\subseteq \spaths Y$ (for $i\in I$) be a family of pairwise disjoint sets. Then,
  \[\inv{\spaths f} \Big(\bigcup_{i\in I} V_i\Big) \ = \ \bigcup_{i\in I} \inv {\spaths f} (V_i)\enspace.\]
  \end{enumerate}
\end{lemma}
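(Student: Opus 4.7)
The plan is to deal with the five clauses in a natural cascade, starting with the bookkeeping ones. For clause \ref{blemma:last-stutter}, combining $\stutter p \circ \phi = p$ with Lemma \ref{lemma:stutt-hpres} applied at $\sigma := \trace p$ yields $\trace{\stutter p} = \phi(\trace p)$; evaluating $\stutter p$ there gives $\last{\stutter p} = (\stutter p \circ \phi)(\trace p) = p(\trace p) = \last p$. Clause \ref{blemma:pi-pres} follows by applying each side to an arbitrary $p \in \paths X$: both reduce to $[f \circ p]_\sim$ straight from the definitions of $\pi_X$ and \eqref{eq:spaths}. Clause \ref{blemma:stutt-pres-laststate} is then a short chain: since $fp \sim q$ means $\stutter{fp} = \stutter q$, two invocations of clause \ref{blemma:last-stutter} together with Proposition \ref{prop:trace-last-pres} deliver $\last q = \last{\stutter q} = \last{\stutter{fp}} = \last{fp} = f(\last p)$.

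Clause \ref{blemma:stutt-pres-startstate} carries the real content and I expect the trace half to be the principal obstacle. The equation $q(\eseq{}) = fp(\eseq{})$ is the easier half: every stutter basis $\phi$ satisfies $\phi(\eseq{}) = \eseq{}$, so $\stutter r(\eseq{}) = r(\eseq{})$ for any path $r$; invoking this for $r = q$ and $r = fp$ and using $\stutter q = \stutter{fp}$ closes the claim. For the trace condition, the central combinatorial observation --- a straightforward induction over the four rules of Definition \ref{def:stutter} --- is that $\phi$ only deletes $\tau$-letters and leaves $\act$-letters untouched, so $\sigma$ and $\phi(\sigma)$ have the same $\act$-subsequence. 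Given $\trace q \in \tau^* a \tau^*$, applying this observation to the stutter basis $\phi_q$ of $q$ shows that $\phi_q(\trace q) = \trace{\stutter q}$ has shape $\tau^{n'} a \tau^{m'}$. Since $\stutter q = \stutter{fp}$, the same shape applies to $\phi_{fp}(\trace{fp}) = \trace{\stutter{fp}}$; running the observation backwards through $\phi_{fp}$, $\trace{fp}$ itself must contain exactly one $\act$-letter, namely $a$, i.e.\ $\trace{fp} \in \tau^* a \tau^*$. Proposition \ref{prop:trace-last-pres} then yields $\trace p = \trace{fp} \in \tau^* a \tau^*$.

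Clause \ref{blemma:inverse} is purely set-theoretic: preimages commute with arbitrary unions, regardless of disjointness. The pairwise disjointness hypothesis is presumably included to record the auxiliary fact that the right-hand side is itself a disjoint union (since preimages preserve disjointness), which is likely the form needed downstream in the measurable-paths development of Section \ref{sec:prob}.
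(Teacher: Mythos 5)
Your proof is correct and follows essentially the same route as the paper's: clause~\ref{blemma:last-stutter} via Lemma~\ref{lemma:stutt-hpres}, clause~\ref{blemma:stutt-pres-laststate} by chaining clause~\ref{blemma:last-stutter} with Proposition~\ref{prop:trace-last-pres}, and clause~\ref{blemma:stutt-pres-startstate} by passing through the common stutter invariant $\stutter q = \stutter{(fp)}$ together with the fact that a stutter basis only erases $\tau$-letters (your explicit ``backwards'' step through $\phi_{fp}$ is in fact slightly more careful than the paper, which leaves that direction implicit). Your remark on clause~\ref{blemma:inverse} --- that preimages commute with arbitrary unions without any disjointness hypothesis, the hypothesis serving only to make the right-hand side a disjoint union for later sigma-additivity arguments --- is accurate and marginally cleaner than the paper's element-chase, which invokes disjointness merely to obtain uniqueness of the witnessing index.
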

\subsection*{Coalgebraic preliminaries}

\begin{definition}
  Let $\mathbf C$ be a category and let $\mathbf C \rTo^F \mathbf C$ be an endofunctor. An \emph{$F$-coalgebra} over the base category $\mathbf C$ is a tuple $(X,\alpha)$, where $X$ is an object in $\mathbf C$ and $X \rTo^\alpha FX$ is an arrow in $\mathbf C$.
  Given two objects $(X,\alpha)$ and $(Y,\beta)$, an $F$-coalgebra homomorphism is an arrow $X\rTo^f Y$ in $\mathbf C$ such that $Ff\circ \alpha=\beta\circ f$.
\end{definition}

\begin{definition}
  Let $\mathbf C$ be a concrete category over the category of sets $\set$, i.e., there is a faithful functor $\mathbf C \rTo^{|\_|} \set$. Let $(X,\alpha)$ be an $F$-coalgebra over the concrete category $\mathbf C$. Then, two points $x,x'\in |X|$ are said to be \emph{$F$-behaviourally equivalent} if and only if there is an $F$-coalgebra $(Y,\beta)$ and an $F$-coalgebra homomorphism $X \rTo^f Y$ such that $f(x)=f(x')$.
\end{definition}

In Section~\ref{sec:lts}, we will let $\mathbf C=\set$ and $|\_|=\id{}$; however, in Section~\ref{sec:prob}, we will let $\mathbf C=\meas$ and the faithful functor $|\_|$ to be the forgetful functor which forgets the sigma algebras associated with the underlying sets.

\section{Branching bisimulation on labelled transition systems}\label{sec:lts}
The goal is to characterise branching bisimulation of \citeauthor{bbisim:1996} \cite{bbisim:1996} using a coalgebraic approach based on paths as outlined in the introduction.

\begin{definition}
  A \emph{labelled transition system} is a triple $(X,\tact,\rightarrow)$, where $X$ is a set of states, $\tact$ a set of actions, and $\rightarrow \subseteq X \times \tact \times X$ is the so-called \emph{transition relation}.
\end{definition}
As usual, we write $x\step a x'$ and $\steps {}\subseteq X \times \act^* \times X$ to denote an element $(x,a,x')\in\rightarrow$ and the weak reachability relation, respectively. The latter is defined as the smallest relation satisfying the following inference rules: $\frac{ }{x \steps{\eseq{} } x} \qquad \frac{x \steps \sigma x' \ x' \step a x''}{x \steps{\sigma a} x''}.$

\begin{definition}
  Let $(X,\tact,\rightarrow)$ be a labelled transition system. A symmetric relation $R \subseteq X \times X$ is called a \emph{branching bisimulation} relation \cite{bbisim:1996} if and only if for any $x,y,x'\in X$ and $a\in\tact$,
  if $x \step a x' \land x R y$ then
  $
  (x'R y \land a=\tau) \vee
  \exists_{y',y''}\ y \steps {\eseq{}} y' \step a y'' \land x R y' \land x' R y''.
  $
  Two states $x\in X$ and $x'\in X$ are \emph{branching bisimlar} if and only if there exists a branching bisimulation relation $R$ such that $x R x'$.
\end{definition}

Next, we construct a $\mcal P\circ\tspaths$-coalgebra based on paths, where $\mcal P$ is the covariant power set endofunctor on the category of sets $\set$.

An \emph{execution} starting from a state $x\in X$ of a labelled transition system $(X,\tact,\rightarrow)$ is a path $p\in\paths X$ such that $p(\sigma) \step a p(\sigma a)$, for all $\sigma a\in\dom p$. Let $\exec x$ be the set of all executions starting from $x$. Such a transition system can be modelled as a coalgebra $(X,\pi_X\circ \alpha)$, where transition function $\alpha$ is given as:
\[\alpha(x)=
    \{p \mid p\in\exec x \land \trace{p}\in\tau^*a\} \cup
    \{p \mid p\in\exec x \land \trace p\in\tau^*\}.
\]

\begin{remark}
  At this stage, we would like to highlight the distinction between a path and an execution made in this paper. It should be noted that all executions of a system (under investigation) are paths; however, the converse may not be true. This is not unusual because after all the executions of a system are generated on the basis of how behaviour of the system is specified (for instance, by the transition relation in the case of labelled transition systems and by the transition function in the case of fully probabilistic system).
\end{remark}

Next, we state the main result of this section.
\begin{theorem}\label{thm:bisim}
  Let $(X,\tact,\rightarrow)$ be a labelled transition system and $(X,\pi_X\circ \alpha)$ be the corresponding $\mcal P\circ \tspaths$-coalgebra. Then, two states $x,x'\in X$ are branching bisimilar if and only if the states $x,x'$ are $\mcal P\circ \tspaths$-behaviourally equivalent.
\end{theorem}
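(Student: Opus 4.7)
The plan is to prove the two implications independently by choosing canonical coalgebraic witnesses. For the ``only if'' direction, I let $R$ denote the largest branching bisimulation (an equivalence), take $f \colon X \to X/R$ to be the quotient, and equip $X/R$ with the coalgebra structure $\beta([x]) := \mcal P \tspaths(f)(\pi_X(\alpha(x)))$. Once $\beta$ is shown to be well-defined on $R$-classes, $f$ is automatically a $\mcal P \circ \tspaths$-coalgebra homomorphism satisfying $f(x)=f(x')$. For the ``if'' direction, I take $f$ to be the given homomorphism witnessing behavioural equivalence, set $R := \{(y,y') \mid f(y) = f(y')\}$, and verify that $R$ satisfies the branching bisimulation transfer clause.

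For the forward direction, well-definedness of $\beta$ reduces to: whenever $x R x'$ and $p \in \alpha(x)$, there exists $q \in \alpha(x')$ with $f p \sim f q$. I plan to argue this by induction on the length of $p$; the base case $p=\langle x\rangle$ uses $q=\langle x'\rangle$. Since every $p \in \alpha(x)$ has $\trace p \in \tau^* \act \cup \tau^*$, its initial segment $p'$ (obtained by dropping the last step) still lies in $\alpha(x)$, so the induction hypothesis produces $q' \in \alpha(x')$ with $fp' \sim fq'$. By Lemma~\ref{blemma}(\ref{blemma:last-stutter},\ref{blemma:stutt-pres-laststate}) combined with Proposition~\ref{prop:trace-last-pres}, it follows that $f(\last{p'}) = f(\last{q'})$, i.e.~$\last{p'} \mathrel{R} \last{q'}$. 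Invoking the branching bisimulation transfer on the final step $\last{p'} \step{c_n} \last{p}$ against $\last{p'} \mathrel{R} \last{q'}$, I obtain either an inert $\tau$-step (take $q := q'$; the appended $\tau$-self-loop in $X/R$ is killed by $\sim$) or a matching trajectory $\last{q'} \steps{\eseq{}} y' \step{c_n} y''$ (set $q := q'$ concatenated with this trajectory, and check membership in $\alpha(x')$ via the trace constraint). The crucial use of the classical stuttering lemma for branching bisimulation occurs here: the intermediate states on the $\tau$-prefix of the matching trajectory all lie in the same $R$-class as $\last{q'}$, so their $f$-images form a self-loop chain which vanishes under the stutter quotient.

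For the backward direction, given $x \step{a} x_1$ and $x R x'$, the singleton-step path $p = \langle x, a, x_1\rangle$ lies in $\alpha(x)$; by the homomorphism square $\mcal P \tspaths(f) \circ \pi_X \circ \alpha = \beta \circ f$ and $f(x)=f(x')$, the class $[fp]_\sim$ lies in $\mcal P \tspaths(f)(\pi_X(\alpha(x')))$, producing $q \in \alpha(x')$ with $fq \sim fp$. A direct computation with the stutter basis (Definition~\ref{def:stutter}) then forces $q = \langle x', \tau, x'_1, \ldots, \tau, x'_k, a, x'_{k+1}\rangle$ with $f(x'_i) = f(x)$ for $i \leq k$ and $f(x'_{k+1}) = f(x_1)$ in the case $a \in \act$, so $x'_k$ and $x'_{k+1}$ are precisely the witnesses required by the transfer clause. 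The case $a = \tau$ either reduces to the inert subcase (when $f(x_1)=f(x)$, giving $x_1 \mathrel{R} x'$) or is handled analogously to $a \in \act$. The principal technical obstacle is the inductive step in the forward direction, where the stuttering property of branching bisimulation is indispensable: without it, the $f$-image of the witness $\tau$-chain need not collapse under $\sim$, and well-definedness of $\beta$ breaks down. I would either cite the classical van~Glabbeek--Weijland stuttering lemma explicitly, or equivalently pass to the semantic formulation of branching bisimulation, which coincides with the relational one on equivalence relations.
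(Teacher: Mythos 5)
Your proposal is correct and takes essentially the same route as the paper: the same quotient coalgebra $\beta$ on $X/R$ whose well-definedness is proved by induction along the path using the branching-bisimulation transfer property together with the classical stuttering lemma, and the same converse argument via the kernel relation of $f$, the one-step path $\langle x\ a\ x_1\rangle$, and a stutter-basis analysis of the matching execution. The only (inessential) difference is that you induct on the length of $p$ where the paper performs structural induction on the words $\sigma\in\dom p$.
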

\begin{proof}
  \fbox{$\Rightarrow$} Let $R\subseteq X \times X$ be the largest branching bisimulation on the given labelled transition system. Then, from \cite{bbisim:1996} we know that $R$ is an equivalence relation. So let $X \rTo^f X/R$ be the quotient map. Now to show that $f$ is indeed the required $\mcal P\circ \tspaths$-coalgebra homomorphism, we first construct a coalgebra $X/R\rTo^\beta \mcal P\spaths{X/R}$: 
  \[\beta(f(x))=\{\spaths f(p)\mid p\in \alpha(x)\}, \qquad \text{for all}\ x\in X\enspace.\]
  Clearly, $\beta$ is a total function because $f$ is surjective. Next, we claim that $\beta$ is well-defined, i.e., independent of the chosen representative. Let $x,x'\in X$ such that $f(x)=f(x')$. Then, we need to show that $\beta(f(x))=\beta(f(x'))$. Suppose $\qset {[fp]}\in\beta(f(x))$ with $p\in\alpha(x)$. Then, by structural induction on the word $\sigma\in\dom p$ we show that there is a path $p'\in\alpha(x')$ such that $f\circ(p|_{\sigma})\sim f\circ p'$. Here, we write $p|_\sigma$ to denote the restriction of the function $p$ to the sub-domain $\history\sigma$. To see this, without loss of generality, let $\sigma a\in\dom p$. Then by the induction hypothesis we find an execution $p'\in\alpha(x')$ such that $f\circ(p|_\sigma)\sim f\circ p'$. Note that $p(\sigma) \step a p(\sigma a)$ and using Lemma~\ref{blemma}\eqref{blemma:stutt-pres-laststate} we get $f\circ(p|_\sigma)\sim f\circ p' \implies p(\sigma)\ R\ \last {p'}$. Let $a\in \act$. Then, using the transfer property of branching bisimulation we get $\last{p'} \steps {\eseq{}} y \step {a} y'$ such that $p(\sigma) R y$ and $p(\sigma a) R y'$ since $p(\sigma)$ and $\last {p'}$ are branching bisimilar. Moreover, from the stuttering lemma \cite{bbisim:1996} we know that any intermediate state visited in the path $\last{p'} \steps {\eseq{}} y$ is also $R$-related to $p(\sigma)$. Therefore, there is a path $p''\in\paths X$ which extends $p'$ such that $f\circ(p|_{\sigma a})\sim f\circ p''$. In addition, if $a=\tau$ then we either have $p(\sigma \tau)\ R\ \last {p'}$ or $\last {p'} \steps {\eseq{}} y \step \tau y'$, for some $y,y'$, with $p(\sigma) R y$ and $p(\sigma \tau )Ry'$. Suppose the former is true, then clearly we have $f\circ(p|_{\sigma \tau})\sim f\circ p'$. The latter case is similar to the case when $a\in\act$. Thus, for every $p\in\alpha (x)$ there is a path $p'\in\alpha(x')$ such that $f\circ p\sim f\circ p'$. Likewise, we can show the symmetric property when the role of $x$ and $x'$ is interchanged. This completes the proof of the above claim. Clearly, we have $\beta\circ f =\mcal P\spaths f \circ \alpha$.

  \fbox{$\Leftarrow$} Let $(Y,\beta)$ be a $\mcal P\circ \tspaths$-coalgebra and $X \rTo^f Y$ be a $\mcal P\circ \tspaths$-coalgebra homomorphism. 
  Below we rather illustrate why the relation $ x R x' \iff f(x)=f(x')$ is a witnessing branching bisimulation. The proof can be found in Theorem~\ref{athm:bisim}.
\end{proof}

Consider the two labelled transition systems drawn below enclosed inside the two rectangles. Here, $X \rTo^\alpha \mcal P\spaths X$ and $Y \rTo^\beta \mcal P\spaths Y$ denote the corresponding path-based
\begin{center}
\begin{tikzpicture}
\node[draw=black] (A){
  \begin{tikzpicture}
    \node (x1) {$x_1$};
    \node (a) at ($(x1.center)+(-1.2,0.2)$) {$\alpha:$};
    \node (x2) at ($(x1.center)+(-0.5,-1)$) {$x_2$};
    \node (x3) at ($(x1.center)+(0.5,-1)$) {$x_3$};
    \node (x4) at ($(x2.center)+(-0.5,-1)$) {$x_4$};
    \node (x5) at ($(x2.center)+(0.5,-1)$) {$x_5$};
    \node (x1p) at ($(x1.center)+(2,-0.5)$) {$x_1'$};
    \node (x2p) at ($(x1p.center)+(-0.5,-1)$) {$x_2'$};
    \node (x3p) at ($(x1p.center)+(0.5,-1)$) {$x_3'$};
    \path[->]
        (x1) edge node[above left]{$\tau$} (x2)
        (x1) edge node[above right]{$a$} (x3)
        (x2) edge node[above left]{$a$} (x4)
        (x2) edge node[above right]{$b$} (x5)
        (x1p) edge node[above left]{$a$} (x2p)
        (x1p) edge node[above right]{$b$} (x3p);
  \end{tikzpicture}};
\node[draw=black] (B) at ($(A.center)+(5,0)$) {
  \begin{tikzpicture}
    \node (y1) {$y_1$};
    \node (a) at ($(x1.center)+(-1.2,0.2)$) {$\beta:$};
    \node (y2) at ($(y1.center)+(0,-1)$) {$y_2$};
    \path[->]
        (y1) edge[bend left] node[right]{$b$} (y2)
        (y1) edge[bend right] node[left]{$a$} (y2);
  \end{tikzpicture}};
\end{tikzpicture}
\end{center}
coalgebras with $X=\{x_i,x_j'\mid i\in\{1,2,3,4,5\},j\in\{1,2,3\}\}$ and $Y=\{y_1,y_2\}$. Furthermore, let $X \rTo^f Y$ be a function defined as $f(x)=y_1$ if $x\in\{x_1,x_1',x_2\}$; otherwise $f(x)=y_2$. To illustrate why $R$ (as defined above) is a witnessing branching bisimulation, consider the transition $x_1' \step b x_3'$ and $x_1 R x_1'$. Clearly, $\langle x_1' \ b \ x_3' \rangle \in \alpha(x_1)$, which further implies that $\langle y_1 \ b \ y_2 \rangle  \in \beta(y_1)$. Since $\mcal P\spaths f\circ \alpha=\beta \circ f$ we know that there is an execution $p$ such that $f\circ p$ is stutter equivalent to  $\langle y_1 \ b \ y_2 \rangle$. And by inspection we note that $p=\langle x_1 \ \tau \ x_2 \ b \ x_5\rangle$ is such an execution. Moreover, $x_2 R y_1$ and $x_5 R y_2$ which is required by the transfer property of a branching bisimulation relation.

In hindsight, using the terminology of \cite{vanGlabbeek:2001:btime}, a $\mcal P\tspaths$-coalgebra homomorphism preserves the branching structure of states. As a consequence, two behaviourally equivalent states have the same set of executions under the image of a $\mcal P\tspaths$-coalgebra homomorphism up to stutter invariance. For instance, in the above example, the sets of all executions having trace $\tau^*a$ from the states $x_1$ and $x_1'$ are $\{\langle x_1 \ a \ x_3 \rangle, \langle x_1 \ \tau \ x_2 \ a \ x_4 \rangle \}$ and $\{\langle x_1' \ a \ x_2'\rangle\}$, respectively. Notice that the $f$-image of these two sets are equivalent up to stutter invariance. A similar argument can be observed for the set of executions from $x_1,x_1'$ having trace $\tau^*b$. 

Though we have focussed on branching bisimulation, this approach can also be used to capture weak, $\eta$ and delay bisimulation, by defining $\alpha$ differently, saturating $\tau$ leading transitions, trailing $\tau$ transitions or both, respectively. This is made explicit in Appendix~\ref{sec:weak-eta-del}.

\section{A measurable space on paths}\label{sec:pmeas-paths}
As mentioned in the introduction, we will consider coalgebras of type $\mcal G \circ \tspaths$ over the base category $\meas$ to characterise probabilistic delay bisimulation. However, before we do so, we have to fix which subsets of the sets $\paths X$ and $\spaths X$ are measurable together with the construction of a measure on the space of paths, which can be a challenging issue in its own right. In this section, we resolve these fundamental issues by first recalling some basic definitions of measure theory taken from \cite{Panangaden:2009:LMP}.
\begin{definition}
  A set $\Sigma_X\subseteq \mcal P(X)$ of subsets of $X$ is a \emph{sigma-algebra} on $X$ if and only if $X\in \Sigma_X$ and $\Sigma_X$ is closed under the set complements and countable unions. Then, the tuple $(X,\Sigma_X)$ is called a \emph{measurable space}.
  A \emph{measure} space is a measurable space $(X,\Sigma_X)$ with a measure $\Sigma_X \rTo^{\mu_X} \left[0,\infty\right]$, i.e., $\mu_X$ is a function satisfying $\mu(\emptyset)=0$ and the \emph{sigma-additivity} property: for any countable family of pairwise disjoint sets $U_i\in \Sigma_X$ (for $i\in I$) we have
  \[\mu_X(\bigcup_{i\in I} U_i)=\sum_{i\in I} \mu_X(U_i)\enspace.\]
  A \emph{probability} space $(X,\Sigma_X,\mu_X)$  is a measure space with $\mu_X(X)=1$.
  A \emph{discrete} space is a measure space such that $X$ is countable and $\Sigma_X=\mcal P(X)$.
\end{definition}

Here, the arbitrary sum of a family $\{r_i\mid i\in I\}$ of nonnegative real numbers is defined as $\sum_{i\in I} r_i =\sup \{\sum_{i\in J} r_i \mid J\fsubseteq I \}$ (cf.  \cite{sokolova:2009:sacs}), where $J \fsubseteq I \iff J \subseteq I \land  \text{$J$ is a finite set}$.

We want to endow a notion of measurability on the set $\spaths X$; however, for simplicity we first restrict ourselves to the set of all paths on $X$, i.e., $\paths X$.
It turns out that the set of all paths carries a topological structure (precisely, they form what is known as Alexandroff topology \cite{alexandroff:1937}) and also satisfies the so-called Kolmogorov separability axiom. Once we have a topological space, the convention is to consider the smallest sigma-algebra generated by the set of all open sets (also known as the Borel sigma-algebra) as the set of measurable sets.

\begin{definition}
  A \emph{topology} on a set $X$ consists of a set of open sets $\mcal O_X \subseteq \mcal P(X)$ such that: first, the empty set and the whole space are in $\mcal O_X$; second, the set $\mcal O_X$ is closed under finite intersection and arbitrary unions.
  A topological space $(X,\mcal O_X)$ is an \emph{Alexandroff} space if the set $\mcal O_X$ is closed under arbitrary intersection. A topological space $(X,\mcal O_X)$ satisfies the \emph{Kolmogorov separability axiom} ($X$ is a $T_0$ space) if any two distinct points are topologically distinguishable, i.e.,
  $\forall_{x,x'\in X}\ x\neq x' \implies \exists_{U\in\mcal O_X}\ (x\in U \land x'\not\in U) \vee (x\not\in U \land x'\in U).$
\end{definition}

It is well-known that the set of all upward closed subsets generated by a poset forms a $T_0$ Alexandroff space. In particular, our set of paths $\paths X$ carries the following order:
\[p\preceq q \iff \dom p \subseteq \dom q \land \forall_{\sigma\in\dom p}\ q(\sigma)=p(\sigma)\enspace.\]
Actually, the above ordering is a prefix order in the sense of \citeauthor{Cuijpers:2013:DCM} \cite{Cuijpers:2013:DCM}.
\begin{definition}
  A \emph{prefix order} is a partial order whose every principal ideal is a totally ordered set.
\end{definition}
\begin{proposition}
  The history of a path $p\in\paths X$ is downward total, i.e., the set $\history{p}=\{p' \mid p'\preceq p\}$ is a totally ordered set.
\end{proposition}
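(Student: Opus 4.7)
The plan is to reduce the totality of $\history p$ to the well-known fact that the set of prefixes of a single word in $\tact^*$ is totally ordered by $\preceq$. Concretely, by Definition~\ref{def:path}, the domain of every path is of the form $\history\sigma$ for some $\sigma\in\tact^*$, so one can parametrise paths below $p$ by words in $\dom p$.

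First I would take two arbitrary paths $p',p''\in\history p$ and write $\dom{p'}=\history{\sigma'}$ and $\dom{p''}=\history{\sigma''}$. From $p'\preceq p$ and $p''\preceq p$ we get $\history{\sigma'}\subseteq\dom p$ and $\history{\sigma''}\subseteq\dom p$, so in particular both $\sigma'$ and $\sigma''$ lie in $\dom p=\history{\sigma}$, where $\sigma$ is a word witnessing that $\dom p$ is the set of prefixes of a word. Hence $\sigma'$ and $\sigma''$ are both prefixes of the common word $\sigma$, and the standard observation that any two prefixes of the same word are $\preceq$-comparable yields, without loss of generality, $\sigma'\preceq\sigma''$, and therefore $\dom{p'}=\history{\sigma'}\subseteq\history{\sigma''}=\dom{p''}$.

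It then remains to check the functional agreement clause in the definition of $\preceq$ on paths. For every $\rho\in\dom{p'}$ we have $\rho\in\dom{p''}\subseteq\dom p$, and both $p'\preceq p$ and $p''\preceq p$ give $p'(\rho)=p(\rho)=p''(\rho)$. Thus $p'\preceq p''$, which is exactly what is needed to conclude that $\history p$ is totally ordered.

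I do not expect any real obstacle here: the statement is essentially a transport of the totality of prefixes of a single word along the definition of $\preceq$ on paths. The only point that requires a brief justification is the assertion that two prefixes of the same word in $\tact^*$ are always comparable, but this is immediate by comparing their lengths and reading off the initial segments of $\sigma$.
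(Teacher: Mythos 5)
Your proof is correct: reducing comparability of $p',p''\in\history p$ to the comparability of the words $\sigma',\sigma''$ as prefixes of the single word $\sigma$ with $\dom p=\history\sigma$, and then using that both $p'$ and $p''$ agree with $p$ on their domains to get the functional clause of $\preceq$, is exactly the standard argument. The paper itself states this proposition without proof (treating it as immediate from Definition~\ref{def:path} and the definition of $\preceq$ on $\paths X$), so there is nothing to contrast with; your write-up simply makes the intended reasoning explicit.
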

\begin{proposition}
  The set of all paths $\paths X$ on a set $X$ forms a $T_0$ Alexandroff space, whose open sets are upward closed subsets of $\paths X$, i.e., $\mcal O_{\paths X}=\{U \subseteq \paths X \mid U=\future U \}$. Here, the set $\future U=\{p' \mid \exists\ p\in U \land p\preceq p'\}$ denotes future of paths in the set $U$.
\end{proposition}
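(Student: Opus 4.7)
The plan is to verify the three conditions -- topology axioms, closure under arbitrary intersection, and the $T_0$ separability axiom -- for the collection $\mcal O_{\paths X}=\{U\subseteq\paths X\mid U=\future U\}$. Since the statement is the standard fact that every poset induces a $T_0$ Alexandroff topology via its upper sets, the work is mostly book-keeping; the only point that requires a moment of care is the $T_0$ case analysis, because paths under $\preceq$ form a prefix order (not a linear one), so incomparable paths must be treated separately.

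First, I would verify that $\mcal O_{\paths X}$ is a topology. The empty set and $\paths X$ are trivially upward closed, so both lie in $\mcal O_{\paths X}$. For closure under arbitrary unions, let $\{U_i\}_{i\in I}\subseteq \mcal O_{\paths X}$ and suppose $p\in\bigcup_{i\in I}U_i$ with $p\preceq q$. Then $p\in U_j$ for some $j$, and upward closure of $U_j$ yields $q\in U_j\subseteq\bigcup_{i\in I}U_i$. For closure under finite (indeed arbitrary) intersections, if $p\in\bigcap_{i\in I}U_i$ and $p\preceq q$ then $p\in U_i$ for every $i$, hence $q\in U_i$ for every $i$, so $q\in\bigcap_{i\in I}U_i$. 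This simultaneously yields the Alexandroff property, so only the $T_0$ axiom remains.

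For $T_0$, take two distinct paths $p,q\in\paths X$. Since $\preceq$ is a partial order, $\{p\npreceq q,\ q\npreceq p\}$ covers the three relevant cases up to symmetry: either (a) $p$ and $q$ are $\preceq$-incomparable, or (b) one strictly precedes the other, say $p\prec q$. In case (a) the principal upper set $\future{\{p\}}$ is open by construction, contains $p$, and excludes $q$ because $p\npreceq q$; by symmetry $\future{\{q\}}$ separates in the other direction. In case (b) the open set $\future{\{q\}}$ contains $q$ but cannot contain $p$: if it did, we would have $q\preceq p$, and together with $p\preceq q$ antisymmetry would force $p=q$, contradicting $p\neq q$. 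Hence in every case we exhibit an open set distinguishing $p$ from $q$.

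The only mildly non-routine ingredient is the assertion that $\preceq$ on $\paths X$ really is a partial order (so that antisymmetry is available in the $T_0$ argument); this is immediate from the definition $p\preceq q\iff \dom p\subseteq\dom q\land q|_{\dom p}=p$, since $p\preceq q$ and $q\preceq p$ force $\dom p=\dom q$ and then pointwise agreement forces $p=q$. No step here looks like a genuine obstacle; the proof amounts to three short paragraphs of elementary set-theoretic verification, and I would expect it to be easily within the scope of an appendix lemma.
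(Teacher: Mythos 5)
Your proof is correct and is exactly the standard poset-to-upper-set-topology argument that the paper itself relies on (it offers no explicit proof, merely remarking that upward closed subsets of any poset form a $T_0$ Alexandroff space and noting that $\preceq$ is a partial order on $\paths X$). The only cosmetic remark is that your $T_0$ case split is slightly more elaborate than needed -- for distinct $p,q$ antisymmetry already gives $p\npreceq q$ or $q\npreceq p$, and the corresponding principal upper set separates them in one stroke -- and that openness of $\future{\{p\}}$ tacitly uses transitivity of $\preceq$, which is immediate from the definition.
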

At this stage, we note the following relationship between stutter paths and the order $\preceq$. 
\begin{lemma}\label{lemma:stutt-hreflect}
  Let $X$ be a set. Then we have the following property: for any two paths $p_1,p_2\in\paths X$, if $\stutter p_1 \preceq \stutter p_2$ then $\exists_{p\in\paths X}\ p \sim p_1 \land p\preceq p_2$.
\end{lemma}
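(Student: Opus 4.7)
The plan is to construct the witness $p$ as a prefix of $p_2$ by selecting an appropriate endpoint inside $\dom{p_2}$. Let $\phi_1,\phi_2$ be the stutter bases of $p_1,p_2$, and set $\sigma_1=\trace{\stutter{p_1}}$. From $\stutter{p_1}\preceq\stutter{p_2}$ we obtain $\sigma_1\in\dom{\stutter{p_2}}=\phi_2(\dom{p_2})$, so the preimage $\inv{\phi_2}(\sigma_1)\subseteq\dom{p_2}$ is non-empty. Since $\dom{p_2}$ is a finite chain under $\preceq$ (it is the history of a single word in $\tact^*$), this preimage is itself a finite chain and therefore admits a maximum, which I call $\sigma_2$. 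I then define $p=p_2|_{\history{\sigma_2}}$; by construction $p\preceq p_2$.

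The remaining task is to verify $p\sim p_1$, that is, $\stutter{p}=\stutter{p_1}$. The first step is to argue that the stutter basis of $p$ is precisely $\phi_2|_{\history{\sigma_2}}$. This follows by a short induction on $\sigma\in\dom{p}$ since each clause of Definition~\ref{def:stutter} is local: it mentions only $\phi(\sigma')$ and the values $p(\sigma'),p(\sigma' a)$ for a one-letter extension, all of which coincide with the corresponding $\phi_2$- and $p_2$-data on $\history{\sigma_2}$. Invoking Lemma~\ref{lemma:stutt-hpres} with this stutter basis then yields $\dom{\stutter p}=\phi_2(\history{\sigma_2})=\history{\phi_2(\sigma_2)}=\history{\sigma_1}=\dom{\stutter{p_1}}$.

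To match values, I take an arbitrary $\sigma\in\history{\sigma_1}$ and pick a witness $\sigma'\in\history{\sigma_2}$ with $\phi_2(\sigma')=\sigma$; the defining identity $\stutter p\circ(\phi_2|_{\history{\sigma_2}})=p_2|_{\history{\sigma_2}}$ gives $\stutter p(\sigma)=p_2(\sigma')$, which in turn equals $\stutter{p_2}(\sigma)$ by the analogous identity for $\stutter{p_2}$, and finally $\stutter{p_2}(\sigma)=\stutter{p_1}(\sigma)$ by the hypothesis $\stutter{p_1}\preceq\stutter{p_2}$. Combining these identities with the equality of domains above yields $\stutter p=\stutter{p_1}$, completing the proof. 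The only mildly delicate point I anticipate is confirming that $\inv{\phi_2}(\sigma_1)$ really is a chain with a maximum; this reduces to a weak monotonicity property of the stutter basis with respect to $\preceq$, which can be read off the four inductive clauses of Definition~\ref{def:stutter} essentially for free.
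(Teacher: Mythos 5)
Your proof is correct and follows essentially the same route as the paper's: both take a $\phi_2$-preimage of $\trace{\stutter{p_1}}$, restrict $p_2$ to the history of that preimage, and check that the restriction of $\phi_2$ serves as the stutter basis of the restricted path. The only cosmetic difference is that you take the \emph{maximum} of the preimage while the paper notes any preimage works, and you spell out the domain/value verification that the paper leaves as ``clear''.
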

Every point $x\in X$ in an Alexandroff space has a special neighbourhood associated with it, often called the \emph{smallest neighbourhood} of $x$, denoted $\mcal N(x)=\bigcap \{U\mid U\in\mcal O_X\land x\in U\}$. In particular, this structure, in the case of paths, is the principal filter generated by a path. 
\begin{proposition}
  For a path $p\in\paths X$, the smallest neighbourhood of $p$ represents the future of the path $p$, i.e., $\future p=\mcal N(p)$. In contrast, the closure $\closure{p}$ of a path $p\in\paths X$ -- the smallest closed set that contains $p$ -- represents the history of $p$, i.e., $\closure p=\history p$.
\end{proposition}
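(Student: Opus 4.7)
The plan is to unfold both equalities using the characterisation of open sets in $\paths X$ as the upward-closed subsets with respect to $\preceq$, and then exploit the fact that $\future p$ is the principal filter generated by $p$ while $\history p$ is the principal ideal generated by $p$. Both equalities are essentially two-sided containments, each direction relying on a minimality argument.

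For the first equality $\future p = \mcal N(p)$, I would first observe that $\future p$ is itself open, since by definition $\future U = U$ for $U = \future p$ (one uses transitivity of $\preceq$ to see that $\future{(\future p)} = \future p$). Because $p \preceq p$, the set $\future p$ also contains $p$, so it is one of the open neighbourhoods whose intersection defines $\mcal N(p)$; hence $\mcal N(p) \subseteq \future p$. For the converse, I would take any open $U$ containing $p$; since $U$ is upward closed with respect to $\preceq$ and $p \in U$, every $p' \succeq p$ lies in $U$, so $\future p \subseteq U$. Intersecting over all such $U$ yields $\future p \subseteq \mcal N(p)$.

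For the second equality $\closure p = \history p$, the strategy is dual. The closed sets of $\paths X$ are exactly the complements of upward-closed sets, i.e., the downward-closed subsets of $(\paths X, \preceq)$. I would first check that $\history p$ is downward closed: if $p'' \preceq p' \preceq p$, then $p'' \preceq p$ by transitivity, so $p'' \in \history p$. Hence $\history p$ is closed, and it contains $p$, giving $\closure p \subseteq \history p$. Conversely, any closed set $C$ containing $p$ is downward closed, so $\history p \subseteq C$; intersecting over all such $C$ yields $\history p \subseteq \closure p$.

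There is no substantial obstacle here; the argument is almost entirely bookkeeping once one has identified open sets with upward-closed subsets and closed sets with downward-closed subsets under the prefix order. The only small subtlety worth spelling out is verifying that $\future p$ really is open (equivalently, that $\future{(\future p)} = \future p$), and symmetrically that $\history p$ really is closed (equivalently, downward-closed), both of which follow directly from transitivity of $\preceq$.
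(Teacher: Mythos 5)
Your proof is correct and is exactly the standard argument the paper implicitly relies on; the paper itself states this proposition without proof, treating it as a well-known consequence of the correspondence between posets and $T_0$ Alexandroff spaces (open sets are the up-sets, closed sets the down-sets). Your two-sided containments, each using upward/downward closure plus transitivity of $\preceq$, are precisely the intended bookkeeping.
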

The next proposition states that the subsets of paths which belong to the Borel sigma-algebra $\mcal B(\mcal O_{\paths X}))$ are measurable.
\begin{proposition}
The tuple $(\paths X, \Sigma_{\paths X})$, where $\Sigma_{\paths X}=\mcal B(\mcal O_{\paths X})$, is a measurable space. Here, $\mcal B(\mcal X)$ denotes the smallest sigma-algebra generated by $\mcal X \subseteq \mcal P(X)$.
\end{proposition}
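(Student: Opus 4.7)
The plan is to observe that this proposition is essentially a direct corollary of the previous proposition (which established that $\mcal O_{\paths X}$ is a topology on $\paths X$) together with the classical Borel sigma-algebra construction. The content is largely definitional, so no deep new argument is required; only a careful verification that the generating collection is well-typed and that the generated sigma-algebra exists.

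First, I would recall the well-known fact that for any set $Y$ and any collection $\mcal X \subseteq \mcal P(Y)$, the family of sigma-algebras on $Y$ containing $\mcal X$ is nonempty since $\mcal P(Y)$ itself is trivially a sigma-algebra containing $\mcal X$. Moreover, the intersection of an arbitrary family of sigma-algebras on $Y$ is again a sigma-algebra: it contains $Y$, and closure under complements and countable unions is preserved by intersection. Hence
\[
\mcal B(\mcal X) \;=\; \bigcap \{\Sigma \subseteq \mcal P(Y) \mid \Sigma \text{ is a sigma-algebra and } \mcal X \subseteq \Sigma\}
\]
is the unique smallest sigma-algebra on $Y$ containing $\mcal X$, justifying the notation $\mcal B(\mcal X)$.

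Second, I would instantiate this with $Y=\paths X$ and $\mcal X=\mcal O_{\paths X}$. The previous proposition shows that $\mcal O_{\paths X}$ is a topology on $\paths X$, so in particular $\mcal O_{\paths X} \subseteq \mcal P(\paths X)$. Applying the construction above, $\Sigma_{\paths X} = \mcal B(\mcal O_{\paths X})$ is a well-defined sigma-algebra on $\paths X$: it contains $\paths X$ (since $\paths X \in \mcal O_{\paths X} \subseteq \Sigma_{\paths X}$) and is by construction closed under complements and countable unions. By the definition of measurable space given at the start of this section, the pair $(\paths X, \Sigma_{\paths X})$ is therefore a measurable space.

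There is no genuine obstacle here: the substantive mathematical content was absorbed in the prior proposition establishing the Alexandroff topology $\mcal O_{\paths X}$. The only thing that could conceivably go wrong would be a failure of the generating collection to be a subset of $\mcal P(\paths X)$, but this is immediate. Thus the proof amounts to citing the previous proposition and invoking the standard Borel construction.
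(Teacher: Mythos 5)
Your proof is correct and matches the paper's (implicit) treatment: the paper states this proposition without proof precisely because, as you observe, it follows immediately from the standard fact that the sigma-algebra generated by any collection of subsets exists (as the intersection of all sigma-algebras containing it), once the earlier proposition has established that $\mcal O_{\paths X}$ is a topology on $\paths X$. Nothing further is needed.
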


Next, we establish that the $\paths X \rTo^{\paths f} \paths Y$ (for a given $X \rTo^f Y$) is measurable, i.e., if $V\in\Sigma_{\paths Y}$ then $\inv fV\in\Sigma_{\paths X}$. For this, we need the following result.
\begin{theorem}\label{thm:historypres}
  For any $X \rTo^f Y$, the function $\paths f$ is an order embedding, i.e., for any $p,p'\in\paths X$ we have $p \preceq p' \iff f\circ p \preceq f\circ p'$.
\end{theorem}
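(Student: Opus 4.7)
My plan is to prove both implications of the equivalence by directly unwinding the pointwise definition of $\preceq$ on paths. Recall that $p \preceq p'$ means $\dom p \subseteq \dom{p'}$ together with $p'(\sigma) = p(\sigma)$ for every $\sigma \in \dom p$, and the same definition applies on $\paths Y$. The driving observation is that $\paths f$ acts by postcomposition, so $\dom(\paths f(p)) = \dom(f\circ p) = \dom p$ and analogously for $p'$; hence the domain-inclusion half of $\preceq$ is transported verbatim across $\paths f$ in either direction without using any property of $f$ itself. The real content of the theorem lies in transporting the pointwise-agreement half.

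For the forward direction ($\Rightarrow$), I would assume $p \preceq p'$ and apply $f$ pointwise: the hypothesis $p'(\sigma) = p(\sigma)$ on $\sigma \in \dom p$ immediately gives $(f\circ p')(\sigma) = f(p'(\sigma)) = f(p(\sigma)) = (f\circ p)(\sigma)$ on $\dom(f\circ p) = \dom p$, and combined with $\dom(f\circ p) \subseteq \dom(f\circ p')$ this is exactly $f\circ p \preceq f\circ p'$. This direction is essentially a restatement of the functoriality of $\tpaths$ and needs no further work.

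For the backward direction ($\Leftarrow$), I would assume $f\circ p \preceq f\circ p'$. The domain observation already yields $\dom p \subseteq \dom{p'}$, so the restriction $\tilde p := p'|_{\dom p}$ is a well-defined path below $p'$ with $\paths f(\tilde p) = (f\circ p')|_{\dom p}$. By the hypothesis on the postcomposed paths, this restricted function agrees with $f\circ p$, so $\paths f(p) = \paths f(\tilde p)$. It then suffices to conclude $p = \tilde p$, which combined with $\tilde p \preceq p'$ gives the desired $p \preceq p'$. The natural attack on this remaining equality is well-founded induction along the prefix order on $\dom p$: both $p$ and $\tilde p$ share the identical domain, so at each $\sigma \in \dom p$ one would use the equation $f(p(\sigma)) = f(\tilde p(\sigma))$ together with the prefix structure inherited from $p'$ to force $p(\sigma) = \tilde p(\sigma)$.

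The hard part will be precisely this last step: lifting a pointwise equation in $Y$ back to one in $X$. I anticipate that the key technical lever is the observation that the restriction of $\paths f$ to the principal ideal $\history{p'}$ is automatically injective, since two elements of $\history{p'}$ with the same domain are both restrictions of $p'$ to that domain and so must agree pointwise; the delicate part is then reconciling $p$, which is not \emph{a priori} in $\history{p'}$, with its shadow $\tilde p$, which is. Once this injectivity argument is carefully deployed to identify $p$ with $\tilde p$, the backward direction closes and the order-embedding claim follows.
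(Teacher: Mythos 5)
Your forward direction is correct and is exactly what the paper establishes (and all that is used downstream: order preservation gives continuity, hence Borel measurability). The problem is the backward direction. The injectivity lever you propose does not close the argument --- you yourself note that $p$ need not lie in $\history{p'}$ --- and in fact no argument can close it, because the backward implication is false whenever $f$ is not injective. Take $X=\{x_1,x_2\}$, $Y=\{y\}$, $f$ the constant map, and let $p=\eseq{x_1}$ and $p'=\eseq{x_2}$ be the two one-point paths with domain $\{\eseq{}\}$. Then $f\circ p=f\circ p'$, so certainly $f\circ p\preceq f\circ p'$, but $p\not\preceq p'$ since $p(\eseq{})=x_1\neq x_2=p'(\eseq{})$. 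Your reduction to $\paths f(p)=\paths f(\tilde p)$ with $\tilde p=p'|_{\dom p}$ is precisely where this bites: $p$ and $\tilde p$ share a domain and have equal $f$-images, but nothing forces them to agree pointwise in $X$.

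It is worth noting that the statement in the main text overclaims: the paper's own proof in the appendix establishes a different, weaker pair of properties, namely that $\paths f$ is (strictly) order preserving together with the predecessor-lifting property $q\prec f(p)\implies\exists_{p'\in\paths X}\ p'\prec p\land f(p')=q$; the ``order embedding'' biconditional is never proved, and only order preservation is needed for Corollary~\ref{cor:paths-measurable}. So your forward half is the part that matters; the backward half should be dropped, or else the hypothesis strengthened to $f$ injective, under which your restriction-and-injectivity argument does go through.
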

Since every order preserving function is continuous and every continuous function is Borel measurable, it follows that, in particular, $\paths f$ is Borel measurable.
\begin{corollary}\label{cor:paths-measurable}
  For any $X \rTo^f Y$, the function $\paths f$ is measurable.
\end{corollary}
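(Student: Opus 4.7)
The plan is to invoke Theorem~\ref{thm:historypres} to obtain that $\paths f$ is order-preserving, then derive continuity with respect to the Alexandroff topologies on $\paths X$ and $\paths Y$, and finally conclude Borel measurability from continuity. All three steps are standard, and the paragraph preceding the corollary already flags exactly this chain of reasoning; my job is just to spell it out carefully.

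First, I would verify continuity of $\paths f$ as a map from $(\paths X,\mcal O_{\paths X})$ to $(\paths Y,\mcal O_{\paths Y})$. Let $U \in \mcal O_{\paths Y}$, so by the earlier characterisation $U$ is upward closed in the prefix order on $\paths Y$. To see that $\inv{\paths f}(U)$ is upward closed in $\paths X$, take $p \in \inv{\paths f}(U)$ and any $p' \in \paths X$ with $p \preceq p'$. By Theorem~\ref{thm:historypres}, $\paths f(p) = f\circ p \preceq f \circ p' = \paths f(p')$, and upward closure of $U$ then forces $\paths f(p') \in U$, i.e., $p' \in \inv{\paths f}(U)$. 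Thus preimages of open sets are open, so $\paths f$ is continuous.

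Second, I would promote continuity to Borel measurability by a routine sigma-algebra argument. Define
\[\mcal M = \{V \subseteq \paths Y \mid \inv{\paths f}(V) \in \Sigma_{\paths X}\}.\]
Since inverse image commutes with complements and countable unions, and $\Sigma_{\paths X}$ is a sigma-algebra, $\mcal M$ is itself a sigma-algebra on $\paths Y$. By the previous step, $\mcal O_{\paths Y} \subseteq \mcal M$, and hence by minimality of the generated sigma-algebra $\Sigma_{\paths Y} = \mcal B(\mcal O_{\paths Y}) \subseteq \mcal M$. This is precisely the statement that $\paths f$ is measurable.

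I do not anticipate any substantive obstacle: the only point requiring care is that $\Sigma_{\paths X}$ and $\Sigma_{\paths Y}$ have been fixed as the Borel sigma-algebras generated by the Alexandroff topologies, which is exactly the convention established just before the corollary. The genuine work—namely the order-embedding property on which continuity rests—has already been discharged in Theorem~\ref{thm:historypres}, so what remains here is purely a matter of unpacking definitions.
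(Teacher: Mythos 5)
Your proposal is correct and follows exactly the paper's intended route: Theorem~\ref{thm:historypres} gives order preservation, which yields continuity for the Alexandroff topologies, which yields Borel measurability. The paper leaves the last two (standard) steps implicit in the sentence preceding the corollary; you have merely spelled them out, and correctly so.
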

In hindsight, the function $\paths f$ is an arrow in the category $\meas$.
Next, we turn our attention on constructing a measurable space on the set $\spaths X$. The idea is to first define an order on the quotient space $\spaths X$, which can be inherited from the underlying space of paths $\paths X$ by simply letting:
$\qset {[p]} \preceq \qset {[q]} \iff \stutter p \preceq \stutter q$, for all $p,q\in\paths X$.

\begin{lemma}\label{lemma:prec-qset-welldef}
  The relation $\preceq$ on the set $\spaths X$ is a well-defined partial order. Furthermore, the relation $\preceq$ on the set $\spaths X$ is also a prefix order.
\end{lemma}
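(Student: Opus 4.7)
The plan is to observe that every ingredient of the claim reduces cleanly to the already-established corresponding property of $\preceq$ on $\paths X$, since the definition $[p]_\sim \preceq [q]_\sim \iff \stutter p \preceq \stutter q$ is phrased directly on the stutter-invariant representatives.

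First I would dispose of well-definedness. By construction $p \sim p'$ means exactly $\stutter p = \stutter {p'}$, so the truth value of the relation $\stutter p \preceq \stutter q$ only depends on the equivalence classes $[p]_\sim$ and $[q]_\sim$, not on the chosen representatives. This makes $\preceq$ on $\spaths X$ a bona fide relation.

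Next I would verify the partial order axioms. Reflexivity $[p]_\sim \preceq [p]_\sim$ is immediate from $\stutter p \preceq \stutter p$. Transitivity is inherited verbatim from the transitivity of $\preceq$ on $\paths X$ applied to the three representatives $\stutter p, \stutter q, \stutter r$. For antisymmetry, if $[p]_\sim \preceq [q]_\sim$ and $[q]_\sim \preceq [p]_\sim$ then $\stutter p \preceq \stutter q$ and $\stutter q \preceq \stutter p$, so antisymmetry of $\preceq$ on $\paths X$ gives $\stutter p = \stutter q$, i.e.\ $p \sim q$, hence $[p]_\sim = [q]_\sim$.

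For the prefix-order property I would fix $[p]_\sim$ and consider two elements $[q_1]_\sim, [q_2]_\sim$ of its principal ideal, so that $\stutter{q_1} \preceq \stutter p$ and $\stutter{q_2} \preceq \stutter p$. Both $\stutter{q_1}$ and $\stutter{q_2}$ therefore lie in the principal ideal $\history{\stutter p}$ of $\stutter p$ inside $\paths X$, which by the earlier proposition on downward totality of the history of a path is a chain. Hence $\stutter{q_1} \preceq \stutter{q_2}$ or $\stutter{q_2} \preceq \stutter{q_1}$, which by definition yields $[q_1]_\sim \preceq [q_2]_\sim$ or $[q_2]_\sim \preceq [q_1]_\sim$. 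Thus every principal ideal in $\spaths X$ is totally ordered, giving the prefix-order property. There is essentially no obstacle here; the whole argument is a quotient-free lift from $\paths X$ because the relation has been defined to factor through the canonical choice of representative $\stutter{(\,\cdot\,)}$.
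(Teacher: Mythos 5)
Your proof is correct and, for the well-definedness and partial-order parts, follows exactly the same route as the paper: everything is read off from the representative-independent assignment $[p]_\sim \mapsto \stutter p$ and the corresponding properties of $\preceq$ on $\paths X$. The one substantive difference is that the paper's appendix proof silently drops the ``prefix order'' clause of the statement and never argues it; your reduction of total orderedness of principal ideals in $\spaths X$ to the downward totality of $\history{\stutter p}$ in $\paths X$ is exactly the missing argument and is valid, so your write-up is in fact more complete than the paper's.
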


Once we have an order on the quotient space, we can establish that the quotient maps are order preserving (or continuous in the topological sense).
\begin{theorem}\label{thm:quotientmea}
  The quotient function $\paths X \rTo^\pi \spaths X$ is order preserving. Consequently, the quotient function $\paths X \rTo^\pi \spaths X$ is Borel measurable, where the sigma-algebra on paths is given by $\Sigma_{\spaths X}=\mcal B(\mcal O_{\spaths X})$.
\end{theorem}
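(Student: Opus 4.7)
The plan is to prove the two claims in sequence: first that $\pi$ is order preserving, then that Borel measurability follows formally.

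For the order-preservation claim, assume $p \preceq q$ in $\paths X$, so $\dom p \subseteq \dom q$ and $q$ agrees with $p$ on $\dom p$. Let $\phi_p$ and $\phi_q$ be the stutter bases of $p$ and $q$, which exist uniquely by Theorem~\ref{thm:stutter-unique}. The core technical step is to show $\phi_q|_{\dom p} = \phi_p$. I would prove this by induction on $\sigma \in \dom p$, noting that each clause of Definition~\ref{def:stutter} determines the value $\phi(\sigma' a)$ purely from $\phi(\sigma')$, the action $a$, and the values $p(\sigma'), p(\sigma' a)$; since these values coincide with $q(\sigma'), q(\sigma' a)$ for $\sigma' a \in \dom p$, the restriction $\phi_q|_{\dom p}$ satisfies the defining clauses for the stutter basis of $p$, and uniqueness forces equality.

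From this identification, Lemma~\ref{lemma:stutt-hpres} gives $\dom{\stutter p} = \phi_p(\dom p) \subseteq \phi_q(\dom q) = \dom{\stutter q}$. To check that the two invariant paths agree on the common domain, take $\rho = \phi_p(\sigma) \in \dom{\stutter p}$; then $\stutter p(\rho) = p(\sigma)$ by the defining equation $\stutter p \circ \phi_p = p$, and similarly $\stutter q(\rho) = \stutter q(\phi_q(\sigma)) = q(\sigma) = p(\sigma)$ since $q$ extends $p$. Hence $\stutter p \preceq \stutter q$, which by the definition of $\preceq$ on $\spaths X$ (Lemma~\ref{lemma:prec-qset-welldef}) is precisely $\pi(p) \preceq \pi(q)$.

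For the Borel measurability claim, I would observe that $\spaths X$ carries the Alexandroff topology of its prefix order (the upward closed sets under $\preceq$), mirroring the topology on $\paths X$. An order-preserving map between Alexandroff spaces built from posets in this way is automatically continuous, because the preimage of any upward closed set under an order-preserving function is again upward closed. Continuity with respect to these Alexandroff topologies then yields Borel measurability with respect to $\Sigma_{\paths X}$ and $\Sigma_{\spaths X}$.

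The main obstacle is the inductive identification $\phi_q|_{\dom p} = \phi_p$: this requires a careful case analysis through the four clauses of Definition~\ref{def:stutter}, showing that restricting $\phi_q$ preserves the inductive structure, and then invoking Theorem~\ref{thm:stutter-unique} to conclude. Everything else -- the derivation $\stutter p \preceq \stutter q$ and the passage from order preservation to measurability -- is essentially bookkeeping once this key lemma is in hand.
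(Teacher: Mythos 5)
Your proposal is correct and follows essentially the same route as the paper: both arguments hinge on showing that the stutter basis of $q$ restricted to $\dom p$ is a stutter basis for $p$ (hence equals $\phi_p$ by Theorem~\ref{thm:stutter-unique}), and then deducing $\stutter p \preceq \stutter q$ by comparing domains via Lemma~\ref{lemma:stutt-hpres} and checking agreement through the defining equations $\stutter p \circ \phi_p = p$ and $\stutter q \circ \phi_q = q$. The only cosmetic difference is that the paper reduces to a one-step extension of the domain without loss of generality, whereas you run the induction over all of $\dom p$ directly; the measurability step via continuity of order-preserving maps between Alexandroff spaces matches the paper's justification as well.
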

Next, we state the main theorem of this section.
\begin{theorem}\label{thm:spath-opres}
  For any $X \rTo^f Y$, the function $\spaths f$ is order preserving. Thus, the function $\spaths f$ is Borel measurable.
\end{theorem}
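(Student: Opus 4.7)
The plan is to unfold the definition of $\preceq$ on $\spaths X$ and chain together several of the earlier results, with Lemma~\ref{lemma:stutt-hreflect} doing the real work. Suppose $\qset{[p]}\preceq \qset{[q]}$ in $\spaths X$, i.e., $\stutter p \preceq \stutter q$ in $\paths X$. We must show $\spaths f\qset{[p]} \preceq \spaths f\qset{[q]}$, which by definition amounts to $\stutter{f\circ p}\preceq \stutter{f\circ q}$.

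First I would apply Lemma~\ref{lemma:stutt-hreflect} to $\stutter p \preceq \stutter q$ to obtain a path $p'\in\paths X$ with $p'\sim p$ and $p'\preceq q$. Next, by the order-embedding property of $\paths f$ (Theorem~\ref{thm:historypres}), $p'\preceq q$ lifts to $f\circ p'\preceq f\circ q$ in $\paths Y$. Now I pass to the quotient: since the quotient map is order preserving (Theorem~\ref{thm:quotientmea}), we get $\qset{[f\circ p']} \preceq \qset{[f\circ q]}$, i.e., $\stutter{f\circ p'}\preceq \stutter{f\circ q}$. Finally, $p\sim p'$ combined with the well-definedness of $\spaths f$ (Theorem~\ref{thm:spath-functor}, which itself uses Lemma~\ref{lemma:stutt-preserve}) gives $\qset{[f\circ p]} = \qset{[f\circ p']}$, and therefore $\stutter{f\circ p} = \stutter{f\circ p'} \preceq \stutter{f\circ q}$, as required.

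The second clause (Borel measurability) is then immediate by the same reasoning used in Corollary~\ref{cor:paths-measurable}: an order-preserving map between $T_0$ Alexandroff spaces is continuous with respect to the upper-set topology, and every continuous map between topological spaces is Borel measurable with respect to the induced Borel sigma-algebras $\Sigma_{\spaths X}$ and $\Sigma_{\spaths Y}$.

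I expect the main (and essentially only) obstacle to be the very first step, the appeal to Lemma~\ref{lemma:stutt-hreflect}: without this ``reflection'' property of stuttering along $\preceq$, one cannot bridge the gap between the order on $\spaths X$ (defined via stutter invariants) and the order on $\paths X$ (defined pointwise), and the argument cannot get off the ground. Once that lemma is available, the rest is just transporting the witness $p'$ through $\paths f$ and collapsing back along the quotient.
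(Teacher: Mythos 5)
Your proof is correct, but it takes a different route from the paper's, and your closing remark about Lemma~\ref{lemma:stutt-hreflect} being indispensable is not borne out. The paper's proof never invokes Lemma~\ref{lemma:stutt-hreflect}. Instead it exploits the fact that $\stutter p$ and $\stutter q$ are themselves elements of $\paths X$: from $\stutter p \preceq \stutter q$ it applies the order embedding of Theorem~\ref{thm:historypres} \emph{directly} to obtain $f\circ\stutter p \preceq f\circ\stutter q$, then Theorem~\ref{thm:quotientmea} to get $\stutter{(f\circ\stutter p)} \preceq \stutter{(f\circ\stutter q)}$, and finally Lemma~\ref{lemma:stutt-preserve} (i.e.\ $f\circ\stutter p \sim f\circ p$) to rewrite both sides as $\stutter{(f\circ p)} \preceq \stutter{(f\circ q)}$. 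Your version instead first ``reflects'' the inequality back into $\paths X$ via Lemma~\ref{lemma:stutt-hreflect}, producing a representative $p'\sim p$ with $p'\preceq q$, pushes that through $\paths f$ and the quotient, and then uses well-definedness of $\spaths f$ to replace $p'$ by $p$; every step is sound, and the final appeal to Theorem~\ref{thm:spath-functor} ultimately rests on the same Lemma~\ref{lemma:stutt-preserve} the paper uses. The trade-off: your argument imports the (harder) reflection lemma where it is not needed, whereas the paper's argument is shorter because it treats the stutter-invariant paths as first-class paths rather than working with arbitrary representatives of the equivalence classes. The measurability clause is handled identically in both: order preservation gives continuity for the Alexandroff (upper-set) topologies, hence Borel measurability, exactly as in Corollary~\ref{cor:paths-measurable}.
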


\subsection*{Constructing measures on the space of paths}
Often, measures on a space are constructed in a top-down manner by identifying a measurable set of building blocks and defining a set-function on this collection (for example, in the case of Lebesgue measures on $\mathbb R$, a semi-closed interval $[r,r')$ with $r\leq r'$ is one such building block and the set-function maps every interval of the form $[r,r')$ to the value $r'-r$). In turn, measure extension theorems (for instance, the well-known Carath\'{e}odory-Hahn extension theorem; see \cite[pp~356]{royden:real-measure}) are invoked to lift the set-function on building blocks to a measure on the whole measurable space. In this paper, we will follow a similar recipe; our building blocks will be open subsets of paths. As for measure extension theorems, we will use a result (cf. Theorem~\ref{thm:alvarez}) established by \citeauthor{alvarez-manilla2002} \cite{alvarez-manilla2002}. Below, we recall some definitions on a topological space necessary to state this result.
\begin{definition}
  Let $(X,\mcal O_X)$ be a topological space. A function $\mcal O_X \rTo^\mu \nnreals$ is a \emph{valuation} if and only if the following conditions are satisfied.
  \begin{enumerate}
    \item The function $\mu$ is strict, i.e., $\mu(\emptyset)=0$
    \item The function $\mu$ is order preserving, i.e., for any two open sets $U,U'\in\mcal O_X$, we have $U\subseteq U'$ implies $\mu(U) \leq \mu(U')$.
    \item The function $\mu$ is modular, i.e., for any two open sets $U,U'\in\mcal O_X$, we have $\mu(U)+\mu(U')=\mu(U\cup U') + \mu(U \cap U')$.
  \end{enumerate}
  A valuation $\mu$ is \emph{Scott-continuous} if and only if for any directed family of open sets $(U_i)_{i\in I}$ we have $\mu(\bigcup_{i\in I} U_i) = \sup_{i\in I} \mu(U_i)$. Lastly, a valuation $\mu$ is \emph{locally finite} if and only if every point has a finitely valued open neighbourhood.
\end{definition}
\begin{definition}
  A space $(X,\mcal O_X)$ is \emph{locally compact} if and only if
  for every point $x$ and open set $U$ with $x\in U$, there is a compact subset $V\subseteq X$ such that $x\in \mathrm{int}(V)$ and $V\subseteq U$. Here, $\mathrm{int}(V)$ denotes the interior of $V\subseteq X$. 
\end{definition}

\begin{definition}
  A topological space $(X,\mcal O_X)$ is \emph{sober} if and only if every irreducible closed set is a closure of a unique point. A closed set $C$ is \emph{irreducible} if and only if $C$ is nonempty and it cannot be expressed as union of two smaller closed subsets, i.e., if $C=C_1 \cup C_2$ and $C_1, C_2$ are closed sets, then $C=C_1$ or $C=C_2$.

  We call a subset $C\subseteq X$ \emph{non-sober} if $C$ is irreducible, $C$ is closed, and it cannot be stated as a closure of point (i.e., $\nexists_{x\in X}\ C=\closure x$).
\end{definition}

\begin{theorem}[\cite{alvarez-manilla2002}]\label{thm:alvarez}
  Every locally finite and Scott-continuous valuation on a locally compact sober space extends uniquely to a Borel measure.
\end{theorem}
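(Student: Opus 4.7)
The plan is to construct the extension via the standard Carath\'{e}odory outer-measure route, with Scott-continuity providing $\sigma$-additivity and local compactness plus sobriety supplying the inner-regularity needed to split outer approximations. First, I would define $\mu^*(A) = \inf\{\mu(U)\mid A\subseteq U,\ U\in\mcal O_X\}$ on arbitrary subsets of $X$ and verify that $\mu^*$ is an outer measure that agrees with $\mu$ on open sets. Monotonicity is immediate from the order-preserving property of $\mu$; finite subadditivity is a direct consequence of modularity, rewriting $\mu(U_1\cup U_2) = \mu(U_1)+\mu(U_2)-\mu(U_1\cap U_2) \leq \mu(U_1)+\mu(U_2)$; and countable subadditivity follows by passing through the directed family of finite unions and invoking Scott-continuity.

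Next I would show that every open set $U$ is Carath\'{e}odory measurable, i.e., $\mu^*(A)=\mu^*(A\cap U)+\mu^*(A\setminus U)$ for every $A\subseteq X$. The $\leq$ direction is subadditivity. For $\geq$, one chooses open $V\supseteq A$ close to the infimum, decomposes into the open piece $V\cap U$ and the piece $V\setminus U$ (which is not open in general), and approximates the latter from inside by open sets whose compact saturated ``cores'' are contained in $V\setminus U$. Modularity of $\mu$ on those open approximations then yields the desired inequality after passing to infima. Once all open sets are Carath\'{e}odory measurable, the Borel $\sigma$-algebra $\mcal B(\mcal O_X)$ lies inside the $\sigma$-algebra of $\mu^*$-measurable sets, so restricting $\mu^*$ to $\mcal B(\mcal O_X)$ produces a Borel measure extending $\mu$.

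For uniqueness I would invoke Dynkin's $\pi$-$\lambda$ theorem: the open sets form a $\pi$-system generating $\mcal B(\mcal O_X)$, so any two Borel measures agreeing on $\mcal O_X$ agree on the whole $\sigma$-algebra, provided one has a countable exhaustion by sets of finite measure. Local finiteness furnishes, for each point, an open neighbourhood of finite $\mu$-value; combined with a Lindel\"{o}f-style argument using the compact saturated cores coming from local compactness, this yields the required $\sigma$-finite decomposition.

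The main obstacle I anticipate is the Carath\'{e}odory measurability step. Without local compactness and sobriety, modularity alone does not bridge outer approximation of $A$ by open supersets and the non-openness of $A\setminus U$; the interplay between the order-theoretic hypotheses on $\mu$ (modularity, Scott-continuity) and the topological hypotheses on $X$ (local compactness, sobriety) is precisely what lets one move the valuation from its natural domain (open sets) onto the full Borel $\sigma$-algebra, with the Hofmann--Mislove correspondence between Scott-open filters of compact saturated sets and open sets of a sober space doing the essential book-keeping.
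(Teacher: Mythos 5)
First, note that the paper does not prove this statement at all: it is imported verbatim from \cite{alvarez-manilla2002} and used as a black box, so there is no in-paper argument to compare yours against. Judged on its own terms, your outline has a fatal gap at exactly the step you flag as the main obstacle. The outer measure $\mu^*(A)=\inf\{\mu(U)\mid A\subseteq U,\ U\in\mcal O_X\}$ does \emph{not} in general make open sets Carath\'{e}odory measurable, so no amount of massaging with modularity, compact saturated cores, or Hofmann--Mislove will rescue the $\geq$ inequality. A concrete counterexample already lives inside this paper's setting: take $X=\mathbb N\cup\{\infty\}$ with the Scott topology (a continuous lattice, hence locally compact and sober), whose opens are $\emptyset$ and the sets $\future{n}=\{n,n+1,\dots,\infty\}$, and put $\mu(\future{n})=2^{-n}$. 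This is a locally finite, modular, Scott-continuous valuation, and the unique extending Borel measure assigns $\{0\}$ the value $\tfrac{1}{2}$. But the only open set containing $0$ is the whole space, so $\mu^*(\{0\})=1$, and testing the Carath\'{e}odory condition for $U=\future{1}$ against $A=X$ gives $\mu^*(A\cap U)+\mu^*(A\setminus U)=\tfrac{1}{2}+1\neq 1=\mu^*(A)$. Hence $\future{1}$ is not $\mu^*$-measurable and your $\sigma$-algebra of measurable sets fails to contain the Borel sets. The underlying phenomenon is that crescents $V\setminus U$ need not be outer regular with respect to open covers, so the value $\mu(V)-\mu(V\cap U)$ that modularity forces on them is invisible to $\mu^*$.

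The actual proofs (Alvarez-Manilla, later streamlined by Keimel and Lawson) therefore do not start from the open-cover outer measure. One first uses the Smiley--Horn--Tarski extension theorem to extend the modular function $\mu$ to a finitely additive set function on the Boolean ring of crescents generated by $\mcal O_X$; only then do local compactness, sobriety and the Hofmann--Mislove theorem enter, to establish inner regularity of this set function by compact saturated sets and hence $\sigma$-additivity on the ring (continuity from above at $\emptyset$ via the filter property of compact saturated sets). The Carath\'{e}odory--Hahn theorem applied to the resulting premeasure on the ring --- whose induced outer measure uses countable covers by crescents, not by open supersets --- then yields the Borel extension, and uniqueness follows by a $\pi$-$\lambda$ argument much as you describe. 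Your instinct about which hypotheses do which job is sound; the missing idea is that the extension must pass through the ring of crescents before Carath\'{e}odory machinery can be applied.
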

The restrictions on $\mcal O_X \rTo^\mu \nnreals$ imposed by the above theorem are not unreasonable; at-least for our purpose. In Section~\ref{sec:prob}, we will construct a locally finite and a Scott-continuous valuation on open subsets of paths, which is induced by a given fully-probabilistic transition system. Nevertheless, we cannot immediately apply Theorem~\ref{thm:alvarez} because our space $\paths X$ is not a sober space, even though it is locally compact, i.e., every path $p\in\paths X$ has a compact neighbourhood (since $p\in \future p$). As a result, in the following, we first `soberify' our space $\paths X$ and use Theorem~\ref{thm:alvarez} to construct a Borel measure on $\paths X$ by lifting a given locally finite and Scott-continuous valuation $\mcal O_{\paths X} \rTo^\mu \nnreals$.

\begin{remark}
By inspection, we note that our space $\paths X$ is non-sober. For instance, if $X$ is non-empty then unfolding a $\tau$-loop results in an infinite chain of paths without any maximum since the domain of a path is a set of prefixes generated by some \emph{finite} word.
\end{remark}
Recall that, for a set $X$, both sets of paths $\paths X$ and stutter-equivalent paths $\spaths{X}$ are prefix orders. We want to construct measures on both kinds of spaces, therefore below we work with a class of \emph{simple} prefix orders which generalises both the structures.
\begin{definition}A prefix order is \emph{simple} if the history of every point is a finite set.\end{definition}

For example, the sets $\paths X$ and $\spaths X$ are simple prefix orders.
\begin{proposition}\label{prop:irr-downclosed}
  A directed subset of a prefix order is always totally ordered. In addition, an irreducible downward closed subset of a prefix order is always totally ordered. 
\end{proposition}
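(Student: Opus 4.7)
The plan is to handle the two assertions separately, both reducing to the totally-ordered-history property of prefix orders.

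For the first assertion, let $D$ be a directed subset. To show $D$ is totally ordered, pick any two elements $p, q \in D$. By directedness there exists $r \in D$ with $p \preceq r$ and $q \preceq r$, i.e., $p, q \in \history{r}$. Since the history of every element in a prefix order is totally ordered by definition, $p$ and $q$ must be comparable. This is essentially a one-line argument.

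For the second assertion, I would proceed by contradiction. Suppose $C$ is downward closed and irreducible but not totally ordered; then there exist $p, q \in C$ which are $\preceq$-incomparable. The key construction is to decompose $C$ using the futures of these two elements:
\[ C_1 = C \setminus \future{p}, \qquad C_2 = C \setminus \future{q}. \]
Both $C_1$ and $C_2$ are closed (i.e.\ downward closed) in the Alexandroff topology, because $\future{p}$ and $\future{q}$ are upward closed and $C$ is downward closed. Moreover $C_1 \subsetneq C$ since $p \in C \cap \future{p}$, and similarly $C_2 \subsetneq C$.

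The crucial verification is $C = C_1 \cup C_2$. Suppose some $r \in C$ belongs to neither, i.e.\ $p \preceq r$ and $q \preceq r$. Then $p, q \in \history{r}$, which is totally ordered; so $p$ and $q$ would be comparable, contradicting their choice. Hence $C$ decomposes as a union of two strictly smaller closed subsets, contradicting irreducibility. I expect this reduction — recognising that incomparability produces a non-trivial closed decomposition via the complements of principal filters — to be the main conceptual step; the rest is bookkeeping with the Alexandroff-topology description of closed sets as downward closed sets already recorded in the excerpt.
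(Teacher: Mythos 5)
Your proof is correct. The first assertion is handled exactly as in the paper: directedness supplies a common upper bound $r$, and downward totality of $\history{r}$ forces $p$ and $q$ to be comparable. For the second assertion you take a genuinely different route. The paper simply cites the well-known fact that an irreducible downward closed subset of a poset is an ideal (hence directed) and then reduces to the first assertion; you instead give a direct, self-contained contradiction argument, decomposing a non-totally-ordered $C$ as $C = (C \setminus \future{p}) \cup (C \setminus \future{q})$ for incomparable $p,q$, checking that both pieces are downward closed (hence Alexandroff-closed) and proper, and using the prefix-order property to rule out a common upper bound of $p$ and $q$ in $C$. Your decomposition is essentially the contrapositive of the standard ``irreducible closed implies directed'' argument specialised to prefix orders, so the mathematical content is the same, but your version has the advantage of not outsourcing the key step to a reference, at the cost of a few extra lines; the paper's version is shorter and makes the reduction to the first assertion explicit. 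All the verifications in your sketch (properness via $p \in \future{p}$, closedness of complements of principal filters intersected with $C$, and the covering claim) go through without issue.
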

Next, we construct a space $X^\infty$ consisting of all points from $X$ in which the non-sober sets (w.r.t. Alexandroff topology) are added as limit points.
\begin{align*}
  X^\infty =& \ X \cup \{\infty_C \mid \text{$C\subseteq X$ is a non-sober set w.r.t. Alexandroff topology}\}.\\
  \preceq'=& \preceq \cup  \ \{(\infty_C,\infty_C) \mid \infty_C \in X^\infty\} \cup \{(x,\infty_C) \mid x\in C\}.
\end{align*}
As an example, consider the prefix order $(\mathbb N,\leq)$ with their natural ordering. The sober space $\mathbb N^\infty=\mathbb N\cup \{\infty_{\mathbb N}\}$ is isomorphic to the well-known set of extended natural numbers $\mathbb N_\omega$.
\begin{lemma}\label{lemma:dtotal-lpaths}
  The set $X^\infty$ is prefix ordered by the relation $\preceq '$, if $(X,\preceq)$ is a prefix order.
\end{lemma}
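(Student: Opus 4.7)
The plan is to verify in turn the three partial-order axioms and then the characteristic prefix-order axiom, namely that every principal ideal is totally ordered. Throughout I will exploit the observation that, in the Alexandroff topology on a prefix order, closed sets are precisely the \emph{downward closed} subsets (since open sets were defined as upward closed). In particular every non-sober set $C$ appearing in the construction is downward closed and, being irreducible, totally ordered by Proposition~\ref{prop:irr-downclosed}.

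First I would prove that $\preceq'$ is a partial order by a straightforward case analysis on the types of the elements. Reflexivity is immediate from the definition. For antisymmetry the key remark is that in the union defining $\preceq'$ an element $\infty_C$ never occurs on the left except in the pair $(\infty_C,\infty_C)$; hence $a\preceq' b$ with $b\in X$ forces $a\in X$, reducing antisymmetry to that of $\preceq$, while $a\preceq' b$ with $b=\infty_C$ and $b\preceq' a$ forces $a=\infty_C$ as well. Transitivity splits analogously: the only non-trivial case is $a\preceq' b\preceq' \infty_D$ with $a,b\in X$; then $b\in D$ and $a\preceq b$, and downward closedness of $D$ gives $a\in D$, whence $a\preceq' \infty_D$. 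The remaining subcases either reduce to transitivity of $\preceq$ or are trivial because any chain involving $\infty_C$ on the left collapses to $\infty_C=\infty_C=\dots$.

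Next I would verify the prefix-order condition. For $p=x\in X$ one has $\history' p=\history p$ inside $(X,\preceq)$, which is totally ordered by hypothesis. For $p=\infty_C$ the definition of $\preceq'$ gives
\[
\history' \infty_C \;=\; C\cup\{\infty_C\},
\]
with $\infty_C$ as a top element. Since $C$ is non-sober, it is irreducible and closed in the Alexandroff topology, hence downward closed in $(X,\preceq)$; Proposition~\ref{prop:irr-downclosed} then tells us that $C$ itself is totally ordered, so $C\cup\{\infty_C\}$ is a chain as required.

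I expect the main obstacle, if any, to be bookkeeping: making sure the added pairs interact correctly with $\preceq$ in the transitivity argument, which hinges on the observation that non-sober sets are downward closed. Once that point is made explicit, both antisymmetry and transitivity follow by an uneventful case split, and the totality of principal ideals at limit points is delivered directly by Proposition~\ref{prop:irr-downclosed}.
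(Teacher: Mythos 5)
Your proof is correct and follows essentially the same route as the paper: both reduce the totality of the principal ideal of $\infty_C$ to Proposition~\ref{prop:irr-downclosed} applied to the non-sober (hence irreducible and downward closed) set $C$. The only difference is that you spell out the partial-order axioms — in particular the transitivity case $a\preceq' b\preceq'\infty_D$, which genuinely needs the downward closedness of $D$ — where the paper merely asserts that this ``follows directly from the construction''; this is a welcome clarification rather than a different argument.
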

Henceforth, we do not distinguish between the relation $\preceq$ and $\preceq'$. Notice that being sober is a topological property and therefore, we need a `right' notion of topology on $X^\infty$ to qualify it as sober. For instance, if we take upward closed sets as open sets (just like in the case of $X$) we find that the space $X^\infty$ is still non-sober; as a result, $X^\infty$ is non-sober w.r.t. Alexandroff topology. However, if we endow $X^\infty$ with a Scott topology then the space becomes sober w.r.t. this finer topology. For example, in the case of extended natural numbers, the problematic case of the directed set $\mathbb N$ (which was non-sober w.r.t. Alexandroff topology) is actually not a Scott-closed set\footnote{A subset $C\subseteq X$ of a prefix order $(X,\preceq)$ is Scott closed if and only if $C$ is downward closed and for any directed set $D\subseteq C$, if $\sup D$ exists then $\sup D \in C$.} since $\sup \mathbb N=\infty_{\mathbb N}$ and $\infty_{\mathbb N}\not\in\mathbb N$.
\begin{proposition}\label{prop:scott-sober}
Let $(X,\preceq)$ be a simple prefix order. A subset $U\subseteq X^\infty$ is \emph{Scott} open if and only if $U$ is upward closed and it is inaccessible by directed joins, i.e., for any directed set $D\subseteq X^\infty$ if $\sup D$ exists and $\sup D\in U$ then $D \cap U \neq \emptyset$. Let $\mcal S_{X^\infty}$ denote the collection of Scott open subsets of $X^\infty$. Then, the space $(X^\infty,\mcal S_{X^\infty})$ is a sober space.
\end{proposition}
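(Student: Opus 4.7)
The plan is to split the proof into three parts: a routine check that $\mcal S_{X^\infty}$ forms a topology, the $T_0$ separation axiom (which handles uniqueness in the sober characterisation), and existence---showing every irreducible Scott-closed set is the closure of a point. Topology and $T_0$ are bookkeeping; the crux is identifying the ``missing'' maximum of an irreducible Scott-closed set using the construction of $X^\infty$.

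For the preliminaries, I would observe that $\emptyset$ and $X^\infty$ are upward closed and inaccessible by directed joins, and that both properties are preserved under finite intersections and arbitrary unions. For $T_0$, given distinct $x,y \in X^\infty$ with $y \not\leq x$ (by antisymmetry, without loss of generality), the set $X^\infty \setminus \history{x}$ is upward closed, and inaccessible by directed joins because any directed $D \subseteq \history{x}$ whose supremum exists satisfies $\sup D \leq x$. Since $\closure{z} = \history{z}$ in any prefix order equipped with the Scott topology, $T_0$ immediately yields uniqueness of the generator of an irreducible closed set by taking maxima.

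The main work is existence. Given an irreducible Scott-closed $C \subseteq X^\infty$, Proposition~\ref{prop:irr-downclosed} makes $C$ totally ordered, and it suffices to produce a maximum $m \in C$, so that $C = \history{m} = \closure{m}$. I would argue by contradiction: assume no maximum. First, $C \subseteq X$, because each $\infty_{C'}$ is maximal in $X^\infty$ and would supply a maximum of $C$ if it belonged to $C$. Second, no $y \in X$ can bound $C$, since then $C \subseteq \history{y}$ would be finite by simplicity of the prefix order, so the totally ordered set $C$ would be finite and thus bounded. I would then verify that $C$ is itself a non-sober subset of $X$ in the Alexandroff topology (downward closed by inheritance, irreducible because total orders cannot split into two proper downward closed subsets without contradicting downward closure via totality, and not equal to $\history{x}$ for any $x$ by the no-maximum hypothesis), so that $\infty_C$ genuinely lives in $X^\infty$ and bounds $C$. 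Any competing upper bound $\infty_{C'}$ with $C \subsetneq C'$ would yield $c' \in C' \setminus C$ lying above every element of $C$, by totality of $C'$ and downward closure of $C$---but such a bound has already been ruled out. Therefore $\sup C = \infty_C$, and Scott-closedness of $C$ applied to the directed set $D = C$ forces $\infty_C \in C$---the desired contradiction. The step I expect to be trickiest is exactly this identification of $\sup C$, which must simultaneously confirm that the construction provides the limit point $\infty_C$ and exclude all alternative upper bounds.
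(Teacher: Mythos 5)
Your proof is correct and follows essentially the same strategy as the paper's: classify the irreducible Scott-closed sets using total orderedness (Proposition~\ref{prop:irr-downclosed}), observe that a set with a maximum is a principal history, and show that an infinite chain must be a non-sober Alexandroff-closed subset of $X$ whose supremum is the added point $\infty_C$, so Scott-closedness forces $\infty_C \in C$. The only difference is one of rigour: you explicitly verify the two claims the paper leaves implicit, namely that $\sup C=\infty_C$ (using simplicity to exclude upper bounds in $X$ and the structure of $\preceq'$ to exclude competing points $\infty_{C'}$) and the uniqueness half of sobriety via the $T_0$ axiom.
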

  To apply Theorem~\ref{thm:alvarez}, we need to first construct a locally finite and Scott-continuous valuation on our new sober space $\lpaths X$. In the following theorem, we will construct one such valuation on $\lpaths X$ from an old valuation $\mcal O_{\paths X} \rTo^\mu \nnreals$.
\begin{theorem}\label{thm:val-scott-extension}
  Let $(X,\preceq)$ be a prefix order and let $\mcal O_{X} \rTo^\mu \nnreals$ be a locally finite and Scott-continuous valuation. Then, the function $\mcal S_{X^\infty} \rTo^{\tilde \mu} \nnreals$ defined as follows:
  \[\tilde \mu(V) = \mu(V \cap X) \quad \text{(for every Scott-open set $V\in \mcal S_{X^\infty}$)}\]
  is a Scott-continuous valuation. If $X$ is simple then $\tilde\mu$ is locally finite.

\end{theorem}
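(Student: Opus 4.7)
The plan is to verify each defining property of a Scott-continuous valuation in turn --- well-definedness, strictness, monotonicity, modularity, Scott-continuity --- and then to handle local finiteness under the simple hypothesis separately, the latter being the only step where genuine work is needed.

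First, I would observe that whenever $V \in \mcal S_{X^\infty}$ is Scott-open, then $V \cap X$ is upward closed in the sub-poset $X \subseteq X^\infty$, and hence open in the Alexandroff topology $\mcal O_X$; this makes $\tilde\mu(V) = \mu(V \cap X)$ well-defined. The valuation axioms now transfer directly from $\mu$: $\emptyset \cap X = \emptyset$ gives strictness, inclusion gives monotonicity, and modularity follows from the identities $(V \cup V') \cap X = (V \cap X) \cup (V' \cap X)$ and $(V \cap V') \cap X = (V \cap X) \cap (V' \cap X)$ together with modularity of $\mu$. For Scott-continuity, given a directed family $(V_i)_{i\in I} \subseteq \mcal S_{X^\infty}$, the family $(V_i \cap X)_{i\in I} \subseteq \mcal O_X$ is directed and satisfies $\bigcup_i (V_i \cap X) = (\bigcup_i V_i) \cap X$, so Scott-continuity of $\mu$ gives $\tilde\mu(\bigcup_i V_i) = \sup_i \tilde\mu(V_i)$ at once.

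The main obstacle lies in local finiteness, and here I would use the simple hypothesis through the following key lemma: for every $x \in X$, the set $\future x = \{z \in X^\infty \mid x \preceq z\}$ is \emph{Scott-open} in $X^\infty$. Upward-closedness is by definition. For inaccessibility, let $D \subseteq X^\infty$ be directed with $\sup D \in \future x$; by Proposition~\ref{prop:irr-downclosed}, $D$ is totally ordered. In the first case $\sup D = y \in X$, and simplicity forces $D \subseteq \history y$ to be finite, so $\sup D = \max D \in D$ and we are done. In the second case $\sup D = \infty_C$ for some non-sober $C$; since $\infty_C$ is maximal in $X^\infty$, the chain $D$ cannot admit any upper bound inside $C$ (otherwise such a bound $y\in X$ would satisfy $\infty_C\preceq y$, which is impossible). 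Because $C$ is totally ordered, this forces $D$ to be cofinal in $C$. From $x \preceq \infty_C$ we have $x \in C$, so some $d \in D$ satisfies $x \preceq d$, yielding $d \in D \cap \future x$.

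With this Scott-openness in hand, local finiteness follows routinely. For $x \in X$, the set $\future x \cap X$ is the smallest Alexandroff-open neighbourhood of $x$, so by local finiteness of $\mu$ any finitely-valued open $U \ni x$ contains $\future x \cap X$, giving $\tilde\mu(\future x) = \mu(\future x \cap X) \leq \mu(U) < \infty$. For a point $\infty_C \in X^\infty \setminus X$, the non-sober set $C$ is nonempty (as an irreducible closed set); picking any $y \in C$ gives $\infty_C \in \future y$, and the same argument yields $\tilde\mu(\future y) < \infty$. The genuine difficulty is the cofinality step in the Scott-openness lemma above, which is precisely where the simple prefix-order hypothesis is essential; without it, an infinite directed set in $X$ could have a supremum in $X$ without itself containing a maximum, breaking the inaccessibility condition.
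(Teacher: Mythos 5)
Your proposal is correct and follows essentially the same route as the paper: the valuation axioms and Scott-continuity transfer verbatim through $V \mapsto V \cap X$, and local finiteness rests on exhibiting Scott-open neighbourhoods lifted from Alexandroff-open sets of $X$. The only (harmless) difference is that you establish Scott-openness just for the principal filters $\future{x}$ taken in $X^\infty$ --- which is the special case $U=\future{x}$ of the paper's claim that $\scott{U}$ is Scott-open for every $U\in\mcal O_X$ --- and your inaccessibility argument (finiteness of histories for suprema in $X$; no upper bound of the chain below $\infty_C$, hence cofinality in $C$) matches the paper's case analysis.
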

As a result, the function $\tilde\mu$ lifts to a unique Borel measure on the sigma-algebra $\Sigma_{\lpaths X}=\mcal B(\mcal S_{\lpaths X})$ due to Theorem~\ref{thm:alvarez}. However, in order to reflect back this measure on the original sigma-algebra $\Sigma_{\paths X}$, it is sufficient to establish that $\Sigma_{\paths X}$ is contained in the Borel sigma-algebra induced by Scott-open sets, i.e., $\Sigma_{\paths X} \subseteq \Sigma_{\lpaths X}=\mcal B(\mcal S_{\lpaths X})$. The next theorem states under what conditions the set-containment between the two sigma-algebras $\Sigma_{\paths X},\Sigma_{\lpaths X}$ is possible.
\begin{theorem}\label{thm:bs-contained}
  For countable sets $A$ and $X$, the sigma-algebra $\Sigma_{\paths X}$ is contained in the Borel sigma-algebra $\Sigma_{\lpaths X}$. Moreover, we also have $\Sigma_{\spaths X} \subseteq \Sigma_{\slpaths{X}}$.
\end{theorem}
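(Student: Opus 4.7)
The plan is to reduce the inclusion $\Sigma_{\paths X} \subseteq \Sigma_{\lpaths X}$ to a statement about generators. Since $\Sigma_{\paths X}$ is the Borel sigma-algebra of the Alexandroff topology on $\paths X$, it is generated by the principal filters $\future p$ for $p\in\paths X$, and every Alexandroff-open set is a union of such filters. Under the countability hypothesis on $A$ and $X$, the set $\paths X$ itself is countable (each path has a finite prefix-closed domain and $\tact^*$ is countable), so every Alexandroff-open set is actually a \emph{countable} union of principal filters. It therefore suffices to show, for each $p\in\paths X$, that $\future p$ regarded as a subset of $\lpaths X$ lies in $\Sigma_{\lpaths X}$.

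The main technical step is to verify that the filter $U_p = \{q \in \lpaths X \mid p \preceq q\}$, taken in the extended order, is Scott-open in $\lpaths X$. Upward-closure is immediate; the nontrivial condition is inaccessibility by directed joins. I would argue by cases on a directed $D \subseteq \lpaths X$ with $\sup D \in U_p$. By Proposition~\ref{prop:irr-downclosed}, $D$ is a chain. If $\sup D = q \in \paths X$, then $D$ sits inside $\history q$, which is finite because $\paths X$ is a simple prefix order; hence $D$ attains its maximum $q \in D \cap U_p$. If $\sup D = \infty_C$, then, using that distinct ideal points $\infty_C,\infty_{C'}$ are incomparable in $\lpaths X$, the least-upper-bound property forces $C = \downarrow D$ (any other non-sober set containing $D$ would produce an incomparable upper bound); consequently $p \preceq \infty_C$ gives $p \in \downarrow D$, and some $q \in D$ with $p \preceq q$ witnesses $D \cap U_p \neq \emptyset$. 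I expect this uniqueness argument ($C = \downarrow D$) to be the trickiest point in the proof.

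With $U_p$ known to be Scott-open, the remainder is bookkeeping. Using countability of $\paths X$, each singleton is Borel via
\[
  \{p\} \ = \ U_p \ \setminus\ \bigcup_{p' \succ p,\, p' \in \paths X} U_{p'},
\]
where for the $\infty_C$-elements of $U_p$ one uses that any non-sober $C$ containing $p$ has no maximum and hence contains some $p' \succ p$. Consequently $\paths X = \bigcup_{p \in \paths X}\{p\}$ is a countable union of Borel sets and lies in $\Sigma_{\lpaths X}$. Intersecting with the Scott-open set $U_p$ then yields $\future p = U_p \cap \paths X \in \Sigma_{\lpaths X}$, which via the reduction of the first paragraph establishes $\Sigma_{\paths X} \subseteq \Sigma_{\lpaths X}$. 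For the second inclusion, $\spaths X$ is a quotient of the countable set $\paths X$ (hence countable) and remains a simple prefix order by Lemma~\ref{lemma:prec-qset-welldef}; the identical argument with $\slpaths X$ in place of $\lpaths X$ yields $\Sigma_{\spaths X} \subseteq \Sigma_{\slpaths X}$.
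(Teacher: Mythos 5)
Your proposal is correct and follows essentially the same route as the paper: establish that each singleton $\{p\}$ is Borel in the soberification $\lpaths X$, use countability of $A$ and $X$ to conclude that $\paths X$ itself is Borel, and then transfer the Alexandroff-open generators (the principal filters $\future p$) into $\Sigma_{\lpaths X}$, repeating the argument verbatim for $\spaths X$. The only cosmetic differences are that you witness $\{p\}$ as the countable difference $U_p \setminus \bigcup_{p'\succ p} U_{p'}$ of Scott-opens where the paper uses $\history p \cap \scott{(\future p)}$ with $\history p$ Scott-closed, and that you re-derive the Scott-openness of the lifted filter $U_p=\scott{(\future p)}$, which the paper imports from the proof of Theorem~\ref{thm:val-scott-extension}.
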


\begin{corollary}\label{cor:meas-gen}
  Suppose the sets $A$ and $X$ are countable. Then every locally finite and Scott-continuous valuations $\mcal O_{\paths X} \rTo^\mu \nnreals$ and $\mcal O_{\spaths X} \rTo^{\mu} \nnreals$ lifts to a unique Borel measure $\Sigma_{\paths X} \rTo^{\tilde \mu} \nnreals$ and $\Sigma_{\spaths{X}} \rTo^{\tilde{\mu}} \nnreals$, respectively.
\end{corollary}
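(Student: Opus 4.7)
The plan is to synthesize the machinery developed in Section~\ref{sec:pmeas-paths}: starting from $\mu$, lift it through the sobrified space $\lpaths X$ via Theorem~\ref{thm:val-scott-extension}, apply Alvarez--Manilla (Theorem~\ref{thm:alvarez}) to obtain a unique Borel measure there, and pull that measure back along the sigma-algebra inclusion $\Sigma_{\paths X}\subseteq \Sigma_{\lpaths X}$ supplied by Theorem~\ref{thm:bs-contained}. The same three steps handle $\spaths X$ once one recognises it as a countable simple prefix order in its own right.

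For existence I would first feed $\mu$ into Theorem~\ref{thm:val-scott-extension}; simplicity of $\paths X$ makes the output $\tilde\mu(V)=\mu(V\cap \paths X)$ a locally finite Scott-continuous valuation on $\mcal S_{\lpaths X}$. Sobriety of this Scott space is Proposition~\ref{prop:scott-sober}, so the remaining hypothesis of Theorem~\ref{thm:alvarez} is local compactness. I would verify it by producing, for every $z\in \lpaths X$ and every Scott-open $U\ni z$, an explicit compact Scott-open $V$ with $z\in V\subseteq U$: for $z=p\in \paths X$ take $V=\future p$, and for $z=\infty_C$ use Scott-openness of $U$ at $\infty_C=\sup C$ to pick some $p\in C\cap U$ and again set $V=\future p$. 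Each such $V$ is Scott-open (a directed set whose supremum lies at or above $p$ must already contain a point at or above $p$, by simplicity when the supremum is in $\paths X$ and by total orderedness of $C$ from Proposition~\ref{prop:irr-downclosed} when the supremum is a limit $\infty_C$) and compact (any Scott-open cover contains a member touching $p$ which, being upward closed, already swallows $\future p$). Theorem~\ref{thm:alvarez} then delivers a unique Borel measure on $\Sigma_{\lpaths X}$ extending $\tilde\mu$, and Theorem~\ref{thm:bs-contained} lets me restrict it to $\Sigma_{\paths X}$; the defining formula of $\tilde\mu$ makes its agreement with $\mu$ on the generators $\mcal O_{\paths X}$ immediate.

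For uniqueness I would exploit countability. Since $\paths X$ is countable, and the smallest Alexandroff-open neighbourhood $\future p$ of each $p$ sits inside a finite-valued open set by local finiteness, monotonicity of $\mu$ forces $\mu(\future p)<\infty$. The cover $\paths X=\bigcup_{p\in \paths X}\future p$ is therefore a countable sigma-finite exhaustion inside the $\pi$-system $\mcal O_{\paths X}$, so Dynkin's $\pi$-$\lambda$ theorem forces any two Borel measures on $\Sigma_{\paths X}$ that agree with $\mu$ on $\mcal O_{\paths X}$ to coincide on all of $\Sigma_{\paths X}$. The argument for $\spaths X$ is verbatim: Lemma~\ref{lemma:prec-qset-welldef} identifies it as a prefix order, it is simple as a quotient of a simple order, and it inherits countability from $\paths X$, so the same route through $\slpaths X$ yields the unique Borel measure $\tilde\mu$ on $\Sigma_{\spaths X}$.

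The delicate step I expect is the local compactness check at the freshly adjoined limit points $\infty_C\in \lpaths X\setminus \paths X$, where the Scott topology is strictly finer than the Alexandroff topology and directed approximations from within $C$ threaten to escape naive neighbourhood candidates. The remedy is to anchor the compact neighbourhood on a point $p\in C$ drawn from the given open set and to lean on the total orderedness of $C$ from Proposition~\ref{prop:irr-downclosed}, which prevents cofinal directed sets from evading the upward closure $\future p$.
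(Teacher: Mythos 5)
Your route is the paper's route: the corollary is stated there without a separate proof precisely because it is meant to be the concatenation of Theorem~\ref{thm:val-scott-extension} (pushing $\mu$ to a locally finite Scott-continuous valuation $\tilde\mu$ on $\mcal S_{\lpaths X}$), Proposition~\ref{prop:scott-sober} together with Theorem~\ref{thm:alvarez} (extending $\tilde\mu$ to a unique Borel measure $\nu$ on $\Sigma_{\lpaths X}$), and Theorem~\ref{thm:bs-contained} (so that every $A\in\Sigma_{\paths X}$ is $\nu$-measurable); the case of $\spaths X$ is handled identically. You go beyond the paper in two welcome places: you actually verify local compactness of $(\lpaths X,\mcal S_{\lpaths X})$ --- the paper only remarks that $\paths X$ is locally compact for the Alexandroff topology and is silent about the sobrification --- and you supply a $\pi$-$\lambda$ uniqueness argument on $\Sigma_{\paths X}$ itself rather than only on $\Sigma_{\lpaths X}$. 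Both of those checks are correct as you state them.

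The gap is in the one sentence you declare immediate: that the restriction of $\nu$ to $\Sigma_{\paths X}$ agrees with $\mu$ on the generators $\mcal O_{\paths X}$. For $U\in\mcal O_{\paths X}$ the extension theorem controls $\nu(\scott{U})=\tilde\mu(\scott{U})=\mu(U)$, but the restricted measure assigns $U$ the value $\nu(U)=\nu(\scott{U})-\nu(\scott{U}\setminus U)$, and $\scott{U}\setminus U$ is the set of adjoined limit points above $U$, which can carry positive mass. Concretely, take $X=\{x\}$, $\act=\emptyset$ and the valuation induced by $P(x,\tau,x)=1$: writing $p_n$ for the path with $n$ silent steps, every singleton satisfies $\nu(\{p_n\})=\nu(\scott{(\future{p_n})})-\nu(\scott{(\future{p_{n+1}})})=1-1=0$, so $\nu(\paths X)=0$ by countable additivity, whereas $\mu(\future{\eseq x})=1$; by continuity from above all the mass sits on the single limit point $\infty_C$ of the chain. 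So the honest restriction does not lift $\mu$, and one needs either an extra hypothesis forcing the limit points to be $\nu$-null or a different definition of the lifted measure on $\Sigma_{\paths X}$. To be fair, you inherit this gap from the paper, which states the corollary without comment and later relies on $\tilde\mu_P(\future{\eseq x})=\mu_P(\eseq x)=1$ in Proposition~\ref{prop:alpha-measure}; but your proof, as written, does not close it.
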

\section{Probabilistic delay  bisimulation}\label{sec:prob}

In this section, we use the concepts developed in the previous section to characterise probabilistic delay bisimulation relation between the states of a fully probabilistic system.

\begin{definition}
  A \emph{(fully) probabilistic transition system} is a triple $(X,\tact,P)$ consisting of a countable set of states $X$, a countable set of actions $\tact$, and a \emph{probability transition function} $X \times \tact \times X \rTo^P \left[0,1\right]$ such that for every $x\in X$, the set $\{(a,x')\mid 0 < P(x,a,x')\}$ is finite and $\sum_{(a,x')\in \tact \times X} P(x,a,x')\in\{0,1\}$.
\end{definition}

Given a probabilistic transition system $(X,\tact,P)$, an \emph{execution} $p$ is a path on $X$ such that
$
    \forall_{\sigma a\in \dom p}\ 0<P(p(\sigma),a,p(\sigma a))
.$
Let $\exec x$ be the set of all executions starting from the state $x$. 
We write $\hat a=\eseq{}$ if $a=\tau$ and $\hat a=a$ if $a\in\act$.
\begin{definition}
  An equivalence relation $R \subseteq X \times X$ on a probabilistic transition system $(X,\tact,P)$ is a \emph{probabilistic delay bisimulation} \cite{BaierHermanns:1997,sokolova:2009:sacs} if and only if 
  \[\forall_{x,x'\in X}\ x R x' \implies \forall_{x'' \in X,a\in\tact}\ \ P(x,\tau^*\hat a,[x'']_R)= P(x',\tau^*\hat a,[x'']_R)\enspace.\]
  Two states $x,x'\in X$ are \emph{probabilistic delay bisimilar} if and only if there is a probabilistic delay bisimulation $R$ such that $xR x'$.

  Here, the probabilities associated with weak transitions are defined (originally given in \cite{Sokolova:2005:actiontype}) in the following way. For $x\in X,Y\subseteq X,L\subseteq \tact^*$, we let
  \begin{itemize}
    \item $\begin{aligned}[t]
            x \step{L}Y =&\ \{p\in\exec x \mid \trace p\in L\land \last p\in Y\ \land \\
            &\ \qquad \forall_{q}\ (q\prec p \land \trace q\in L ) \implies \last q\not\in Y\}.
        \end{aligned}$
    \item  $P(x,L,Y)= \sum_{p\in x \step{L} Y} \mu_P(p)$, where $\paths X \rTo^{\mu_P} \left[0,1\right]$ is defined as:
\[
\mu_P(p)=
\begin{cases}
    \prod_{\sigma a\in\dom p} P(p(\sigma),a,p(\sigma a)), & \text{if } p\in\exec x,\ \text{for some $x\in X$} \\
    0, & \text{otherwise}
  \end{cases}.
\]
  \end{itemize}
\end{definition}
\begin{proposition}\label{prop:ord-reversal}
For a given fully probabilistic system $(X,\tact,P)$, the induced function $\mu_P$ on paths is order reversing. Moreover, $\mu_P(\eseq{x})=1$ (for any $x\in X$).
\end{proposition}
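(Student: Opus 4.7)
The plan is to unfold the definition of $\mu_P$ and reduce each statement to an elementary fact about finite products of values in the unit interval.

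First, I would dispatch $\mu_P(\eseq{x})=1$. The single-point path $\eseq{x}$ has domain $\{\eseq{}\}$. Since any word of the form $\sigma a$ has length at least one, no such word belongs to $\{\eseq{}\}$, so the indexing set of the product in the definition of $\mu_P$ is empty. The path $\eseq{x}$ is vacuously an execution starting from $x$ (the condition $0<P(p(\sigma),a,p(\sigma a))$ for $\sigma a\in\dom p$ is quantified over an empty set), so the first clause of the case distinction applies and the value is the empty product $1$.

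For order reversal, I would show that $p\preceq q$ implies $\mu_P(p)\geq \mu_P(q)$ by a two-case analysis. By Definition of $\preceq$, $\dom p\subseteq \dom q$ and $q|_{\dom p}=p$. If $q\in\exec y$ for some $y\in X$, then for every $\sigma a\in\dom p\subseteq\dom q$ we have $0<P(q(\sigma),a,q(\sigma a))=P(p(\sigma),a,p(\sigma a))$, and also $p(\eseq{})=q(\eseq{})=y$, so $p\in\exec y$ too. Splitting the product in the definition of $\mu_P(q)$ along the partition $\dom q=\dom p\,\uplus\, (\dom q\setminus\dom p)$ gives
\[
\mu_P(q)\;=\;\mu_P(p)\cdot\!\!\!\prod_{\sigma a\in\dom q\setminus\dom p}\!\!\! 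P(q(\sigma),a,q(\sigma a)),
\]
and since each factor lies in $[0,1]$, the remaining product is at most $1$, yielding $\mu_P(q)\leq \mu_P(p)$. If instead $q\notin\exec y$ for any $y$, then $\mu_P(q)=0$ and the inequality holds trivially because $\mu_P$ is non-negative.

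No step here is conceptually difficult; the only point requiring a little care is confirming that prefixes of executions remain executions (so that the first clause of $\mu_P$ fires on $p$ whenever it fires on $q$), which falls out immediately from the pointwise definition of $\preceq$ on $\paths X$ and the pointwise nature of the execution condition. Hence there is no genuine obstacle, and both claims follow directly from the definitions.
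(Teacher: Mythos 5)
Your proof is correct, and it follows exactly the argument the paper relies on: the paper gives no explicit appendix proof for this proposition, but the identical decomposition $\mu_P(q)=\mu_P(p)\cdot\prod_{\sigma a\in\dom q\setminus\dom p}P(q(\sigma),a,q(\sigma a))$ with each factor in $[0,1]$ is precisely the computation used at the start of the proof of Theorem~\ref{thm:sep:prob-char}. Your care in checking that prefixes of executions are executions (so the first clause of $\mu_P$ fires on $p$ whenever it fires on $q$) and that $\eseq{x}$ yields the empty product is exactly what is needed; nothing is missing.
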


In contrast to Section~\ref{sec:lts}, our base category will be rather the category of measurable spaces and measurable functions $\meas$.
\begin{definition}
  Below we recall the well known Giry functor $\G$ (see e.g. \cite{Panangaden:2009:LMP}):
  \begin{itemize}
    \item Let $(X,\Sigma_X)$ be a measurable space. Then, $\G(X,\Sigma_X)=(\G X,\Sigma_{\G X})$, where $\G X$ is the set of all probability measures on the measurable space $(X,\Sigma_X)$. The sigma-algebra $\Sigma_{\G X}$ is the smallest sigma-algebra such that the evaluation maps $\G X \rTo^{\eval U} \left[0,1\right]$ are Borel measurable, for every $U\in \Sigma_X$.
    \item For any arrow $X \rTo^f Y$ in $\meas$, we let $\G(f)(\mu)=\mu\circ \inv f$.
  \end{itemize}
\end{definition}

\begin{wrapfigure}{r}{0.3\textwidth}
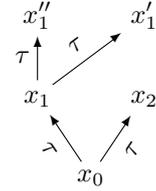

  \centering
  \vspace{-1.3cm}
  \begin{diagram}[height=1.5em,width=2em]
  x_1'' & & x_1'\\
  \uTo^{\tau}& \ruTo^{\tau} &\\
  x_1 & & x_2\\
  & \luTo(1,2)_{\tau} \ruTo(1,2)_{\tau}& \\
  & x_0 &
\end{diagram}
\vspace{-0.5cm}
\caption{An example motivating separation closure.}
\label{Fig:motive}
    \vspace{-0.5cm}
\end{wrapfigure}
To motivate the next definition, consider the transition system depicted in Figure~\ref{Fig:motive} and assume a nonzero probability with each of the drawn transitions. Furthermore, let $p_1,p_1'$, and $p_2$ be the three executions that reach the states $x_1,x_1'$, and $x_2$, resp., from the state $x_0$.
Notice that $P(x_0,\tau^*,\{x_1,x_1',x_2\})$ is the sum of the probabilities associated only with the executions $p_1,p_2$.
The execution $p_1'$ is not considered in the above computation because one can reach the set of target states $\{x_1,x_1',x_2\}$ with the execution $p_1$ which is a prefix of $p_1'$. This means, the execution $p_1'$ is redundant and neglected while computing the probability to reach the above target states.
Such redundancies at the level of paths are identified by the following notion of separation closure.
\begin{definition}
  Let $(X,\mcal O_X)$ be an Alexandroff space. The \emph{separation closure} of a subset $U \subseteq X$ is the set
  $U^\star = \{x\in U \mid \closure x \cap U = \{x\}\}.$ A subset $U\subseteq X$ is \emph{separated} if $U=U^\star$.
\end{definition}

In the context of the previous example (Figure~\ref{Fig:motive}), let $U=\{p_1,p_1',p_2\}$. Then, we find that $\history x \cap U=\{x\}$, for $x\in \{p_1,p_2\}$, while for the execution $p_1'$ we find that $\history{p_1'}\cap U= \{p_1,p_2\}$. Thus, $U^\star=\{p_1,p_2\}$ which were the only executions needed to compute the probability to reach one of the target states. Incidentally, a separated subset of paths $U\subseteq \paths X$ (i.e., $U=U^\star$) is minimal in the sense that any two distinct paths $p,p'\in U$ are not in the prefix relation $\preceq$, i.e., $p\not\preceq p'$ and $p'\not\preceq p$. The following proposition asserts this.
\begin{proposition}\label{prop:separated}
  In an Alexandroff space $(X,\mcal O_X)$, a separated subset $U\subseteq X$ (i.e., $U=U^\star$) has topologically distinguishable points. Moreover, 
  \begin{enumerate}
    \item the separation closure of a set is always separated, i.e., $U^\star = {U^\star}^\star$, for any $U\subseteq X$.
    \item\label{prop:separated:hereditary} the collection of separated sets is hereditary, i.e., if $U_1\subseteq U_2$ and $U_2=U_2^\star$, then $U_1=U_1^\star$.
    \item\label{prop:separated:future} for any subset $U\subseteq X$, we have $(\future U)^\star\subseteq U^\star$. Moreover the converse also holds, if the underlying space $X$ is a $T_0$ space. Here, upward closure is w.r.t. the specialisation order $\preceq$, i.e., $x \preceq x' \iff \closure x \subseteq \closure{x'}$, for any $x,x'\in X$.
  \end{enumerate}
\end{proposition}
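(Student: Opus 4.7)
The plan is to exploit the standard fact that in an Alexandroff space the specialisation order $\preceq$ satisfies $\closure{x} = \{y : y \preceq x\} = \history{x}$, and the open sets are precisely the upward closed sets. Under this correspondence, $\future{U}$ is the smallest open set containing $U$, and the condition $\closure{x} \cap U = \{x\}$ translates to ``$x$ has no proper predecessor in $U$''. All four assertions then become order-theoretic manipulations with this concrete description of closures and specialisations.

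For topological distinguishability of points in $U = U^\star$, I would argue contrapositively: if $x,y \in U$ are indistinguishable then $\closure{x} = \closure{y}$, so $y \in \closure{y} = \closure{x}$, hence $y \in \closure{x} \cap U = \{x\}$ and $x = y$. For part~1, the inclusion ${U^\star}^\star \subseteq U^\star$ is immediate from the definition; for the converse, any $x \in U^\star$ satisfies $\closure{x} \cap U^\star \subseteq \closure{x} \cap U = \{x\}$ while still containing $x$ itself, so $x \in {U^\star}^\star$. Part~2 is an analogous one-line argument: if $U_1 \subseteq U_2 = U_2^\star$ and $x \in U_1$, then $\closure{x} \cap U_1 \subseteq \closure{x} \cap U_2 = \{x\}$.

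For part~3, the forward inclusion $(\future{U})^\star \subseteq U^\star$ first requires showing that any $x \in (\future{U})^\star$ actually lies in $U$: since $x \in \future{U}$ there is some $u \in U$ with $u \preceq x$, and then $u \in \closure{x} \cap \future{U} = \{x\}$ forces $u = x$, so $x \in U$; subsequently $\closure{x} \cap U \subseteq \closure{x} \cap \future{U} = \{x\}$ yields $x \in U^\star$. The converse, which needs the $T_0$ hypothesis, is the main subtlety. Given $x \in U^\star$ and any $y \in \closure{x} \cap \future{U}$, one extracts $u \in U$ with $u \preceq y \preceq x$; separatedness of $U$ then gives $u \in \closure{x} \cap U = \{x\}$, so $u = x$ and hence $x \preceq y \preceq x$. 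Concluding $y = x$ requires antisymmetry of $\preceq$, which is exactly the $T_0$ condition on $X$; this antisymmetry step is the main obstacle and the only place where the $T_0$ assumption enters the argument.
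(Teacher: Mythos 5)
Your proposal is correct and follows essentially the same route as the paper's proof: all four claims are reduced to the identity $\closure{x}\cap U=\{x\}$ read through the specialisation order, with the $T_0$ hypothesis entering only to supply antisymmetry in the converse of part~3. The only cosmetic differences are that you prove distinguishability contrapositively and derive part~1 directly from $U^\star\subseteq U$ rather than via the distinguishability claim; both variants are sound.
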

Thus, separation closure provides an alternative way to compute $P(x,L,Y)$.
\begin{lemma}\label{lemma:step-sepclosure}
  For a given system $(X,\tact,P)$, define the set
  $x \steps L Y =\{p\in\paths X \mid p(\eseq{})=x\land \trace p\in L\land \last p\in Y \}.$
  Then, $\sum_{p\in x \step L Y}\mu_P(p)= \sum_{p\in (x \steps L Y)^\star} \mu_P(p) $.
\end{lemma}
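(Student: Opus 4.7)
The plan is to establish a tight relationship between $x\step{L}Y$ and $(x\steps{L}Y)^\star$ by observing that the two sets differ only by paths of weight zero under $\mu_P$. The central ingredient is the equality $\closure{p}=\history{p}$ on $\paths X$ (from the proposition identifying closures with histories in the Alexandroff topology on paths), which turns the separation-closure condition $\closure{p}\cap (x\steps{L}Y)=\{p\}$ into the condition \emph{no strict prefix of $p$ lies in $x\steps{L}Y$}. Once this is unpacked, the proof reduces to comparing two prefix-minimality conditions.

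First I would show the inclusion $x\step{L}Y\subseteq (x\steps{L}Y)^\star$. Pick $p\in x\step{L}Y$. Since $p\in\exec{x}$, in particular $p(\eseq{})=x$, $\trace{p}\in L$, and $\last{p}\in Y$, so $p\in x\steps{L}Y$. For any $q\prec p$, $q$ is automatically a path from $x$, so $q\in x\steps{L}Y$ would force $\trace{q}\in L$ and $\last{q}\in Y$, contradicting the defining minimality clause of $x\step{L}Y$. Hence $\history{p}\cap x\steps{L}Y=\{p\}$, i.e.\ $p\in (x\steps{L}Y)^\star$.

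Next I would handle the reverse direction modulo $\mu_P$-null paths. Let $p\in (x\steps{L}Y)^\star$. If $p\notin\exec{x}$ (for any starting state, but the constraint $p(\eseq{})=x$ already forces the starting state to be $x$), then by the case analysis in the definition of $\mu_P$ we have $\mu_P(p)=0$, so such a $p$ contributes nothing to the right-hand sum. Otherwise $p\in\exec{x}$; I then need the prefix-minimality clause of $x\step{L}Y$: for any $q\prec p$ with $\trace{q}\in L$, suppose $\last{q}\in Y$. Since $q$ is a prefix of a path starting in $x$, $q(\eseq{})=x$, so $q\in x\steps{L}Y$, contradicting $\history{p}\cap x\steps{L}Y=\{p\}$. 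Hence $\last{q}\notin Y$, and $p\in x\step{L}Y$.

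Combining, $x\step{L}Y\subseteq (x\steps{L}Y)^\star$ and $(x\steps{L}Y)^\star\setminus x\step{L}Y$ is entirely contained in the set of non-executions, on which $\mu_P$ vanishes. Therefore
\[
  \sum_{p\in x\step{L}Y}\mu_P(p)
  \;=\;\sum_{p\in x\step{L}Y}\mu_P(p)\;+\!\!\sum_{p\in(x\steps{L}Y)^\star\setminus x\step{L}Y}\!\!\mu_P(p)
  \;=\;\sum_{p\in(x\steps{L}Y)^\star}\mu_P(p),
\]
which is the desired identity. The main conceptual step is recognising that ``no strict prefix is $L$-labelled and reaches $Y$'' is exactly the separation condition applied to $x\steps{L}Y$; no substantive obstacle arises beyond this dictionary, and the execution/non-execution dichotomy is resolved directly by the case split in the definition of $\mu_P$.
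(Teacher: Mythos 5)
Your proposal is correct and follows essentially the same route as the paper's own proof: establish $x\step{L}Y\subseteq (x\steps{L}Y)^\star$ via the prefix-minimality clause, show $(x\steps{L}Y)^\star\cap\exec{x}= x\step{L}Y$, and dispose of the remaining non-executions in $(x\steps{L}Y)^\star$ by noting that $\mu_P$ vanishes on them. Your explicit use of $\closure{p}=\history{p}$ to unpack the separation condition is in fact cleaner than the paper's wording, which conflates $\mcal N(p)$ with $\history{p}$ at that point.
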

It should be noted that for a given probabilistic transition system $(X,\tact,P)$, we have $x \step L Y\subseteq (x \steps L Y)^\star$; however, the converse is not true in general.

The next theorem highlights a property characteristic to fully probabilistic systems. It states that if a separated subset of paths $U$ has a lower bound $p$, i.e., $\forall_{q\in U}\ p\preceq q$, then the sum of probabilities associated with each path in $U$ is bounded by the weight of $p$. This property is due to the order reversing nature of the function $\mu_P$ (cf. Proposition~\ref{prop:ord-reversal}).
\begin{theorem}\label{thm:sep:prob-char}
Given a system $(X,\tact,P)$, a path $p\in \paths X$, and a separated set of paths $U\subseteq\paths X$ such that $p\preceq U$, i.e., $\forall_{q\in U}\ p\preceq q$. Then, $\sum_{q\in U}\mu_P(q)\leq\mu_P(p)$.
\end{theorem}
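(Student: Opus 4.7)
The plan is to reduce the claim to finite separated subsets of $U$ and then induct on a spanning tree above $p$.

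First I would observe that by the definition of arbitrary sum together with the hereditariness property (Proposition~\ref{prop:separated}\eqref{prop:separated:hereditary}), it suffices to prove the inequality for every finite separated $J \fsubseteq U$ satisfying $p \preceq J$. Moreover, paths that are not executions contribute $0$ to $\mu_P$, so I may and do assume all elements of $J$ and $p$ itself are executions (if $p$ is not an execution then no $q \succeq p$ is either, so both sides vanish).

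For a finite separated $J$ with $p \preceq J$, consider the spanning tree
\[T_J \;=\; \{r\in\paths X \mid \exists\, q\in J,\ p\preceq r\preceq q\},\]
which is finite because each path has a finite domain. I would induct on $|T_J|$. The base case $J\subseteq\{p\}$ is immediate. For the inductive step, pick a leaf $q$ of $T_J$ of maximal depth and its immediate predecessor $q'\prec q$ in $T_J$ (which must satisfy $p\preceq q'$ since $\history q$ is totally ordered and $q'$ cannot lie strictly below $p$). By maximality of depth, every child of $q'$ in $T_J$ is itself a leaf of $T_J$ and hence lies in $J$; let $S\subseteq J$ denote this set of children (so $q\in S$). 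Writing $\sigma'=\max\dom{q'}$, the definition of $\mu_P$ gives the factorisation $\mu_P(r)=\mu_P(q')\cdot P(q'(\sigma'),a,x')$ for each $r\in S$ obtained from $q'$ by the step $(a,x')$, and the pairs $(a,x')$ indexing distinct elements of $S$ are distinct. Since $\sum_{(a,x')\in\tact\times X} P(q'(\sigma'),a,x')\leq 1$, I deduce $\sum_{r\in S}\mu_P(r)\leq\mu_P(q')$.

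Now set $J' = (J\setminus S)\cup\{q'\}$; then $T_{J'} = T_J\setminus S$ is strictly smaller than $T_J$. The main obstacle is verifying that $J'$ is still separated with $p\preceq J'$: no $q'''\in J\setminus S$ can be comparable with $q'$, for if $q'''\prec q'$ then $q'''\prec q$ violates separatedness of $J$, while if $q'\prec q'''$ then the path from $q'$ to $q'''$ in $T_J$ traverses some child of $q'$, which lies in $S\subseteq J$, again contradicting separatedness of $J$. Once this is in hand, the induction hypothesis yields $\sum_{r\in J'}\mu_P(r)\leq\mu_P(p)$, whence
\[\sum_{r\in J}\mu_P(r) \;=\; \sum_{r\in J\setminus S}\mu_P(r)+\sum_{r\in S}\mu_P(r) \;\leq\; \sum_{r\in J\setminus S}\mu_P(r)+\mu_P(q') \;=\; \sum_{r\in J'}\mu_P(r) \;\leq\; \mu_P(p),\]
closing the induction and hence the proof.
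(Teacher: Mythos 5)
Your proof is correct and follows essentially the same route as the paper's: reduce to finite separated subsets via the definition of the arbitrary sum, arrange the paths lying between $p$ and the elements of $U$ into a finite tree, and exploit $\sum_{(a,x')\in\tact\times X}P(x,a,x')\leq 1$ at each branching node. The only difference is organisational --- you prune the deepest generation of siblings and induct on the size of the tree, whereas the paper assigns recursive weights to the tree's vertices and shows each weight is at most $1$ --- but the key local inequality is identical.
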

Next we focus on the construction of probability measures on the space $\paths X$. From Corollary~\ref{cor:meas-gen}, it suffices to construct a locally-finite and Scott-continuous valuation on the open subsets of paths. The following theorem extends the function $\mu_P$ (induced by a given fully probabilistic system $(X,\tact,P)$) to such a valuation on paths.
\begin{theorem}\label{thm:pts-valuation}
  Given a system $(X,\tact,P)$, then the function $\mcal O_{\paths X} \rTo^{\tilde\mu_P} \nnreals$ defined as $\tilde\mu_P(U)=\mu_P(U^\star)$ (for every open set $U$) is a locally finite and Scott-continuous valuation.
\end{theorem}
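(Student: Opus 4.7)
The plan is to verify the three valuation axioms (strictness, monotonicity, modularity) together with Scott-continuity and local finiteness. Strictness follows immediately from $\emptyset^\star=\emptyset$. For local finiteness, in the Alexandroff topology on $\paths X$ the smallest open neighbourhood of a path $p$ is $\future p$; since $p$ is the unique minimum of its principal filter, $(\future p)^\star=\{p\}$, whence $\tilde\mu_P(\future p)=\mu_P(p)\in[0,1]$ is finite.

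For monotonicity, given open sets $U\subseteq U'$, I would exploit that the history of every path is finite: any $p\in U^\star\subseteq U'$ therefore lies above a minimal element $p'$ of $\history p\cap U'$, which is automatically minimal in all of $U'$ (so $p'\in U'^\star$) and unique because two candidates would both sit in the totally ordered set $\history p$ and coincide by the separation of $U'^\star$. This yields a partition $U^\star=\bigsqcup_{p'\in U'^\star}\{p\in U^\star\mid p'\preceq p\}$. Each block is separated by Proposition~\ref{prop:separated}\eqref{prop:separated:hereditary} and admits $p'$ as a lower bound, so Theorem~\ref{thm:sep:prob-char} bounds its $\mu_P$-sum by $\mu_P(p')$; summing over $p'\in U'^\star$ gives $\tilde\mu_P(U)\le\tilde\mu_P(U')$.

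Modularity is the main obstacle and rests on a careful combinatorial decomposition. I would partition every path appearing in any of the four sets $U^\star$, $U'^\star$, $(U\cup U')^\star$, $(U\cap U')^\star$ into five disjoint types: $T_1=U^\star\setminus U'$, $T_3=U'^\star\setminus U$, $T_{2a}=U^\star\cap U'^\star$, $T_{2b}=(U^\star\cap U')\setminus U'^\star$, and $T_{4b}=(U'^\star\cap U)\setminus U^\star$. The cornerstone observation is that no path $p\in(U\cap U')^\star$ can simultaneously admit proper prefixes in $U\setminus U'$ and in $U'\setminus U$: the two prefixes lie in the totally ordered set $\history p$, so one dominates the other, and upward closure of $U$ or $U'$ forces the lower one to contradict its membership. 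Together with routine case analyses (using upward closure of open sets and total order of histories), this yields $U^\star=T_1\sqcup T_{2a}\sqcup T_{2b}$, $U'^\star=T_3\sqcup T_{2a}\sqcup T_{4b}$, $(U\cup U')^\star=T_1\sqcup T_3\sqcup T_{2a}$, and $(U\cap U')^\star=T_{2a}\sqcup T_{2b}\sqcup T_{4b}$; summing $\mu_P$-weights, both sides of the modularity equation reduce to $\mu_P(T_1)+\mu_P(T_3)+2\mu_P(T_{2a})+\mu_P(T_{2b})+\mu_P(T_{4b})$.

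Finally, for Scott-continuity on a directed family $(U_i)_{i\in I}$ of open sets, monotonicity already gives $\sup_i\tilde\mu_P(U_i)\le\tilde\mu_P(\bigcup_iU_i)$. For the reverse inequality, I would write $\tilde\mu_P(\bigcup_iU_i)$ as the supremum over finite $F\fsubseteq(\bigcup_iU_i)^\star$ of $\sum_{p\in F}\mu_P(p)$; given such an $F$, directedness of $(U_i)$ supplies a single $U_k$ containing every element of $F$, and minimality of each member of $F$ in $\bigcup_iU_i$ forces minimality in $U_k$, so $F\subseteq U_k^\star$ and the partial sum is dominated by $\tilde\mu_P(U_k)\le\sup_i\tilde\mu_P(U_i)$.
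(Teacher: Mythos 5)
Your proposal is correct, and it diverges from the paper's proof in a way worth recording. Strictness, local finiteness and Scott-continuity are handled essentially as in the paper (the paper's Scott-continuity step builds an increasing chain $V_{p_1}\subseteq\dots\subseteq V_{p_n}$ from directedness, which is just a spelled-out version of your ``pick one $U_k$ containing the finite set $F$'' step, combined with the observation that minimality in $\bigcup_i U_i$ implies minimality in $U_k$). The two real differences are in monotonicity and modularity. For monotonicity, the paper partitions $U_1^\star$ by the unique element of $U_2^\star$ lying below each path, exactly as you do, but then re-derives the bound $\sum_{p'\in[p]_q}\mu_P(p')\leq\mu_P(q)$ inline from image-finiteness and sub-stochasticity of $P$; you instead delegate that bound to Theorem~\ref{thm:sep:prob-char} via the hereditariness of separated sets (Proposition~\ref{prop:separated}), which is cleaner and avoids duplicating the argument. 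For modularity, the paper proves the two inequalities separately by approximating each separation closure with finite subsets $\bar U_1\fsubseteq U_1^\star$, $\bar U_2\fsubseteq U_2^\star$ and constructing auxiliary covering sets below them, whereas you give an exact five-way disjoint decomposition of $U^\star$, $U'^\star$, $(U\cup U')^\star$ and $(U\cap U')^\star$ into the types $T_1,T_3,T_{2a},T_{2b},T_{4b}$, after which modularity is pure bookkeeping: both sides equal $\mu_P(T_1)+\mu_P(T_3)+2\mu_P(T_{2a})+\mu_P(T_{2b})+\mu_P(T_{4b})$. Your cornerstone claim --- that a path in $(U\cap U')^\star$ cannot have proper prefixes in both $U\setminus U'$ and $U'\setminus U$, because the two prefixes are comparable in the finite totally ordered set $\history p$ and upward closure of the open sets forces the lower one into $U\cap U'$ --- is exactly the fact needed to show $(U\cap U')^\star\subseteq U^\star\cup U'^\star$, and the remaining inclusions follow from upward closure in the same way; I checked the four set identities and they hold. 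What your route buys is an identity between sets rather than a pair of inequalities between suprema of finite partial sums, so no approximation argument is needed for modularity at all (only the standard rearrangement of nonnegative sums over a disjoint partition); what the paper's route buys is that it never needs the exact structure of the separation closures, only lower-bound covers. Both are valid; yours is the more transparent of the two.
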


Now we have all the technical machinery to encode a given probabilistic transition system $(X,\tact,P)$ as a coalgebra $X\rTo^\alpha \mcal G\spaths X$, where $\Sigma_X=\mcal P(X)$ (since $X$ is countable). The transition system $\alpha$ is defined, coalgebraically, as follows:
\begin{equation}\label{eq:measure-alpha}
  \alpha(x)(U)= \tilde\mu_P(\inv {\pi_X}(U) \ \cap \future{\eseq{x}}), \qquad \text{for every}\ U\in\Sigma_{\spaths X}\enspace.
\end{equation}
Here, we abuse notation by using $\tilde\mu_P$ to denote a measure on $\Sigma_{\paths X}$. Note that this measure is rather constructed by extending the valuation given in Theorem~\ref{thm:pts-valuation}. 
\begin{proposition}\label{prop:alpha-measure}
  The mapping in \eqref{eq:measure-alpha} is a probability measure.
\end{proposition}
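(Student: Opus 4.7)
The plan is to verify the three defining conditions of a probability measure for the set-function $\alpha(x)$ on $(\spaths{X},\Sigma_{\spaths X})$: well-definedness/measurability of the argument, vanishing on the empty set, countable additivity, and unit mass on the whole space. Before any of this, I need to know that $\inv{\pi_X}(U)\cap \future{\eseq{x}} \in \Sigma_{\paths X}$ for every $U\in \Sigma_{\spaths X}$, so that $\tilde\mu_P$ may be legitimately applied. This follows because $\pi_X$ is Borel measurable by Theorem~\ref{thm:quotientmea}, hence $\inv{\pi_X}(U)\in\Sigma_{\paths X}$, while $\future{\eseq{x}}$ is open in $\paths X$ (indeed the smallest neighbourhood of $\eseq{x}$), hence Borel. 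Also, by Corollary~\ref{cor:meas-gen} applied to the valuation of Theorem~\ref{thm:pts-valuation}, $\tilde\mu_P$ is a genuine $\sigma$-additive Borel measure on $\Sigma_{\paths X}$.

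Nullity and $\sigma$-additivity are then essentially bookkeeping. For $U=\emptyset$, the preimage is empty, so the intersection with $\future{\eseq{x}}$ is empty and $\tilde\mu_P(\emptyset)=0$. For countable additivity, given pairwise disjoint $U_1,U_2,\dots\in\Sigma_{\spaths X}$, I would use that set-theoretic preimage commutes with arbitrary unions and preserves disjointness, and that intersection with the fixed set $\future{\eseq{x}}$ distributes over unions, to obtain
\[
\inv{\pi_X}\Bigl(\bigcup_{i} U_i\Bigr)\cap \future{\eseq{x}} \;=\; \bigcup_{i} \Bigl( \inv{\pi_X}(U_i)\cap\future{\eseq{x}} \Bigr),
\]
a pairwise disjoint union in $\Sigma_{\paths X}$. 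Sigma-additivity of $\tilde\mu_P$ then gives the desired identity $\alpha(x)(\bigcup_i U_i)=\sum_i\alpha(x)(U_i)$.

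The interesting step is checking the unit-mass condition $\alpha(x)(\spaths X)=1$. Since $\inv{\pi_X}(\spaths X)=\paths X$, this reduces to $\tilde\mu_P(\future{\eseq{x}})=1$. As $\future{\eseq{x}}$ is open, Theorem~\ref{thm:pts-valuation} rewrites the left-hand side as $\mu_P\bigl((\future{\eseq{x}})^\star\bigr)$. The crux is to identify this separation closure: for any $p\in\future{\eseq{x}}$ with $p\neq \eseq{x}$ one has $\eseq{x}\in\history p\cap \future{\eseq{x}}$ in addition to $p$ itself, so $\history p\cap\future{\eseq{x}}\neq\{p\}$ and such $p$ is discarded; only $\eseq{x}$ survives, giving $(\future{\eseq{x}})^\star=\{\eseq{x}\}$. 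Then Proposition~\ref{prop:ord-reversal} delivers $\mu_P(\eseq{x})=1$, completing the verification. Finiteness of each $\alpha(x)(U)\in[0,1]$ is automatic by monotonicity of $\tilde\mu_P$.

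The main obstacle I anticipate is conceptual rather than computational: one must keep straight the layered construction — a valuation $\tilde\mu_P$ on $\mcal O_{\paths X}$ built from the path-weight $\mu_P$ via separation closure, its unique extension to a Borel measure via the soberification detour of Section~\ref{sec:pmeas-paths}, and its restriction (through $\pi_X$) to $\spaths{X}$. Once the right invocation of Theorems~\ref{thm:pts-valuation} and \ref{thm:alvarez} together with Corollary~\ref{cor:meas-gen} is made, the identification $(\future{\eseq{x}})^\star=\{\eseq{x}\}$ is the only genuine computation, and the rest is routine measure-theoretic formalities.
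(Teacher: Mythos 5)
Your proposal is correct and follows essentially the same route as the paper's proof: nullity and countable additivity via disjointness-preservation of preimages and distributivity of intersection over unions, and unit mass via $\tilde\mu_P(\future{\eseq x})=\mu_P(\eseq x)=1$. You merely spell out two steps the paper leaves implicit (measurability of $\inv{\pi_X}(U)\cap\future{\eseq x}$ and the identification $(\future{\eseq x})^\star=\{\eseq x\}$, the latter already noted inside the proof of Theorem~\ref{thm:pts-valuation}), which is a welcome clarification rather than a divergence.
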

Now we are ready to state the main result of this section.
\begin{theorem}\label{thm:pbisim}
  Two states are probabilistic delay bisimilar if and only if they are $\mcal G\circ \tspaths$-behaviourally equivalent.
\end{theorem}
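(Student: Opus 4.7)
The plan is to adapt the strategy of Theorem~\ref{thm:bisim} from $\set$ to $\meas$. For the forward direction, let $R$ be a probabilistic delay bisimulation with $xRx'$. Equip the countable quotient $X/R$ with its discrete sigma-algebra so that the quotient map $X\rTo^{f}X/R$ is trivially measurable, and define
\[
\beta(f(x))(V)\;:=\;\alpha(x)\bigl(\inv{\spaths f}(V)\bigr),\qquad V\in\Sigma_{\spaths{X/R}}\enspace.
\]
Then $\beta(f(x))$ is a pushforward of a probability measure and the coalgebra square $\mcal G\spaths{f}\circ\alpha=\beta\circ f$ holds by construction. Well-definedness of $\beta$ reduces to showing $\alpha(x)\circ\inv{\spaths f}=\alpha(x')\circ\inv{\spaths f}$ whenever $xRx'$. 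By the uniqueness half of the Carath\'{e}odory extension theorem (or a $\pi$-$\lambda$ argument), it suffices to verify this equality on the $\pi$-system of Scott-open principal filters $\future{\pi_{X/R}(q)}$ ranging over $q\in\paths{X/R}$.

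Using Lemma~\ref{blemma}\eqref{blemma:pi-pres} to commute $\pi_X$ past $\inv{\spaths f}$, the preimage of a single filter $\future{\pi_{X/R}(q)}$ with $q=\langle f(x)\,a_1\,f(x_1)\dots a_n\,f(x_n)\rangle$ unfolds, via \eqref{eq:measure-alpha}, Theorem~\ref{thm:pts-valuation}, and Lemma~\ref{lemma:step-sepclosure}, into a product of probabilities of the form $P(\,\cdot\,,\tau^*\hat a_i,[x_i]_R)$. An induction on $n$, applying the defining bisimulation condition at each step, shows that this quantity depends only on the $R$-class of the starting state and hence agrees at $x$ and $x'$. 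Lemma~\ref{blemma}\eqref{blemma:inverse} and Proposition~\ref{prop:separated}\eqref{prop:separated:future} are used to certify that the separation closure commutes correctly with the two successive pushforwards by $\pi_X$ (stutter equivalence) and by $\spaths f$ ($R$-equivalence).

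The backward direction is more direct. Given a $\mcal G\circ\tspaths$-coalgebra homomorphism $X\rTo^{f}Y$ with $f(x)=f(x')$, set $R:=\{(x,x')\mid f(x)=f(x')\}$, which is automatically an equivalence. Fix $a\in\tact$ and an $R$-class $[x'']_R$, and choose $V\in\Sigma_{\spaths Y}$ to be the Scott-open filter generated by $\pi_Y(\langle f(x)\,\hat a\,f(x'')\rangle)$. By the coalgebra equation,
\[
\alpha(x)\bigl(\inv{\spaths f}(V)\bigr)=\beta(f(x))(V)=\beta(f(x'))(V)=\alpha(x')\bigl(\inv{\spaths f}(V)\bigr)\enspace.
\]
Translating each side through Theorem~\ref{thm:pts-valuation} and Lemma~\ref{lemma:step-sepclosure} identifies them with $P(x,\tau^*\hat a,[x'']_R)$ and $P(x',\tau^*\hat a,[x'']_R)$ respectively, which is exactly the probabilistic delay bisimulation condition.

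The main obstacle I expect is bridging the combinatorial definition of $P(x,L,Y)$ as a sum of $\mu_P$ over a separated family of first-hitting executions with the measure-theoretic $\alpha(x)(V)$ arising from an extended Scott-continuous valuation on Scott-open sets of $\lpaths X$. One must track two pushforwards simultaneously and verify that the separation closure, which implements the ``first reach'' semantics inherent to $P(x,L,Y)$, commutes appropriately with both quotients. The case $a=\tau$ is additionally delicate: since $\hat a=\eseq{}$ makes $\tau^*\hat a=\tau^*$ include the empty trace, the trivial execution $\eseq x$ contributes and must be handled via Proposition~\ref{prop:ord-reversal} so that $\mu_P(\eseq x)=1$ enters the accounting at exactly one place.
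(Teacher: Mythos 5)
There is a genuine gap, and it sits at the same place in both directions of your argument: the identification of $\alpha(x)\bigl(\inv{\spaths f}(\future{\qset{[q]}})\bigr)$ for a \emph{single} principal filter with (products of) the quantities $P(\cdot,\tau^*\hat a,[x'']_R)$ occurring in the definition of probabilistic delay bisimulation. Unwinding the definitions, $\inv{\spaths f}(\future{\qset{[q]}})$ with $q=\langle f(x)\ a\ f(x'')\rangle$ contains $\qset{[p]}$ only if $\stutter{(fp)}$ extends $\stutter{q}$; since the stutter basis retains every $\tau$-step on which $f\circ p$ changes value, this forces \emph{all} states of $p$ before its first observable step to lie in the single fibre $\inv f\{f(x)\}$. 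Hence $\alpha(x)$ of this set is a ``class-confined'' (branching-style) reachability probability, whereas $P(x,\tau^*\hat a,[x'']_R)$ sums over first-hitting executions whose intermediate states may leave $[x]_R$ freely. These are different numbers in general. In the backward direction your displayed chain of equalities therefore establishes invariance of the wrong quantity, and in the forward direction the claimed unfolding ``into a product of probabilities of the form $P(\cdot,\tau^*\hat a_i,[x_i]_R)$'' does not hold as stated; worse, the delay-bisimulation hypothesis only constrains the unconfined probabilities, so the proposed induction has nothing to apply to the confined ones.

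The paper's proof is organised precisely around this obstacle. In the backward direction it evaluates the coalgebra square not on one filter but on the open set $\bigcup_{q\in f(x)\steps{\hat a}\{y\}}\future{\qset{[q]}}$, the union over \emph{all} paths on $Y$ from $f(x)$ to $y$ with trace in $\tau^*\hat a$ (all lengths, all intermediate $Y$-labels); only this union pulls back, via Lemma~\ref{lemma:stutt-hreflect} and Lemma~\ref{lemma:step-sepclosure}, to the full first-hitting set whose $\tilde\mu_P$-value is $P(x,\tau^*\hat a,\inv f\{y\})$. In the forward direction it takes an arbitrary open $V$, shows that $U=\inv\pi\inv{\spaths f}V$ has a \emph{saturated} trace set (closed under inserting $\tau$'s anywhere) and an $R$-saturated set of last states, and then imports the invariance $P(x,W,M)=P(x',W,M)$ from \cite[Lemma~24]{sokolova:2009:sacs}; that lemma is exactly the bridge from the delay condition to probabilities over saturated blocks which your induction tries to obtain for free. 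Your outer scaffolding --- defining $\beta$ as a pushforward and settling well-definedness on a generating $\pi$-system via uniqueness of extensions --- matches the paper (which uses \cite[Proposition~2.10]{Panangaden:2009:LMP} for the same purpose); what is missing is the passage between confined and unconfined reachability probabilities, and that passage is the actual mathematical content of the theorem.
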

\begin{proof}
  \fbox{$\Rightarrow$} Let $R$ be a probabilistic delay bisimulation, let $(X,\alpha)$ be the $\G\circ\tspaths$-coalgebra induced by $(X,\tact,P)$, and let $X\rTo^f X / R$ be the quotient map. We will construct a coalgebra on the quotient set $X/R$ in two stages. First, we construct a measure $\nu_{f(x)}$ (for each $x\in X$) on the space $\spaths{X/R}$ using the extension result (cf. Corollary~\ref{cor:meas-gen}) such that it coincides with the pushforward measure $(\alpha(x))_*$ on the open subsets $V\in\mcal O_{\spaths{X/R}}$. Second, we invoke the well-known application (taken from \cite[Proposition~2.10]{Panangaden:2009:LMP}) of Dynkin's $\lambda-\pi$ theorem to conclude that the measure $\nu_{f(x)}=(\alpha(x))_*$.

Let $x\in X$. Define a function $\mcal O_{\spaths{X/R}} \rTo^{\nu_{f(x)}} \nnreals$ as follows:
  \begin{equation}\label{eq:qconst}
    \nu_{f(x)}(V)=\tilde \mu_P(\inv\pi\inv{\spaths f} V \cap \future{\eseq x}),\qquad \text{for every}\ V\in\mcal O_{\spaths {X/R}}.
  \end{equation}
  We need a technical result proven in \cite[Lemma~24]{sokolova:2009:sacs} in order to show that $\nu_{f(x)}$ is well defined, i.e., for any $xRx'$ and open set $V\in\mcal O_{\spaths{X/R}}$ we have $\nu_{f(x)}(V)=\nu_{f(x')}(V)$. (See Theorem~\ref{thm:apbisim} for the proof of this claim).
  Moreover, the function $\nu_{f(x)}$ is a valuation, which immediately follows from \eqref{eq:qconst} and the fact that $\mu$ is a valuation. Therefore, from Corollary~\ref{cor:meas-gen}, the valuation $\nu_{f(x)}$ extends to a Borel measure $\tilde\nu_{f(x)}$ on the space $\spaths{X/R}$.

  Recall that the pushforward measure $ (\alpha(x))_*(V)=\alpha(x)(\inv{\spaths f}(V))$ (for each Borel set $V\in\Sigma_{\spaths{X/R}}$),
  is also a measure on the space $\spaths{X/R}$. Clearly, for any $V\in\mcal O_{\spaths{X/R}}$, we have $(\alpha(x))_*(V)=\nu_{f(x)}(V)=\tilde\nu_{f(x)}(V)$ due to Equation~\eqref{eq:measure-alpha}. Since both $\tilde\nu_{f(x)}$ and $(\alpha(x))_*$ are probability measures, so from \cite[Proposition~2.10]{Panangaden:2009:LMP} we get $\tilde\nu_{f(x)}=(\alpha(x))_*$. Now letting $\beta(f(x))=\tilde\nu_{f(x)}$, we find that $f$ is a coalgebra homomorphism because $\beta(f(x))(V)=\tilde\nu_{f(x)}(V)=(\alpha(x))_*(V)=\alpha(x)(\inv{\spaths{f}} V)$, for every $V\in\Sigma_{\spaths{X/R}}$.

  \fbox{$\Leftarrow$} Let $(X,\tact,P)$ be a fully probabilistic system and $(X,\alpha)$ be the corresponding $\mcal G\circ \tspaths$-coalgebra. Moreover, let $(Y,\beta)$ be a $\mcal G\circ \tspaths$-coalgebra and $X \rTo^f Y$ be a $\mcal G\circ \tspaths$ coalgebra homomorphism.
  In the appendix (Theorem~\ref{thm:apbisim}), we show that the equivalence relation $x R x' \iff f(x)=f(x')$ is a probabilistic delay bisimulation. Below we rather illustrate why $R$ is a witnessing bisimulation relation.
\end{proof}
Consider the two probabilistic transition systems drawn below,

\begin{center}
\scalebox{0.8}{
\begin{tikzpicture}
\node[draw=black] (A){
  \begin{tikzpicture}
    \node (y1) {$x_1$};
    \node (a) at ($(x1.center)+(-1.2,0.2)$) {$\alpha:$};
    \node (y2) at ($(y1.center)+(-0.5,-1)$) {$x_2$};
    \node (y3) at ($(y1.center)+(0.5,-1)$) {$x_3$};
    \node (y4) at ($(y2.center)+(0,-1)$) {$x_4$};
    \node (y1p) at ($(y1.center)+(2.5,0)$) {$x_1'$};
    \node (y2p) at ($(y1p.center)+(-0.5,-1)$) {$x_2'$};
    \node (y3p) at ($(y1p.center)+(0.5,-1)$) {$x_3'$};
    \node (y4p) at ($(y2p.center)+(0,-1)$) {$x_4'$};
    \path[->]
        (y1) edge[loop above] node[right] {$\tau,\frac 1 3$} (y1)
        (y1) edge node[above left, near end]{$\tau,\frac 1 3$} (y2)
        (y1) edge node[above right,near end]{$a,\frac 1 3$} (y3)
        (y2) edge node[left]{$b,1$} (y4)
        (y1p) edge node[above left, near end]{$\tau,\frac 1 2$} (y2p)
        (y1p) edge node[above right,near end]{$a,\frac 1 2$} (y3p)
        (y2p) edge node[left]{$b,1$} (y4p);
  \end{tikzpicture}};
\node[draw=black] (B) at ($(A.center)+(6,0)$) {
  \begin{tikzpicture}
    \node (x1) {$y_1$};
    \node (a) at ($(x1.center)+(-1.2,0.2)$) {$\beta:$};
    \node (x2) at ($(x1.center)+(-0.5,-1)$) {$y_2$};
    \node (x3) at ($(x1.center)+(0.5,-1)$) {$y_3$};
    \node (x4) at ($(x2.center)+(0,-1)$) {$y_4$};
    \path[->]
        (x1) edge node[above left, near end]{$\tau,\frac 1 2$} (x2)
        (x1) edge node[above right, near end]{$a,\frac 1 2$} (x3)
        (x2) edge node[left]{$b,1$} (x4);
  \end{tikzpicture}};
\end{tikzpicture}}
\end{center}
together with the path-based coalgebras $X\rTo^\alpha \G\spaths X$ and $Y\rTo^\beta \G\spaths{Y}$ where $X=\{x_i,x_j'\mid i,j\in\{1,2,3,4\}\}$ and $Y=\{y_i\mid i\in\{1,2,3,4\}\}$. Furthermore, let $X \rTo^f Y$ be a function defined as $f(z_i)=y_i$, where $z\in \{x,x'\}$ and $i\in\{1,2,3,4\}$. To see why the relation $R$ (as defined above) is a witnessing bisimulation, consider the equation
\begin{equation}\label{eq:illustrate}
\alpha(x_1)\Big(\bigcup_{p\in x_1 \steps{\tau^*b} [x_4]_R} \future{\qset{[p]}}\Big) = \alpha(x_1')\Big(\bigcup_{p\in x_1' \steps{\tau^*b} [x_4]_R} \future{\qset{[p]}}\Big),
\end{equation}
which can be derived from the facts $x_1 R x_1'$ and $\beta \circ f = \G\spaths f \circ \alpha$ (see Theorem~\ref{thm:apbisim} for the proof in the general case). The two terms in Equation~\ref{eq:illustrate} denote the probabilities of reaching the equivalence class $[x_4]_R$ from the states $x_1$ and $x_1'$. This can be seen, for instance, by deriving $\alpha(x_1)(\bigcup_{p\in x_1 \steps{\tau^*b} [x_4]_R} \future{\qset{[p]}})= P(x_1,\tau^* b,[x_4]_R)$ using Equation~\ref{eq:measure-alpha}, definition of $\tilde\mu_P$, Proposition~\ref{prop:separated:future}, and Lemma~\ref{lemma:step-sepclosure}. Moreover, the probability to reach the equivalence class $[x_4]_R$ from $x_1,x_1'$ is $\frac 1 2$ because $P(x_1,\tau^*b,[x_4]_R)=\sum_{i=1}^\infty (\frac 1 3)^i = \frac {1} {2} = P(x_1',\tau^*b,[x_4]_R)$.

%
\section{Discussion and conclusion}\label{sec:conc}
The main message of this paper is that behavioural equivalence in a path-based coalgebra is sufficient to capture branching bisimulation. In particular, we considered coalgebras of type $F\circ\tspaths$ over a concrete category $\mathbf C$, where $F$ is an endofunctor modelling the branching type of the system under investigation. We showed that behavioural equivalence when $F=\mcal P$ and $\mathbf C=\set$ coincides with the traditional branching bisimulation \cite{bbisim:1996}. In a similar spirit, we also showed that  behavioural equivalence when $F=\mcal G$ and $\mathbf C=\meas$ coincides with the probabilistic delay bisimulation \cite{BaierHermanns:1997,sokolova:2009:sacs,Sokolova:2005:actiontype}.

Interestingly, in the case of labelled transition systems, we can use the final chain based algorithm presented in \cite{ABHKMS12} to minimise the system with respect to branching bisimulation. The following prerequisites for this algorithm are satisfied in this context: first, a terminal object exists in $\set$; second, $\set$ is equipped with a (epi,mono)-factorisation structure; third, the functor $\mcal P\circ \tspaths$ preserves monomorphisms. However, for the probabilistic case, more research is required to find out whether the above conditions are valid or not.

In retrospect, our paper comes short in one regard when comparing with the recent works \cite{Brengos:2014,Brengos:2015:journal,Goncharov:2014:wbisim} on capturing weak bisimulation; namely, there is no abstract construction given to construct our path-based coalgebras from the system under study. In particular, we would like to construct a path-based coalgebra, for instance, $X \rTo^{\alpha'} F \spaths X$ from a given coalgebra of type $X \rTo^\alpha F(\tact \times X)$.

In this regard, it might be interesting to extend the initial work of \citeauthor{JacSok:2009:coalg:exec} \cite{JacSok:2009:coalg:exec}: Given a system $X \rTo TFX$ over $\set$ ($T$ is a monad modelling the branching type and $F$ is an endofunctor modelling the transition type), then the traces (executions) can be described as an arrow $X \rTo I$ ($X \times I \rTo I$) in the Kleisli category of $T$ with $I$ being the initial algebra of $F$. Note that this insight of \cite{JacSok:2009:coalg:exec} works under some technical requirements and it is unclear whether these requirements hold in a more general setting of $\meas$. This was already voiced by \citeauthor{henning:2013} \cite{henning:2013} in conjunction with generic trace semantics for probabilistic systems that were modelled over the base category $\meas$.

Another way to generalise the result of this paper is to consider the executions of a system as first-class citizens from the onset. Such a venture is carried out by \citeauthor{Cuijpers:2013:DCM} \cite{Cuijpers:2013:DCM} under the banner of prefix orders. Prefix orders are partially ordered sets whose principal ideals are totally ordered sets. The homomorphisms on such structures are called history preserving functions, those order preserving functions that preserve the principal ideals of the underlying ordered sets. \citeauthor{beoCuij2015:open} \cite{beoCuij2015:open} extended the theory of open maps \cite{Winskel93:open} to the concrete category setting to get a characterisation of traditional branching bisimulation. Therefore, it will be worthwhile to study whether the measure theoretic concepts proposed here can be lifted to the more general setting of prefix orders to capture probabilistic branching bisimulation. Lastly, it will be interesting to construe a notion of behavioural equivalence in the open map approach akin to the theory of coalgebras, where the notion of bisimulation is parametric to a functor modelling the branching type of system under study.

\subsection*{Acknowledgements} The authors thank the anonymous reviewers of the conference CALCO'17 for their feedbacks on an earlier draft of this paper. The authors also thank Barbara K\"{o}nig for various discussions regarding this work and for her earlier comments which led to significant improvements of this paper. The authors would also like to thank Pieter Cuijpers for his various feedbacks and continuing support over the course of this work.

This work has been carried out as part of the ``Behavioural Equivalences: Environmental Aspects,
Metrics and Generic Algorithms'' (BEMEGA) project supported by Deutsche Forschungsgemeinschaft (DFG).

\printbibliography[]

\newpage
\appendix
\renewcommand{\thesection}{\Roman{section}}
\setcounter{section}{0}
\numberwithin{theorem}{section}

\section{Nondeterministic Automata: The spectrum of bisimulations in the presence of silent transitions}\label{sec:weak-eta-del}
We have already seen how to capture branching bisimulation using the $\mathit{Path}_\sim$ functor. By adapting the modelling technique for the coalgebra $\alpha$ to saturate leading or trailing sequences of $\tau$ transitions, we can obtain delay, $\eta$ and weak bisimulation as well. For this purpose, we model $(X,\rightarrow)$ coalgebraically as $\alpha: X\rightarrow\mathcal P(\mathit{Path}_\sim(X))$ as follows:
\begin{description}
	\item[Weak Bisimulation]\index{Weak Bisimulation (Coalgebra)} \begin{equation*}\begin{aligned}\alpha(x)=\{[(p(\epsilon),a,p(\tau^na\tau^m))]\mid&\mathsf{dom}(p)=\tau^na\tau^m,\\&\forall i<n: p(\tau^i)\xrightarrow{\tau} p(\tau^{i+1}),\\&\ p(\tau^n)\xrightarrow{a} p(\tau^na), \\&\forall j<m: p(\tau^na\tau^j)\xrightarrow{\tau}p(\tau^na\tau^{j+1}), \\&a\in A_\tau\}\cup\{[x]\mid x\in X\}\end{aligned}\end{equation*}
	\item[$\eta$ Bisimulation]\index{$\eta$ Bisimulation (Coalgebra)} \begin{equation*}\begin{aligned}\alpha(x)=\{[q]\mid&q=(p(\epsilon),\tau,p(\tau),...,p(\tau^n)a,p(\tau^na\tau^m)),\\&\mathsf{dom}(p)=\tau^na\tau^m,\\&\forall i<n: p(\tau^i)\xrightarrow{\tau} p(\tau^{i+1}),\\&\ p(\tau^n)\xrightarrow{a} p(\tau^na), \\&\forall j<m: p(\tau^na\tau^j)\xrightarrow{\tau}p(\tau^na\tau^{j+1}), \\&a\in A_\tau\}\cup\{[x]\mid x\in X\}\end{aligned}\end{equation*}
	\item[Delay Bisimulation]\index{Delay Bisimulation} \begin{equation*}\begin{aligned}\alpha(x)=\{[q]\mid&q=(p(\epsilon),a,p(\tau^na),\tau,p(\tau^na\tau),...,p(\tau^na\tau^m)),\\&\mathsf{dom}(p)=\tau^na\tau^m,\\&\forall i<n: p(\tau^i)\xrightarrow{\tau} p(\tau^{i+1}),\\&\ p(\tau^n)\xrightarrow{a} p(\tau^na), \\&\forall j<m: p(\tau^na\tau^j)\xrightarrow{\tau}p(\tau^na\tau^{j+1}), \\&a\in A_\tau\}\cup\{[x]\mid x\in X\}\end{aligned}\end{equation*}
\end{description}
Using an example that allows to distinguish all four notions of bisimulation, we will demonstrate these ways of modelling LTS.

\newpage
\begin{example}
	Consider the set of actions $A_\tau=\{a,b,\tau\}$ that contains two non-silent transition $a$ and $b$. We define a transition system over $A_\tau$ and the state set $X=\{A,B,C,D,E\}$ as depicted below:
	
\begin{center}
	\begin{tikzpicture}[x=2.5cm,y=1.5cm,double distance=2pt]
      \node (a) at (1,2) {$A$} ;
      \node (b) at (1,0) {$B$} ;
      \node (c) at (1,4) {$C$} ;
      \node (d) at (0,1) {$D$} ;
      \node (e) at (2,1) {$E$} ;
      \begin{scope}[->]
        \path[shorten <=1pt]
          (a) edge node[above] {$a,\tau$} (d)
          		edge node[above] {$\tau$} (e)
          		edge[loop above] node {$b$} (a);
        \path
          (b) edge node[above] {$\tau$} (d)
          		edge[loop below] node {$b$} (b)
          edge node[above] {$\tau$} (e) ;
        \path
          (c) edge node[left] {$a,b,\tau$} (d)
          		edge node[right] {$\tau$} (e)
          		edge[loop above] node {$b$} (c);
        \path
          (e) edge node[above] {$a$} (d);
      \end{scope}
    \end{tikzpicture}
\end{center}
We will now model this system using the $\mathcal P\circ\mathsf{Path}_\sim$ functor in order to get each of the four notions of bisimilarity as the behavioural equivalence and compute Algorithm B (or, equivalently, Algorithm C). We will identify each equivalence class of paths with its paths and write down the computed sets of paths explicitly. The paths that are redundant are printed in grey. Redundant in this context means that there exists another representative of the same equivalence class in the set.

\begin{description}
	\item[Branching Bisimulation] To obtain branching bisimulation as behavioural equivalence, we model the transition system by the following table:
\begin{center}
	\scalebox{0.9}{\begin{tabular}{|c|c|}\hline
		\multicolumn{2}{|c|}{$\alpha\colon X\rightarrow\mathcal P(\mathit{Path}_\sim(X))$}\\\hline
		$A$&$\{(A),(A,b,A),(A,b,A,\tau,D),(A,a,D),(A,\tau,D),(A,\tau,E),(A,\tau,E,a,D),(A,b,A,\tau,E)\}$\\
		$B$&$\{(B),(B,b,B),(B,b,B,\tau,E),(B,\tau,E),(B,\tau,E,a,D),(B,\tau,D),(B,b,B,\tau,D)\}$\\
		$C$&$\{(C),(C,b,C),(C,b,C,\tau,D),(C,a,D),(C,b,D),(C,\tau,D),(C,\tau,E),(C,\tau,E,a,D)\}$\\
		$D$&$\{(D)\}$\\
		$E$&$\{(E),(E,a,D)\}$\\\hline
	\end{tabular}}
\end{center}

	\item[Weak Bisimulation] To obtain weak bisimulation as behavioural equivalence, we model the transition system by the following table:
\begin{center}
	\scalebox{.9}{\begin{tabular}{|c|c|}\hline
		\multicolumn{2}{|c|}{$\alpha\colon X\rightarrow\mathcal P(\mathit{Path}_\sim(X))$}\\\hline
		$A$&$\{(A),(A,b,A),(A,b,D),(A,b,E),(A,\tau,D),(A,\tau,E),(A,a,D)\}$\\
		$B$&$\{(B),(B,b,B),(B,b,D),(B,b,E),(B,\tau,D),(B,\tau,E),(B,a,D)\}$\\
		$C$&$\{(C),(C,b,C),(C,b,D),(C,b,E),(C,\tau,D),(C,\tau,E),(C,a,D)\}$\\
		$D$&$\{(D)\}$\\
		$E$&$\{(E),(E,a,D)\}$\\\hline
	\end{tabular}}
\end{center}
	
		\item[$\eta$ Bisimulation] To obtain $\eta$ bisimulation as behavioural equivalence, we model the transition system by the following table:
\begin{center}
	\scalebox{.9}{\begin{tabular}{|c|c|}\hline
		\multicolumn{2}{|c|}{$\alpha\colon X\rightarrow\mathcal P(\mathit{Path}_\sim(X))$}\\\hline
		$A$&$\{(A),(A,\tau,D),(A,\tau,E),(A,\tau,E,a,D),(A,b,A),(A,b,D),(A,b,E),(A,a,D)\}$\\
		$B$&$\{(B),(B,\tau,D),(B,\tau,E),(B,\tau,E,a,D),(B,b,B),(B,b,D),(B,b,E)\}$\\
		$C$&$\{(C),(C,\tau,D),(C,\tau,E),(C,\tau,E,a,D),(C,b,C),(C,b,D),(C,b,E),(C,a,D)\}$\\
		$D$&$\{(D)\}$\\
		$E$&$\{(E),(E,a,D)\}$\\\hline
	\end{tabular}}
\end{center}

		\item[Delay Bisimulation] Finally, to obtain delay bisimulation as behavioural equivalence, we model the transition system by the following table:
\begin{center}
	\scalebox{.9}{\begin{tabular}{|c|c|}\hline
		\multicolumn{2}{|c|}{$\alpha\colon X\rightarrow\mathcal P(\mathit{Path}_\sim(X))$}\\\hline
		$A$&$\{(A),(A,\tau,D),(A,\tau,E),(A,a,D),(A,b,A),(A,b,A,\tau,D),(A,b,A,\tau,E)\}$\\
		$B$&$\{(B),(B,\tau,D),(B,\tau,E),(B,a,D),(B,b,B),(B,b,B,\tau,D),(B,b,B,\tau,E)\}$\\
		$C$&$\{(C),(C,\tau,D),(C,\tau,E),(C,a,D),(C,b,C),(C,b,C,\tau,D),(C,b,C,\tau,E),(C,b,D)\}$\\
		$D$&$\{(D)\}$\\
		$E$&$\{(E),(E,a,D)\}$\\\hline
	\end{tabular}}
\end{center}
\end{description}
\end{example}

\section{Proofs concerning Section~\ref{sec:prelim}}\label{app-prelim}
\begin{proposition}[Proposition~\ref{prop:trace-last-pres}]
  Let $X \rTo^f Y$ be a function. Then, for every path $p\in \paths X$ we have $\trace p = \trace {fp}$ and $f \last p = \last{fp}$.
\end{proposition}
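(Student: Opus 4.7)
The plan is to unfold the definitions and observe that both claims reduce to elementary properties of function composition, so no deep argument is needed.

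First I would note that the functorial action $\paths f(p) = f \circ p$ is a composition of functions, and composition of functions does not alter the domain: $\dom{f \circ p} = \dom p$. Since by definition $\trace p = \max \dom p$ and $\trace{fp} = \max \dom{fp} = \max \dom{f \circ p}$, the equality $\trace p = \trace{fp}$ follows immediately, provided one has already argued that this maximum exists. The latter is built into Definition~\ref{def:path}: $\dom p$ is the prefix-closure $\history{\sigma}$ of some $\sigma \in \tact^*$, whose maximum is $\sigma$ itself.

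For the second claim, I would simply chain definitions: by definition $\last{fp} = (fp)(\trace{fp})$; applying the first part of the proposition this rewrites to $(f \circ p)(\trace p)$; by definition of composition this equals $f(p(\trace p))$; and by the definition of $\last{-}$ this is $f(\last p)$, yielding $\last{fp} = f(\last p)$.

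The only place where something could conceivably go wrong is the implicit well-definedness of $\trace p$, i.e.\ that $\dom p$ has a maximum. This is not really an obstacle here, but it is the one fact I would state explicitly before the two computations, since both halves of the proposition rest on it. After that, the proof is a two-line verification: one line for the domain invariance under post-composition, and one line for the commutation of $f$ with evaluation at that maximum.
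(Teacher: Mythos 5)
Your proposal is correct and matches the paper's own proof essentially verbatim: both arguments rest on $\dom{f\circ p}=\dom p$ to get $\trace p=\trace{fp}$, and then evaluate $fp$ at that common maximum to get $\last{fp}=f(\last p)$. The extra remark on the existence of $\max\dom p$ is a harmless (and reasonable) addition that the paper leaves implicit.
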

\begin{proof}
  Let $p\in\paths X$. Then, $\trace p = \max \dom p = \max \dom {fp} =\trace {fp}$. Furthermore, $\last{fp} = fp \max \dom{fp}= fp \max \dom p = f \last p$.
\end{proof}

\begin{theorem}[Theorem~\ref{thm:stutter-unique}]
For any path there is a unique stutter basis for the given path.
\end{theorem}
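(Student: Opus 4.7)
The plan is to proceed by strong induction on the length of $\sigma \in \dom p$, exploiting the fact that $\dom p$ is the set of prefixes of some word in $\tact^*$ and is therefore a well-founded, prefix-closed set. For every nonempty $\sigma \in \dom p$ there is a unique decomposition $\sigma = \sigma' c$ with $\sigma' \in \dom p$ and $c \in \tact$, which gives us a canonical notion of ``predecessor'' for the induction.

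First I would observe that the four inductive rules defining a stutter basis are \emph{mutually exclusive} and \emph{jointly exhaustive} once we fix $\sigma'$ and consider a one-step extension $\sigma' c \in \dom p$. Indeed, given $c \in \tact$, exactly one of the following cases holds: either $c \in \act$ (rule 4), or $c = \tau$ with $p(\sigma'\tau) = p(\sigma')$ (rule 2), or $c = \tau$ with $p(\sigma'\tau) \neq p(\sigma')$ (rule 3). Therefore, once $\phi(\sigma')$ is determined, there is one and only one value that $\phi(\sigma' c)$ can take.

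The induction itself is then straightforward. For the base case $\sigma = \eseq{}$, rule (1) forces $\phi(\eseq{}) = \eseq{}$, so $\phi$ is uniquely defined there. For the inductive step, assume $\phi$ has been uniquely determined on every prefix of $\sigma$ strictly shorter than $\sigma$; write $\sigma = \sigma' c$ with $\sigma' \in \dom p$ and $c \in \tact$. By the induction hypothesis, $\phi(\sigma')$ exists and is unique, and by the case analysis above exactly one of rules (2)--(4) dictates $\phi(\sigma)$. This simultaneously establishes existence (we can always define $\phi(\sigma)$) and uniqueness (no other value is consistent with the rules). Since every element of $\dom p$ is reached by this induction, we conclude that $\phi$ exists and is unique on all of $\dom p$.

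I do not expect any real obstacle here: the statement is essentially a well-founded recursion argument, and the only thing to verify carefully is the mutual exclusivity of the three inductive clauses, which is immediate from the disjoint alternative $c \in \act$ versus $c = \tau$, refined in the latter case by the boolean test $p(\sigma'\tau) = p(\sigma')$. The only subtle point worth flagging in writing is that the domain of $\phi$ being exactly $\dom p$ must be maintained throughout the induction, which follows from the fact that $\dom p$ is prefix-closed so every predecessor step stays inside $\dom p$.
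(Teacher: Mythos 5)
Your proof is correct and follows essentially the same route as the paper's: induction over the prefix-closed domain with a case split on whether the last letter is in $\act$ or is $\tau$ (refined by whether $p(\sigma'\tau)=p(\sigma')$). The only difference is cosmetic --- you argue existence and uniqueness together via mutual exclusivity and exhaustiveness of the rules, whereas the paper only verifies uniqueness by comparing two putative stutter bases; your treatment is if anything slightly more complete.
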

\begin{proof}
  Let $\phi,\phi'$ be the two witnessing stutter bases for $p$ with $\sigma=\max \dom p$. We prove $\phi=\phi'$ by structural induction on the words in $\history\sigma$. Without loss of generality, let $\sigma'a\in \history \sigma$. Then, we distinguish the following cases:
  \begin{enumerate}
    \item Let $a\in\act$. Then, by the induction hypothesis we have $
        \phi(\sigma')=\phi'(\sigma') \implies
        \phi(\sigma')a=\phi'(\sigma')a \implies
        \phi(\sigma'a)=\phi'(\sigma'a).
        $
    \item Let $a=\tau$. Let $p(\sigma'\tau)=p(\sigma')$. (The proof of the other case when $p(\sigma'\tau)\neq p(\sigma')$ is similar to the previous case.) Again using the induction hypothesis we get
        $
        \phi(\sigma')=\phi'(\sigma') \implies \phi(\sigma'\tau)=\phi'(\sigma'\tau)
        $.\qedhere
  \end{enumerate}
\end{proof}

\begin{lemma}[Lemma~\ref{lemma:stutt-hpres}]
  Let $\phi$ be the stutter basis for a given path $p$ with $\dom p=\history {\sigma}$. Then, $\phi(\history {\sigma})=\ \history{\phi(\sigma)}.$
\end{lemma}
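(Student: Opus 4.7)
The plan is to proceed by induction on the length of $\sigma$, showing the two inclusions simultaneously and relying on the fact that $\history{\sigma} = \history{\sigma'} \cup \{\sigma\}$ when $\sigma = \sigma' a$. Along the way it is useful to observe the monotonicity fact that if $\sigma_1 \preceq \sigma_2 \preceq \sigma$ then $\phi(\sigma_1) \preceq \phi(\sigma_2)$, which follows directly from Definition~\ref{def:stutter}: each of the four clauses either appends a letter to $\phi(\sigma')$ or leaves it unchanged.

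For the base case $\sigma = \eseq{}$, we have $\history{\eseq{}} = \{\eseq{}\}$ and by clause~(1) of Definition~\ref{def:stutter} we get $\phi(\eseq{}) = \eseq{}$, so both sides equal $\{\eseq{}\}$. For the inductive step, suppose the statement holds for some $\sigma' \in \tact^*$ with $\history{\sigma'} \subseteq \dom p$, and consider $\sigma = \sigma' a \in \dom p$. Since $\history{\sigma' a} = \history{\sigma'} \cup \{\sigma' a\}$, applying $\phi$ gives
\[
\phi(\history{\sigma' a}) \;=\; \phi(\history{\sigma'}) \cup \{\phi(\sigma' a)\} \;=\; \history{\phi(\sigma')} \cup \{\phi(\sigma' a)\},
\]
using the induction hypothesis. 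Now I would split on which of the three non-trivial clauses of Definition~\ref{def:stutter} determines $\phi(\sigma' a)$. If $a \in \act$ (clause~(4)) or if $a = \tau$ with $p(\sigma' \tau) \neq p(\sigma')$ (clause~(3)), then $\phi(\sigma' a) = \phi(\sigma') a$, so
\[
\history{\phi(\sigma')} \cup \{\phi(\sigma') a\} \;=\; \history{\phi(\sigma') a} \;=\; \history{\phi(\sigma' a)},
\]
as required. If $a = \tau$ with $p(\sigma' \tau) = p(\sigma')$ (clause~(2)), then $\phi(\sigma' \tau) = \phi(\sigma')$, and
\[
\history{\phi(\sigma')} \cup \{\phi(\sigma' \tau)\} \;=\; \history{\phi(\sigma')} \;=\; \history{\phi(\sigma' \tau)}.
\]

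There is no serious obstacle here; the argument is a routine case analysis driven by the inductive definition of the stutter basis. The one point that warrants a sentence is that both inclusions really are obtained in a single step of the induction: the $\subseteq$-direction uses monotonicity of $\phi$ to conclude $\phi(\sigma' a) \preceq \phi(\sigma' a)$ together with $\phi(\history{\sigma'}) \subseteq \history{\phi(\sigma')} \subseteq \history{\phi(\sigma' a)}$, while the $\supseteq$-direction uses the explicit identification of $\history{\phi(\sigma' a)}$ as $\history{\phi(\sigma')} \cup \{\phi(\sigma' a)\}$, which is exactly what the case analysis above computes.
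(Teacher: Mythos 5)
Your proof is correct. It takes a mildly but genuinely different route from the paper's: the paper decomposes the set equality into two separate properties of $\phi$ --- order preservation (if $\sigma_1\preceq\sigma_2$ then $\phi(\sigma_1)\preceq\phi(\sigma_2)$), which gives $\phi(\history{\sigma})\subseteq\history{\phi(\sigma)}$, and a backward-simulation property (every $\sigma'\preceq\phi(\sigma)$ equals $\phi(\sigma'')$ for some $\sigma''\preceq\sigma$), which gives the reverse inclusion --- each established by essentially the same case analysis on the clauses of Definition~\ref{def:stutter} that you perform. You instead run a single induction on the length of $\sigma$ and compute $\phi(\history{\sigma' a})$ outright as $\history{\phi(\sigma')}\cup\{\phi(\sigma' a)\}$, from which the equality follows in one pass; this is a little more economical and self-contained. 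What the paper's decomposition buys is the explicit identification of $\phi$ as a \emph{history preserving} map (order preserving plus backward simulation), a reusable structural fact that matches the prefix-order viewpoint the paper adopts elsewhere, whereas your argument is a one-off set computation. One minor remark: your closing paragraph about obtaining ``both inclusions in a single step'' is redundant, since the displayed chain of equalities in the inductive step already establishes the set equality outright; it is harmless but could be dropped.
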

\begin{proof}
  Let $\phi$ be the stutter basis for the path $p$ with $\dom p=\history {\bar\sigma}$. To prove that $\phi$ is history preserving, it suffices to show that $\phi$ is order preserving and $\phi$ satisfies the backward simulation property (i.e., $\forall_{\sigma',\sigma}\ \sigma'\preceq \phi(\sigma) \implies \exists_{\sigma''}\ \sigma''\preceq \sigma\land \phi(\sigma'')=\sigma'$). For order preservation, consider $\sigma \preceq \sigma'$, for some $\sigma,\sigma'\in\history{\bar \sigma}$. Without loss of generality, let $\sigma'=\sigma a$. Then, either $\phi(\sigma') = \phi (\sigma) a$ or $\phi (\sigma') = \phi (\sigma)$. I.e., in either case we have $\phi (\sigma) \preceq \phi(\sigma')$.

  For the backward simulation, let $\sigma' \preceq \phi(\sigma)$. Again, without loss of generality, let $\phi(\sigma)=\sigma' a$. If $a\neq\eseq{}$, then by the construction of $\phi$, we know that there is some $\sigma''$ such that $\sigma=\sigma''a \land \phi(\sigma'')=\sigma'$. If $a=\eseq{}$, then again by the construction of $\phi$ we know that there is some $\sigma''$ such that $\sigma=\sigma''\tau\land \phi(\sigma'')=\sigma'$. Thus, there is some $\sigma''$ such that $\sigma''\preceq \sigma$ and $\phi(\sigma'')=\sigma'$.
\end{proof}

\begin{lemma}[Lemma~\ref{lemma:stutt-preserve}]
  For any $p\in\paths X$ and any $X\rTo^f Y$, we have $f\circ \stutter p \sim f\circ p$.
\end{lemma}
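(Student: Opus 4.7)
The plan is to exploit the uniqueness of the stutter basis (Theorem~\ref{thm:stutter-unique}) to identify $(f\circ p)^\dag$ with $(f\circ \stutter p)^\dag$. Let $\phi$ be the stutter basis of $p$, so by definition $\stutter p\circ\phi=p$. Let $\psi$ be the stutter basis of $f\circ\stutter p$, so $(f\circ\stutter p)^\dag\circ\psi=f\circ\stutter p$. Composing these identities gives
\[
(f\circ\stutter p)^\dag\circ(\psi\circ\phi)\;=\;f\circ\stutter p\circ\phi\;=\;f\circ p.
\]
Therefore, if I can show that $\psi\circ\phi$ is itself a stutter basis for $f\circ p$, the uniqueness part of Theorem~\ref{thm:stutter-unique} forces $\psi\circ\phi$ to be \emph{the} stutter basis of $f\circ p$ and hence $(f\circ p)^\dag=(f\circ\stutter p)^\dag$, i.e. $f\circ p \sim f\circ\stutter p$.

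The bulk of the work is therefore the verification that $\psi\circ\phi$ satisfies the four defining clauses of Definition~\ref{def:stutter} applied to the path $f\circ p$. I would proceed by structural induction on the prefixes of $\dom p$. The base case $(\psi\circ\phi)(\eseq{})=\psi(\eseq{})=\eseq{}$ is immediate. For the extension step, take $\sigma'a\in\dom p$ and distinguish four situations:
\begin{enumerate}
\item If $a\in\act$, rule~4 for $\phi$ gives $\phi(\sigma'a)=\phi(\sigma')a$, and then rule~4 for $\psi$ (since $\phi(\sigma')a\in\dom(f\circ\stutter p)$ by Lemma~\ref{lemma:stutt-hpres}) gives $\psi(\phi(\sigma')a)=\psi(\phi(\sigma'))a$, matching rule~4 for $f\circ p$.
\item If $a=\tau$ and $p(\sigma'\tau)=p(\sigma')$, then $(f\circ p)(\sigma'\tau)=(f\circ p)(\sigma')$, while rule~2 for $\phi$ gives $\phi(\sigma'\tau)=\phi(\sigma')$, so $(\psi\circ\phi)(\sigma'\tau)=(\psi\circ\phi)(\sigma')$, matching rule~2 for $f\circ p$.
\item If $a=\tau$ and $p(\sigma'\tau)\neq p(\sigma')$ but $f(p(\sigma'\tau))=f(p(\sigma'))$, then rule~3 for $\phi$ gives $\phi(\sigma'\tau)=\phi(\sigma')\tau$, and since $\stutter p\circ\phi=p$ this entails $(f\circ\stutter p)(\phi(\sigma')\tau)=(f\circ\stutter p)(\phi(\sigma'))$, so rule~2 for $\psi$ applies and $(\psi\circ\phi)(\sigma'\tau)=(\psi\circ\phi)(\sigma')$, again matching rule~2 for $f\circ p$.
\item If $a=\tau$ and $f(p(\sigma'\tau))\neq f(p(\sigma'))$, then in particular $p(\sigma'\tau)\neq p(\sigma')$, so $\phi(\sigma'\tau)=\phi(\sigma')\tau$, and rule~3 for $\psi$ gives $\psi(\phi(\sigma')\tau)=\psi(\phi(\sigma'))\tau$, matching rule~3 for $f\circ p$.
\end{enumerate}

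The main technical obstacle is case~(3), where the original path has a genuine $\tau$-step that becomes a stutter step only after applying $f$; handling this cleanly is why I need $\psi$ to be the stutter basis of $f\circ\stutter p$ rather than of something simpler, and why Lemma~\ref{lemma:stutt-hpres} is invoked to guarantee that $\phi(\sigma')\tau$ really lies in $\dom(f\circ\stutter p)$ so that $\psi$ is defined there. With the four cases dispatched, the induction is complete, $\psi\circ\phi$ is the unique stutter basis of $f\circ p$, and consequently $(f\circ p)^\dag=(f\circ\stutter p)^\dag$, which is exactly $f\circ\stutter p\sim f\circ p$.
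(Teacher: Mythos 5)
Your proposal is correct and follows essentially the same route as the paper: the paper likewise takes the stutter bases $\phi'$ of $p$ and $\phi''$ of $f\circ\stutter p$, proves by the same four-case induction that the composite $\phi''\circ\phi'$ coincides with the stutter basis of $f\circ p$, and then cancels the (surjective) basis to conclude $\stutter{(f\circ p)}=\stutter{(f\circ\stutter p)}$. The only cosmetic difference is that the paper verifies the composite equals the already-known stutter basis of $f\circ p$, while you verify the composite satisfies the defining clauses and then invoke uniqueness (Theorem~\ref{thm:stutter-unique}) -- two sides of the same argument.
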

\begin{proof}
  Let $\phi,\phi',\phi''$ be the stutter bases for the paths $fp,p,f\stutter p$, respectively. Consider the following commutative diagram with $\dom p = \history \sigma$.
  \begin{diagram}[height=2.5em]
    \phi'\history\sigma & \rTo^{\phi''} & & \phi''\phi'\history\sigma\\
    \uTo^{\phi'} & \rdTo^{\stutter p} &\ldTo(1,4)_{\stutter{(f\stutter p)}} & \\
    \history \sigma & \rTo^p & X & \\
    \dTo^{\phi} & & \dTo^f &\\
    \phi\history \sigma & \rTo^{\stutter{(fp)}} & Y &
  \end{diagram}
We prove that $\phi=\phi''\phi'$ by induction on the words in $\history \sigma$. Without loss of generality, let $\sigma' a\in\history\sigma$. If $a\in A$, then the result follows directly from the inductive hypothesis. So, let $a=\tau$. Then we distinguish the following cases.
    \begin{enumerate}
      \item Let $fp(\sigma'\tau)=fp(\sigma')$ and let $p(\sigma'\tau)=p(\sigma')$. Then we have $\phi'(\sigma'\tau)=\phi'(\sigma') \implies \phi''\phi'(\sigma'\tau)=\phi''\phi'(\sigma')$. And using induction hypothesis we get $\phi''\phi'(\sigma'\tau)=\phi(\sigma')$. Moreover, $\phi(\sigma')=\phi(\sigma'\tau)$ since $fp(\sigma'\tau)=fp(\sigma')$.
      \item Let $fp(\sigma'\tau)=fp(\sigma')$ and let $p(\sigma'\tau)\neq p(\sigma')$. Then we have $\phi'(\sigma'\tau)=\phi'(\sigma')\tau$. Furthermore, $fp(\sigma'\tau)=fp(\sigma')\implies f\stutter p \phi'(\sigma'\tau) = f\stutter p \phi'(\sigma')$. I.e., $f\stutter p \phi'(\sigma')\tau=f\stutter p \phi'(\sigma')$. So by the construction of a stutter basis we have $\phi''(\phi'(\sigma')\tau)=\phi''\phi'(\sigma')$. Using induction hypothesis we get $\phi''(\phi'(\sigma')\tau) = \phi''\phi'(\sigma') = \phi (\sigma')$. Moreover, $\phi(\sigma')=\phi(\sigma'\tau)$ since $fp(\sigma'\tau)=fp(\sigma')$. Thus, $\phi''\phi'(\sigma'\tau) = \phi''(\phi'(\sigma')\tau) =  \phi(\sigma'\tau)$.
      \item Let $fp(\sigma'\tau)\neq fp(\sigma')$ and let $p(\sigma'\tau)=p(\sigma')$. This case is not possible.
      \item Let $fp(\sigma'\tau)\neq fp(\sigma')$ and let $p(\sigma'\tau)\neq p(\sigma')$. Then we have $\phi(\sigma'\tau)=\phi(\sigma')\tau$ and $\phi'(\sigma'\tau)=\phi'(\sigma')\tau$. Using $p=\stutter p \phi'$ in $fp(\sigma'\tau)\neq fp(\sigma')$ we get $f\stutter p \phi'(\sigma'\tau) \neq f\stutter p \phi'(\sigma')$. Then, by the construction of a stutter basis we get $\phi''\phi'(\sigma'\tau)=\phi''\phi'(\sigma')\tau$.
          I.e., using the induction hypothesis we get $\phi''\phi'(\sigma'\tau)=\phi(\sigma')\tau = \phi(\sigma'\tau)$.
    \end{enumerate}
Thus, $\phi=\phi''\phi'$. I.e., $\stutter{(fp)}\phi=fp=f\stutter p\phi'=\stutter{(f\stutter p)} \phi'' \phi'=\stutter{(f\stutter p)} \phi$. Moreover, the function $\phi$ is an epi; therefore, we conclude that $\stutter{(fp)}=\stutter{(f\stutter p)}$.
\end{proof}

\begin{theorem}[Theorem~\ref{thm:spath-functor}]
  The mapping $\spaths f$ given in \eqref{eq:spaths} is well defined and is an endofunctor on the category of sets.
\end{theorem}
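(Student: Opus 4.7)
The plan is to verify well-definedness first, and then check the two functor axioms, which should reduce to routine calculations once well-definedness is in place.

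For well-definedness, suppose $p,q \in \paths X$ with $p \sim q$, i.e., $\stutter{p} = \stutter{q}$. I need to show $f\circ p \sim f\circ q$, i.e., $\stutter{(f\circ p)} = \stutter{(f\circ q)}$. By Lemma~\ref{lemma:stutt-preserve} applied to $p$, we have $f\circ \stutter{p} \sim f\circ p$, i.e., $\stutter{(f\circ \stutter{p})} = \stutter{(f\circ p)}$; similarly $\stutter{(f\circ \stutter{q})} = \stutter{(f\circ q)}$. Since $\stutter{p} = \stutter{q}$ as functions, certainly $f\circ \stutter{p} = f\circ \stutter{q}$, and hence their stutter invariant paths coincide. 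Chaining these three equalities yields $\stutter{(f\circ p)} = \stutter{(f\circ q)}$, which is exactly $f\circ p \sim f\circ q$. Thus the mapping on equivalence classes is independent of representatives.

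For functoriality, I would verify the two axioms by direct computation on representatives. Preservation of identities: $\spaths{\id{X}}[p]_\sim = [\id{X}\circ p]_\sim = [p]_\sim = \id{\spaths X}[p]_\sim$. Preservation of composition: for $X\rTo^f Y\rTo^g Z$ and any $p\in \paths X$,
\[
\spaths{(g\circ f)}[p]_\sim = [(g\circ f)\circ p]_\sim = [g\circ (f\circ p)]_\sim = \spaths{g}[f\circ p]_\sim = \spaths{g}\bigl(\spaths{f}[p]_\sim\bigr),
\]
so $\spaths{(g\circ f)} = \spaths{g}\circ \spaths{f}$.

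The only conceptually non-trivial step is well-definedness, and even that is essentially immediate once Lemma~\ref{lemma:stutt-preserve} is in hand. The real content of the functoriality statement has therefore already been packaged into that lemma (which in turn rests on the uniqueness of stutter bases established in Theorem~\ref{thm:stutter-unique}). I do not anticipate any genuine obstacle here; the main point to be careful about is that $\stutter{p} = \stutter{q}$ is an equality of functions whose common domain $\phi(\dom p) = \phi(\dom q)$ is determined by the shared stutter basis, so postcomposition with $f$ really does yield two equal functions.
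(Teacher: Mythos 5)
Your proposal is correct and follows essentially the same route as the paper's proof: well-definedness is obtained by chaining $\stutter{(f\circ p)}=\stutter{(f\circ\stutter p)}=\stutter{(f\circ\stutter q)}=\stutter{(f\circ q)}$ via Lemma~\ref{lemma:stutt-preserve}, and the two functor axioms are the same routine computations on representatives. The only difference is cosmetic ordering of the steps in the well-definedness argument.
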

\begin{proof}
  We first show that the definition given in \eqref{eq:spaths} is independent of the chosen representative. I.e., If $[p]_\sim=[p']_\sim$ then $\spaths f [p]_\sim = \spaths f[p']_\sim$. Let $p\sim p'$. Then,
  $\stutter p =\stutter {p'} \implies f\circ \stutter p = f\circ \stutter{p'} \implies \stutter{(f\circ \stutter p)} =\stutter{(f\circ \stutter {p'})}.$
  Furthermore, using Lemma~\ref{lemma:stutt-preserve} we get $\stutter{(f\circ p)}=\stutter{(f\circ p')}$. Thus, $f\circ p \sim f \circ p'$.

  The mapping $\tspaths$ preserves identity:
  \[\spaths {\id X}[p]_\sim=[\id{ } \circ p]_\sim=[p]_\sim =\id{\spaths {X}}[p]_\sim.\]
  The mapping $\tspaths$ preserves composition. Let $X \rTo^f Y$ and $Y \rTo^g Z$. Then,
  \begin{align*}
    \spaths{g}\circ \spaths f[p]_\sim =&\ \spaths{g} ([f \circ p]_\sim)\\
    =&\ [g\circ f \circ p]_\sim\\
    =&\ \spaths{g\circ f}[p]_\sim. &&\qedhere
  \end{align*}
\end{proof}

\begin{lemma}[Lemma~\ref{blemma}]
  Let $p\in\paths X,q\in \paths Y$, and $X \rTo^f Y$.
  \begin{enumerate}
    \item If $fp\sim q$ and $\trace q \in \tau^*a\tau^*$ then $q(\eseq{})= fp(\eseq{})\land \trace p \in \tau^*a \tau^*$.
    \item $\last p = \last {\stutter p}$.
    \item If $fp \sim q$, then $f(\last p)=\last q$.
    \item $\pi_Y \circ \paths f = \spaths f \circ \pi_X$.
    \item Let $V_i\subseteq \spaths Y$ (for $i\in I$) be a family of pairwise disjoint sets. Then,
  \[\inv{\spaths f} \Big(\bigcup_{i\in I} V_i\Big) \ = \ \bigcup_{i\in I} \inv {\spaths f} (V_i)\enspace.\]
  \end{enumerate}
\end{lemma}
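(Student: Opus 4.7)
The plan is to treat the five items in the order (2), (3), (1), (4)--(5), since (3) relies on (2) and the structural insight needed for the second half of (1) is easier to articulate after (2).

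For (2), I would first locate the trace of $\stutter p$ via its stutter basis $\phi$. Lemma~\ref{lemma:stutt-hpres} gives $\dom{\stutter p} = \phi(\dom p) = \history{\phi(\trace p)}$, so $\trace{\stutter p} = \phi(\trace p)$, and then the defining equation $\stutter p \circ \phi = p$ yields $\last{\stutter p} = \stutter p(\phi(\trace p)) = p(\trace p) = \last p$. Item (3) now follows immediately: the hypothesis $fp \sim q$ means $\stutter{fp} = \stutter q$, Proposition~\ref{prop:trace-last-pres} gives $f(\last p) = \last{fp}$, and applying (2) on both ends produces $f(\last p) = \last{\stutter{fp}} = \last{\stutter q} = \last q$.

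For (1), the first conjunct is a direct evaluation: every stutter basis satisfies $\phi(\eseq{}) = \eseq{}$ by construction, and $\stutter p \circ \phi = p$ forces $\stutter p(\eseq{}) = p(\eseq{})$ for any path $p$, so $q(\eseq{}) = \stutter q(\eseq{}) = \stutter{fp}(\eseq{}) = fp(\eseq{})$. The second conjunct rests on the structural observation that a stutter basis (Definition~\ref{def:stutter}) acts as the identity on letters from $\act$ and only deletes those $\tau$'s that correspond to self-loops; in particular it preserves the number of non-$\tau$ letters of its argument. Letting $\phi_q,\phi_{fp}$ denote the respective stutter bases, the argument from (2) gives $\phi_q(\trace q) = \trace{\stutter q} = \trace{\stutter{fp}} = \phi_{fp}(\trace{fp})$. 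Since $\trace q \in \tau^*a\tau^*$ has a single non-$\tau$ letter $a$, so does $\phi_q(\trace q)$, hence so does $\phi_{fp}(\trace{fp})$, and therefore so does $\trace{fp}$; combining with $\trace p = \trace{fp}$ from Proposition~\ref{prop:trace-last-pres} yields $\trace p \in \tau^*a\tau^*$.

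Items (4) and (5) are essentially formal: (4) reduces to $\pi_Y(\paths f(p)) = [f \circ p]_\sim = \spaths f([p]_\sim) = \spaths f(\pi_X(p))$ by the definition of $\spaths f$ in \eqref{eq:spaths}, and (5) is the standard set-theoretic fact that preimages commute with arbitrary unions (the pairwise disjointness hypothesis additionally gives that the resulting preimages remain pairwise disjoint, presumably for use downstream). The only mildly delicate step is the second conjunct of (1), where one must verify that a word of shape $\tau^*a\tau^*$ in the codomain of a stutter basis pulls back to a word of the same shape; this is clean because the only rule of Definition~\ref{def:stutter} that shortens its input is the deletion of a $\tau$-self-loop, so the preimage differs from the image by at most additional $\tau$'s.
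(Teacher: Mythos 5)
Your proposal is correct and follows essentially the same route as the paper's proof: item (2) via $\stutter p\circ\phi=p$ together with Lemma~\ref{lemma:stutt-hpres}, item (3) by chaining (2) with Proposition~\ref{prop:trace-last-pres}, item (1) by evaluating at $\eseq{}$ and observing that a stutter basis preserves non-$\tau$ letters so that the shape $\tau^*a\tau^*$ pulls back, and items (4)--(5) as formal consequences of the definitions. Your remark that the disjointness hypothesis in (5) is not needed for the equality itself (only for the disjointness of the preimages) is accurate and consistent with how the paper uses it.
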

\begin{proof}
  \begin{enumerate}
    \item For any paths $p\in\paths X,q\in \paths Y$ with $X \rTo^f Y$ and $fp\sim q$, the following properties hold.
          \begin{enumerate}
            \item $q(\eseq{})= fp(\eseq{})$.
            \item If $\trace q \in \tau^*a\tau^*$ then $\trace p \in \tau^*a \tau^*$.
          \end{enumerate}
        For (a), we know from the the construction of the stutter invariant to $q$: $q(\eseq{})=\stutter q(\eseq{})$. Thus, we have $q(\eseq{})=\stutter q(\eseq{})=\stutter{(fp)}(\eseq{})=fp(\eseq{})$.

        For (b), let $\trace q\in\tau^*a \tau^*$. Then, it suffices to show that $\trace {\stutter q}\in\tau^*a\tau^*$. So let $\dom q=\history \tau^n a\tau^m$, for some $n,m\geq 0$ and let $\phi$ be the stutter basis for $q$. Then, by the construction of $\phi$ we have $\phi(\tau^n a\tau^m)=\tau^{n'} a\tau^{m'}$ with $n'\leq n$ and $m'\leq m$. Thus, $\trace {\stutter q}\in\tau^*a\tau^*$. Therefore, we have $\trace {\stutter{(fp)}} = \trace {\stutter q} \in \tau^*a\tau^*$. And since $\trace p = \trace {(fp)}$ (Proposition~\ref{prop:trace-last-pres}) we conclude that $\trace{fp}\in\tau^*a\tau^* \implies \trace p \in \tau^*a\tau^*$.
    \item Let $\dom p= \history \sigma$. Then, we derive
          \begin{align*}
            &\ p(\sigma) = \stutter p \phi(\sigma) && (\because p=\stutter p \circ \phi )\\
            \iff &\ p \max \history \sigma = \stutter p \max \history {(\phi \sigma)} && (\because \sigma = \max \history \sigma )\\
            \iff &\ \last p = \stutter p \max \phi (\history \sigma)  && (\because \history {\phi(\sigma)}=\phi (\history\sigma))\\
            \iff &\ \last p = \last {\stutter p}. 
          \end{align*}
    \item Let $p\in\paths X$ and $q\in\paths Y$ with $\stutter{(fp)} =\stutter q$. Then, we have
          \[\last q = \last {\stutter q }= \last{\stutter{(fp)}} = \last {fp} = f \last p.  \]
    \item Trivial.
    \item Let $\qset{[p]}\in\inv {\spaths f} (\bigcup_{i\in I} V_i)$. I.e., there is a unique $j\in I$ (because of disjointness) with $\qset{[fp]}\in V_i$. Thus, $\qset{[p]}\in\bigcup_{i\in I}\inv{\spaths f}( V_i)$. The other direction, i.e., the case $\bigcup_{i\in I} \inv {\spaths f} (V_i) \subseteq \inv{\spaths f} \Big(\bigcup_{i\in I} V_i\Big) $ is trivial.\qedhere
  \end{enumerate}
\end{proof}

\section{Proof concerning Section~\ref{sec:lts}}

\begin{theorem}[Theorem~\ref{thm:bisim}]\label{athm:bisim}
  Let $(X,\tact,\rightarrow)$ be a labelled transition system and $(X,\pi_X\circ \alpha)$ be the corresponding $\mcal P\circ \tspaths$-coalgebra. Then, two states $x,x'\in X$ are branching bisimilar if and only if the states $x,x'$ are $\mcal P\circ \tspaths$-behaviourally equivalent.
\end{theorem}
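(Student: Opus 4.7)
The plan is to prove both implications by exhibiting a canonical coalgebra homomorphism for one direction and a canonical bisimulation for the other, with the brunt of the work being the backward direction that the main text defers to the appendix.

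\textbf{Forward direction.} I would follow the route sketched in the excerpt: take $R$ to be the largest branching bisimulation on $(X,\tact,\rightarrow)$, which is known to be an equivalence relation, and let $X \rTo^f X/R$ be the quotient map. Define $X/R \rTo^\beta \mcal P\spaths{X/R}$ by $\beta(f(x)) = \{\spaths f(\pi_X p) \mid p \in \alpha(x)\}$. The pivotal step is well-definedness: if $x R x'$, then for every $p \in \alpha(x)$ there should exist $p' \in \alpha(x')$ with $f \circ p \sim f \circ p'$. I would prove this by induction on $|\dom p|$, extending $p'$ one step at a time and using the transfer property of branching bisimulation together with the stuttering lemma of \cite{bbisim:1996}. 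The stuttering lemma is essential: it guarantees that every intermediate state on a $\tau^*$-portion used to match an $a$-step from $x$ remains $R$-related to the source, which is what turns a sequence of weak-simulation steps into an $f$-image that is genuinely stutter-equivalent to $f \circ p$ rather than merely trace-matching. Once well-definedness is established, the homomorphism equation $\beta \circ f = \mcal P\spaths f \circ \alpha$ holds by construction.

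\textbf{Backward direction.} Given a $\mcal P\circ\tspaths$-coalgebra $(Y,\beta)$ and a homomorphism $X \rTo^f Y$, I would let $R = \{(x,x') \mid f(x) = f(x')\}$ and show $R$ is a branching bisimulation. This $R$ is automatically an equivalence, and it witnesses behavioural equivalence whenever $f(x) = f(x')$. For the transfer property, suppose $x R x'$ and $x \step a y$. The execution $\langle x\ a\ y\rangle$ lies in $\alpha(x)$, so the homomorphism equation $\mcal P\spaths f \circ \alpha = \beta \circ f$ delivers some $p' \in \alpha(x')$ with $f \circ p' \sim \langle f(x)\ a\ f(y)\rangle$. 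I would then read off the structure of $p'$ via a case analysis driven by Definition~\ref{def:stutter}: if $a \in \act$, or if $a = \tau$ with $f(x) \neq f(y)$, then $p'$ must take the shape $x' \step{\tau} x_1 \step{\tau} \cdots \step{\tau} x_n \step{a} y'$ with $f(x_i) = f(x)$ for every $i$ and $f(y') = f(y)$, yielding $x' \steps{\eseq{}} x_n \step a y'$ with $x R x_n$ and $y R y'$; if $a = \tau$ and $f(x) = f(y)$, the stutter invariant of $f \circ p$ collapses to $\langle f(x)\rangle$, so $p' = \langle x'\rangle$ is enough and $y R x'$ follows directly.

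\textbf{Main obstacle.} The delicate point is the backward direction: from $f \circ p' \sim \langle f(x)\ a\ f(y)\rangle$ one must conclude that \emph{each} intermediate state $x_i$ on $p'$ satisfies $f(x_i) = f(x)$, and not merely that the endpoints of the $f$-images coincide. This requires unwinding Definition~\ref{def:stutter} to observe that the only way the stutter basis of $f \circ p'$ can collapse the $\tau$-step from $x_i$ to $x_{i+1}$ is if $f(x_i) = f(x_{i+1})$, and then propagating this equality inductively along the $\tau$-prefix back to $f(x)$. Essentially the same bookkeeping is what makes the forward induction go through, once the stuttering lemma is invoked to ensure that intermediate states matching an $a$-step remain $R$-related to the source.
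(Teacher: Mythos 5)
Your proposal is correct and follows essentially the same route as the paper: the forward direction quotients by the largest branching bisimulation and proves well-definedness of $\beta$ by induction on $\dom p$ using the transfer property plus the stuttering lemma, and the backward direction takes $R=\ker f$, feeds the one-step execution through the homomorphism equation, and unwinds Definition~\ref{def:stutter} to conclude that every intermediate state of the matching execution is mapped to $f(x)$ (the paper's Equation~(II.1)-style argument in Theorem~\ref{athm:bisim}). Your explicit case split for $a=\tau$ is in fact slightly more complete than the appendix proof, which only treats $a\in\act$.
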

\begin{proof}
Let $(Y,\beta)$ be a $\mcal P\circ \tspaths$-coalgebra and let $X \rTo^f Y$ be a $\mcal P\circ \tspaths$-coalgebra homomorphism. Define a relation $x R x' \iff f(x)=f(x')$.
  Let $x_1 \step a x_2$ and $x_1 R x_1'$. Below, we only consider $a\in\act$. Define a path $\{\eseq{},a\} \rTo^p X$ as $p(\eseq{})=x_1$ and $p(a)=x_2$. Then, by the construction of $\alpha$ we have $p\in \alpha(x)$. Furthermore,
  {\allowdisplaybreaks
  \begin{align*}
    &\ \qset{[fp]} \in  \mcal P(\spaths f)\circ \pi_X\circ\alpha(x)\\
    &\ \implies  \qset{[fp]}\in \beta(f(x)) && (\because \mcal P( \spaths f)\circ \pi_X \circ \alpha=\beta\circ f)\\
    &\ \implies \qset{[fp]}\in \beta(f(x')) && (\because f(x)=f(x'))\\
    &\ \implies \qset{[fp]}\in \mcal P(\spaths f)\circ\pi_X\circ \alpha(x'). && (\because \mcal P( \spaths f)\circ \pi_X \circ \alpha=\beta\circ f)
  \end{align*}
  }
  I.e., there is a path $p'\in\exec{x_1'}$ such that $fp\sim fp'$.
  Note that $\stutter{(fp)}=fp$ since $p=\stutter p$ and $p$ is a one-step execution with observation $a\neq\tau$. Let $\phi$ be the stutter basis for $p'$. Next, we claim that
  \begin{equation}\label{eq-2}
    fp'(\sigma) = fp\phi(\sigma), \qquad \text{for any}\ \sigma\in\dom {p'}.
  \end{equation}
  To see this, let $\sigma \in\dom {p'}$. Then, $fp'(\sigma)=\stutter{(fp')}\phi(\sigma)=\stutter{(fp)}\phi(\sigma)=
  fp\phi(\sigma)$. Notice that $\stutter{(fp')}=\stutter{(fp)}$ because $fp\sim fp'$. This proves the above claim.

  Using Lemma~\ref{blemma:stutt-pres-laststate} and construction of $\alpha$ we get $\trace {p'}\in\tau^* a$ since $\trace p=a$. So let $\dom {p'}=\tau^m a$, for some $m\geq 0$. Next, we claim that for any $\sigma\in\history \tau^m$ we have $fp'(\sigma)=f(x)$. So, let $\sigma\in\history \tau^m$. Then, using \eqref{eq-2} we get $fp'(\sigma)= fp\phi(\sigma)$. Clearly, by the construction of $\phi$ we have $\phi(\sigma)\in\tau^{m'}$ (for some $m'\leq m$). But, $\dom p=\{\eseq{},a\}$, so $\phi(\sigma)=\eseq{}$. Thus, $fp'(\sigma)=fp(\eseq{})=f(x)$. I.e.,
  $x_1' \steps{\eseq{}} p'(\tau^m)$ and $fp'(\tau^m)=f(x_1).$
  Moreover, using \eqref{eq-2} we also have
  $
  fp'(\tau^m a)=fp\phi(\tau^m a)=fp\phi(\tau^m) a =fp(a)=f(x_2).$
  Thus, $x'\steps{\eseq{}}p'(\tau^m)\step{a}p'(\tau^m a)$ and $fp'(\tau^m a)=f(x_2)$. And construction of the relation $R$ gives $x_1' \ R \ p'(\tau^m)$ and $x_2 \ R \ p'(\tau^m a)$.
\end{proof}

\section{Proofs concerning Section~\ref{sec:pmeas-paths}}

\begin{lemma}[Lemma~\ref{lemma:stutt-hreflect}]
  Let $X$ be a set. Then we have the following property:
  \[
  \forall_{p_1,p_2\in\paths X}\ \stutter p_1 \preceq \stutter p_2 \implies \exists_{p\in\paths X}\ p \sim p_1 \land p\preceq p_2.
  \]
\end{lemma}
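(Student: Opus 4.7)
The strategy is to locate the point in $p_2$ whose image under the stutter basis of $p_2$ is exactly the endpoint of $\stutter{p_1}$, and then take $p$ to be $p_2$ restricted to the history of that point.

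Concretely, let $\phi_1$ and $\phi_2$ be the stutter bases of $p_1$ and $p_2$, let $\sigma_1=\max\dom{p_1}$ and $\sigma_2=\max\dom{p_2}$, so that $\dom{\stutter{p_1}}=\history{\phi_1(\sigma_1)}$ and $\dom{\stutter{p_2}}=\history{\phi_2(\sigma_2)}$ by Lemma~\ref{lemma:stutt-hpres}. The hypothesis $\stutter{p_1}\preceq \stutter{p_2}$ gives $\phi_1(\sigma_1)\preceq \phi_2(\sigma_2)$, so $\phi_1(\sigma_1)\in\history{\phi_2(\sigma_2)}=\phi_2(\history{\sigma_2})$. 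Now I will invoke the backward-simulation property established inside the proof of Lemma~\ref{lemma:stutt-hpres}: every element of $\history{\phi_2(\sigma_2)}$ is hit by some prefix of $\sigma_2$ under $\phi_2$. Apply this to obtain $\sigma\preceq\sigma_2$ with $\phi_2(\sigma)=\phi_1(\sigma_1)$, and define $p:=p_2|_{\history{\sigma}}$. By construction $p\preceq p_2$, so it only remains to verify $p\sim p_1$, i.e.\ $\stutter{p}=\stutter{p_1}$.

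For this I would first observe that the stutter basis of the restricted path $p=p_2|_{\history{\sigma}}$ is simply $\phi_2|_{\history{\sigma}}$, because the inductive clauses of Definition~\ref{def:stutter} are local (they depend only on $p(\sigma'\tau)$ vs.\ $p(\sigma')$ and $\phi(\sigma')$) and thus coincide with those defining $\phi_2$ on $\history{\sigma}$. Applying Lemma~\ref{lemma:stutt-hpres} again, $\dom{\stutter{p}}=\phi_2(\history{\sigma})=\history{\phi_2(\sigma)}=\history{\phi_1(\sigma_1)}=\dom{\stutter{p_1}}$, so at least the two candidate paths share a domain.

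For the pointwise equality, note that for every $\tau'\in\history{\sigma}$
\[
\stutter{p}(\phi_2(\tau'))\ =\ p(\tau')\ =\ p_2(\tau')\ =\ \stutter{p_2}(\phi_2(\tau')),
\]
using $\stutter{p}\circ\phi_2|_{\history{\sigma}}=p$ and $\stutter{p_2}\circ\phi_2=p_2$. Hence $\stutter{p}$ and $\stutter{p_2}$ agree on $\history{\phi_2(\sigma)}=\history{\phi_1(\sigma_1)}$. But by the prefix hypothesis $\stutter{p_1}\preceq \stutter{p_2}$, the restriction of $\stutter{p_2}$ to $\history{\phi_1(\sigma_1)}$ is precisely $\stutter{p_1}$. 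Combining these, $\stutter{p}=\stutter{p_1}$, which establishes $p\sim p_1$ and completes the argument.

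The main technical friction I expect is the little lemma tucked into step two: that restricting a path commutes with taking the stutter basis. It is intuitively obvious because the definition of $\phi$ is built up prefix by prefix, but it needs a short induction on $\dom{p_2}$ (or an appeal to the uniqueness clause of Theorem~\ref{thm:stutter-unique}) to be rigorous. Once that is in hand, the rest is just bookkeeping using Lemma~\ref{lemma:stutt-hpres}.
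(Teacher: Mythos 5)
Your proof is correct and follows essentially the same route as the paper's: both locate a pre-image $\sigma\preceq\max\dom{p_2}$ of $\max\dom{\stutter{p_1}}$ under the stutter basis of $p_2$, take $p=p_2|_{\history{\sigma}}$, and observe that the stutter basis restricts along with the path. Your version merely spells out the pointwise verification that the paper leaves implicit.
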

\begin{proof}
  Assume $\stutter{p_1}\leq \stutter{p_2}$, then by definition
$\stutter{p_2}\mid_{\dom{\stutter{p_1}}}=\stutter{p_1}$. Observe
that due to the structure of paths, it follows that there exists a word
$w\in\dom{\stutter{p_2}}$ such that
$\stutter{p_2}\mid_{\dom{\stutter{p_1}}}=\stutter{p_2}\mid_{\history\sigma}$,
in fact it is true that $\sigma=\max \dom{\stutter{p_1}}$. We will show
that then there must also be a word $\sigma'\in\dom{p_2}$ such that
$\stutter{(p_2\mid_{\history{\sigma'}})}=\stutter{p_2}\mid_{\history \sigma}$,
i.e. $\stutter{(p_2\mid_{\history{\sigma'}})}=\stutter{p_1}$, thus
$p_2\mid_{\history{\sigma'}}\sim p_1$ and, by definition,
$p_2\mid_{\history{\sigma'}}\leq p_2$ as required.

It remains to show that whenever for a given path $p$, we choose any
$\sigma\in\dom{\stutter{p}}$, then there exists a word
$\sigma'\in\dom p$, such that
$\stutter{(p\mid_{\history{\sigma'}})}=\stutter{p}\mid_{\history{\sigma}}$.
This follows from the definition of a stutter basis; just take $\sigma'$ as a
pre-image of the stutter basis $\phi$ for $p$ of $\sigma$ (this need not be
unique). The result is clear by choosing the stutter basis
$\phi\mid_{\history{\sigma'}}$ for $p\mid_{\history{\sigma'}}$.
\end{proof}

\begin{theorem}[Theorem~\ref{thm:historypres}]
  For any $X \rTo^f Y$, the function $\paths f$ is strictly order preserving. Moreover, $\paths f$ also satisfies the following property:
  \[\forall_{p\in\paths X,q\in\paths Y}\ q \prec f(p) \implies \exists_{p'\in\paths X}\ p' \prec p \land f(p')=q\enspace.\]
\end{theorem}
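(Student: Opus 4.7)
I plan to prove the two properties by direct unfolding of the definition of the order $\preceq$ on paths, which states that $p \preceq p'$ iff $\dom p \subseteq \dom{p'}$ and $p$ and $p'$ agree on $\dom p$.

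For strict order preservation, suppose $p \prec p'$ in $\paths X$, so $\dom p \subsetneq \dom{p'}$ and $p'(\sigma)=p(\sigma)$ for every $\sigma\in\dom p$. Since $\paths f$ acts by postcomposition, I have $\dom{\paths f(p)}=\dom p$ and $\dom{\paths f(p')}=\dom{p'}$, so the domain inclusion transfers unchanged; pointwise agreement is preserved because $(f\circ p')(\sigma)=f(p'(\sigma))=f(p(\sigma))=(f\circ p)(\sigma)$ for every $\sigma\in\dom p$. Hence $\paths f(p)\preceq \paths f(p')$, and the strict inclusion of domains forces $\paths f(p)\prec \paths f(p')$.

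For the ``lifting'' property, fix $p\in\paths X$ and $q\in\paths Y$ with $q\prec f\circ p$. Then $\dom q\subsetneq \dom{f\circ p}=\dom p$ and $q(\sigma)=(f\circ p)(\sigma)=f(p(\sigma))$ for every $\sigma\in\dom q$. My candidate for the preimage is the restriction $p':=p\mid_{\dom q}$. Because $q$ is a path, $\dom q$ is the set of prefixes $\history \sigma$ for some $\sigma\in\tact^*$, and in particular it is prefix closed, so $p'$ is indeed a well-defined path on $X$. By construction $\dom{p'}=\dom q\subsetneq \dom p$ and $p'$ agrees with $p$ on $\dom{p'}$, giving $p'\prec p$. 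For every $\sigma\in\dom q$ we then compute $(f\circ p')(\sigma)=f(p(\sigma))=q(\sigma)$, so $f\circ p'=q$ as required.

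There is no real technical obstacle here; the only point worth flagging is the verification that $p\mid_{\dom q}$ qualifies as a path in the sense of Definition~\ref{def:path}, which in turn reduces to observing that the domain of any path is prefix closed. Everything else is a direct application of the set-theoretic definition of $\preceq$ and of the functorial action $\paths f(p)=f\circ p$.
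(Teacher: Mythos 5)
Your proof is correct and follows essentially the same route as the paper's: strict order preservation by unfolding the definition of $\preceq$ under postcomposition, and the lifting property by taking $p'$ to be the restriction of $p$ to $\dom q$ (the paper defines the same function pointwise rather than calling it a restriction) and checking prefix-closedness of $\dom q$. Nothing further is needed.
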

\begin{proof}
  Let $p,p'\in\paths X$ with $p\prec p'$. It suffices to show that $fp \prec fp'$. Clearly, $\dom {fp}=\dom p \subset \dom p' = \dom{fp'}$. Moreover, for any $\sigma\in\dom {fp}$ we have $fp(\sigma)=f(p(\sigma))=f(p'(\sigma))=fp'(\sigma)$ Thus, $fp \prec fp'$.

  For the other property, let $q \prec fp$, for some $p\in\paths X,q\in \paths Y$. Define a function $\dom q \rTo^{p'} X$ as follows: $\forall_{\sigma\in\dom q}\ p'(\sigma)=p(\sigma)$. Clearly, $\dom q$ is a prefix-closed set due to the construction of $q$. Therefore, we just need to verify that $p'\preceq p$. Clearly, $\dom {p'}=\dom {q}\subset \dom {fp}=\dom p$. Thus, by construction of $p'$ we have $p'\prec p$. Lastly, $fp'=q$ follows from $\dom {fp'}=\dom q$ and for any $\sigma\in \dom q$ we have $fp'(\sigma)=fp(\sigma)=q(\sigma)$ (since $q\prec fp$).
\end{proof}

\begin{lemma}[Lemma~\ref{lemma:prec-qset-welldef}]
  The relation $\preceq$ on the set $\spaths X$ is a well-defined partial order.
\end{lemma}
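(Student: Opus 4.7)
\medskip

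The plan is to transfer the partial-order structure on $\paths X$ across the representative-choosing map $[p]_\sim \mapsto \stutter p$, exploiting the fact that $p \sim p'$ is \emph{defined} as $\stutter p = \stutter{p'}$. Concretely, I would first observe that well-definedness of the proposed relation is almost immediate: if $[p]_\sim = [p']_\sim$ and $[q]_\sim = [q']_\sim$, then by definition $\stutter p = \stutter{p'}$ and $\stutter q = \stutter{q'}$, so $\stutter p \preceq \stutter q$ holds iff $\stutter{p'} \preceq \stutter{q'}$ holds. Hence the condition is independent of the chosen representatives.

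For the partial-order axioms, I would check each by reducing to the known partial order on $\paths X$. Reflexivity of $\preceq$ on $\spaths X$ follows from $\stutter p \preceq \stutter p$. Transitivity follows directly from transitivity of $\preceq$ on $\paths X$ applied to the stutter invariants. For antisymmetry, if $\qset{[p]} \preceq \qset{[q]}$ and $\qset{[q]} \preceq \qset{[p]}$, then $\stutter p \preceq \stutter q$ and $\stutter q \preceq \stutter p$, so antisymmetry of $\preceq$ on $\paths X$ yields $\stutter p = \stutter q$, which is exactly $p \sim q$, i.e.\ $\qset{[p]} = \qset{[q]}$.

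The remaining claim is that $\preceq$ on $\spaths X$ is a prefix order, i.e.\ every principal ideal $\history{\qset{[p]}}$ is totally ordered. Given two elements $\qset{[q_1]}, \qset{[q_2]} \in \history{\qset{[p]}}$, we have $\stutter{q_1} \preceq \stutter p$ and $\stutter{q_2} \preceq \stutter p$, so both lie in $\history{\stutter p}$. Since $\paths X$ is itself a prefix order (shown earlier in the paper), $\history{\stutter p}$ is totally ordered, giving either $\stutter{q_1} \preceq \stutter{q_2}$ or $\stutter{q_2} \preceq \stutter{q_1}$, which by definition translates to $\qset{[q_1]} \preceq \qset{[q_2]}$ or $\qset{[q_2]} \preceq \qset{[q_1]}$.

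No step presents a real obstacle; the entire proof is a straightforward transfer along the canonical representative map $[p]_\sim \mapsto \stutter p$. The only mild subtlety is making explicit that the equivalence $p \sim q$ collapses precisely to equality of stutter invariants, which is what makes well-definedness and antisymmetry align without any extra argument.
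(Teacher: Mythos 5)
Your proof is correct and follows essentially the same route as the paper's: well-definedness by substituting representatives via the defining equivalence $\stutter p = \stutter{p'}$, and the three partial-order axioms transferred from $\preceq$ on $\paths X$, with antisymmetry collapsing to $\stutter p = \stutter q$ and hence $\qset{[p]}=\qset{[q]}$. Your additional paragraph on the prefix-order property (totally ordered principal ideals) goes beyond the stated claim but is also a correct transfer argument.
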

\begin{proof}
  We first prove that the relation $\preceq$ is well defined.
  Let $\qset{[p_1]}=\qset{[p_2]}$ and $\qset{[q_1]}=\qset{[q_2]}$. Then,
  \[\qset{[p_1]} \preceq \qset{[q_1]} \iff \stutter p_1 \preceq \stutter q_1 \iff \stutter p_2 \preceq \stutter q_2  \iff \qset{[p_2]} \preceq \qset{[q_2]}\enspace.\]
  Reflexivity is obvious. For antisymmetry, let $\qset{[p]} \preceq \qset{[q]}$ and $\qset{[q]}\preceq \qset{[p]}$. Then, we find $\stutter p \preceq \stutter q$ and $\stutter q \preceq \stutter p$. I.e., $\stutter p=\stutter q$. Therefore, $\qset {[p]} = \qset {[q]}$.
  For transitivity, let $\qset{[p_1]}\preceq \qset{[p_2]}$ and $\qset{[p_2]}\preceq \qset{[p_3]}$. Then, $\stutter p_1 \preceq \stutter p_2$ and $\stutter p_2 \preceq \stutter p_3$. Thus, $\stutter p_1 \preceq \stutter p_3$; hence, $\qset{[p_1]}\preceq \qset{[p_3]}$.
\end{proof}

\begin{theorem}[Theorem~\ref{thm:quotientmea}]
  The quotient function $\paths X \rTo^\pi \spaths X$ is order preserving. Consequently, the quotient function $\paths X \rTo^\pi \spaths X$ is Borel measurable, where the sigma-algebra on paths is given by $\Sigma_{\spaths X}=\mcal B(\mcal O_{\spaths X})$.
\end{theorem}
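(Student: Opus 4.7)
The plan is to reduce the two assertions to a single structural fact about the stutter basis, namely that it is ``hereditary'' with respect to the prefix order on paths: if $p \preceq q$ in $\paths X$ with stutter bases $\phi_p$ and $\phi_q$ respectively, then $\phi_q|_{\dom p} = \phi_p$. From this, order preservation of $\pi$ follows directly, and measurability follows from the principle (invoked already for Corollary~\ref{cor:paths-measurable}) that an order-preserving map between Alexandroff spaces is continuous and hence Borel measurable.

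First, I would prove the structural fact by induction along $\dom p = \history\sigma$. The base case $\phi_q(\eseq{}) = \eseq{} = \phi_p(\eseq{})$ is immediate from clause (1) of Definition~\ref{def:stutter}. For the inductive step on a word $\sigma'a \in \dom p$, observe that since $p \preceq q$ we have $p(\sigma') = q(\sigma')$ and $p(\sigma'a) = q(\sigma'a)$; so exactly the same clause of Definition~\ref{def:stutter} fires for $\phi_q$ at $\sigma'a$ as fires for $\phi_p$. Applying the induction hypothesis $\phi_q(\sigma') = \phi_p(\sigma')$ then gives $\phi_q(\sigma'a) = \phi_p(\sigma'a)$ in each of the cases $a \in \act$, $a = \tau$ with $p(\sigma'\tau) = p(\sigma')$, and $a = \tau$ with $p(\sigma'\tau) \neq p(\sigma')$.

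With this in hand, I would derive $\stutter p \preceq \stutter q$ as follows. By Lemma~\ref{lemma:stutt-hpres}, $\dom \stutter p = \phi_p(\dom p) = \phi_q(\dom p) \subseteq \phi_q(\dom q) = \dom \stutter q$. For $\sigma \in \dom p$, using $\stutter p \circ \phi_p = p$, $\stutter q \circ \phi_q = q$, and $q|_{\dom p} = p$, we get $\stutter q(\phi_p(\sigma)) = \stutter q(\phi_q(\sigma)) = q(\sigma) = p(\sigma) = \stutter p(\phi_p(\sigma))$; since $\phi_p$ is surjective onto $\dom \stutter p$, this shows $\stutter q|_{\dom \stutter p} = \stutter p$, i.e.\ $\stutter p \preceq \stutter q$. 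By the definition of the order on $\spaths X$ this is exactly $\pi(p) \preceq \pi(q)$, establishing order preservation of $\pi$.

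The consequence on measurability is then immediate: since $\paths X$ and $\spaths X$ carry Alexandroff topologies whose open sets are the upward-closed sets, an order-preserving function is continuous (the inverse image of an upward-closed set under an order-preserving function is upward closed), and every continuous function between topological spaces is Borel measurable with respect to the Borel sigma-algebras $\Sigma_{\paths X}$ and $\Sigma_{\spaths X}$. The main obstacle is the inductive verification above, which relies on the fact that the three $\tau$-clauses of the stutter basis distinguish only on the values of the path at $\sigma'$ and $\sigma'\tau$; once one sees that $p \preceq q$ preserves precisely this local data, the rest of the argument is routine.
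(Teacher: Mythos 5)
Your proposal is correct and follows essentially the same route as the paper: the paper also reduces everything to showing that the stutter basis of $q$ restricted to $\dom p$ is a stutter basis for $p$ (invoking Theorem~\ref{thm:stutter-unique} where you instead run the induction directly), then uses the factorisation $\stutter p\circ\phi=p$ together with Lemma~\ref{lemma:stutt-hpres} to get $\stutter p\preceq\stutter q$, and concludes measurability from order preservation exactly as you do.
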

\begin{proof}
  Let $p\preceq q$ with $p,q\in\paths X$. Then, it suffices to show that $\stutter p \preceq \stutter q$. Let $\phi,\phi'$ be the stutter bases for the paths $p,q$, respectively. Without loss of generality, let $\max \dom q =\max \dom p a$, for some $a\in \tact$. Also, let $\dom p=\history\sigma$, for some $\sigma\in\tact^*$. For the function $q$, we write the restriction of $q$ to the sub-domain $\history\sigma$ as $q|_{\sigma}$. Note that $q |_{\sigma} = p$ (since $p\preceq q$). Next, we claim that the restriction $\phi'|_{\sigma}$ is a stutter basis for $p$. Let $\sigma'a\in\history\sigma$. Then we distinguish the following cases:
  \begin{itemize}
    \item Let $a=\tau$ and $p(\sigma'\tau)=p(\sigma')$. Then, $p(\sigma'\tau)=p(\sigma') \implies q(\sigma'\tau)=q(\sigma')$ (since $p\preceq q$). Thus, $\phi'(\sigma'\tau)=\phi'(\sigma')$, i.e., $\phi'|_{\sigma}(\sigma'\tau)=\phi'|_{\sigma}(\sigma')$.
    \item Let $a=\tau$ and $p(\sigma'\tau)\neq p(\sigma')$. Then, $q(\sigma'\tau)\neq q(\sigma')$. Thus, $\phi'(\sigma'\tau)=\phi'(\sigma')\tau$ which further implies that $\phi'|_{\sigma}(\sigma'\tau)=\phi'|_{\sigma}(\sigma')\tau$.
    \item Let $a\in\act$. Similar to the previous case.
  \end{itemize}
  Thus, $\phi'$ is a stutter basis for $p$ and from the uniqueness of stutter basis for $p$ (cf. Theorem~\ref{thm:stutter-unique}) we have $\phi'|_{\sigma}=\phi$. Therefore, $\phi(\sigma')=\phi'(\sigma')$, for all $\sigma'\in\history\sigma$. Furthermore, from the construction of $\phi'$ we have $\phi'(\sigma a)=\phi'(\sigma) a$ or $\phi'(\sigma \tau)=\phi'(\sigma)$. Thus, in either case, we have $\phi(\sigma)\preceq\phi'(\sigma a)$. Hence, $\dom {\stutter p} = \history {\phi(\sigma)} \subseteq \history {\phi'(\sigma a)}=\dom {\stutter q}$. Let $\sigma'\in\dom {\stutter p}$. Since $\phi$ is the stutter basis for $p$, we know that there is $\sigma''\in \dom p$ such that $\phi(\sigma'')=\sigma'$. Then,
  \begin{align*}
    \stutter q (\sigma')=&\ \stutter q \phi(\sigma'')\\
    =&\ \stutter q \phi'(\sigma'') & (\because \phi'|_{\sigma}=\phi)\\
    =&\ q (\sigma'')= p(\sigma'')=\stutter p \phi(\sigma'')=\stutter p (\sigma').
  \end{align*}
  Thus, $\stutter p \preceq \stutter q$, whenever $p\preceq q$.
\end{proof}

\begin{theorem}[Theorem~\ref{thm:spath-opres}]
  For any $X \rTo^f Y$, the function $\spaths f$ is order preserving. Thus, the function $\spaths f$ is Borel measurable.
\end{theorem}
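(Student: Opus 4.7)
The plan is to unfold the definition of $\preceq$ on $\spaths X$ and reduce to order preservation properties already established. Recall that for any $p,q\in\paths X$ we have $\qset{[p]}\preceq \qset{[q]}$ iff $\stutter p\preceq \stutter q$, and that $\spaths f\qset{[p]}=\qset{[f\circ p]}$. Therefore, to show $\spaths f$ is order preserving, I need to argue that $\stutter p\preceq\stutter q$ implies $\stutter{(f\circ p)}\preceq\stutter{(f\circ q)}$.

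First, I would apply Theorem~\ref{thm:historypres}, which states that $\paths f$ is order preserving on $\paths X$, to conclude from $\stutter p\preceq\stutter q$ that $f\circ\stutter p\preceq f\circ\stutter q$ in $\paths Y$. Second, I would invoke Theorem~\ref{thm:quotientmea}, which asserts that the quotient map $\pi_Y$ (equivalently, the passage $r\mapsto\stutter r$) is order preserving, to deduce $\stutter{(f\circ\stutter p)}\preceq\stutter{(f\circ\stutter q)}$. Third, by Lemma~\ref{lemma:stutt-preserve} we have $f\circ\stutter p\sim f\circ p$, so $\stutter{(f\circ\stutter p)}=\stutter{(f\circ p)}$, and symmetrically $\stutter{(f\circ\stutter q)}=\stutter{(f\circ q)}$. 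Substituting yields $\stutter{(f\circ p)}\preceq\stutter{(f\circ q)}$, which is precisely $\spaths f\qset{[p]}\preceq \spaths f\qset{[q]}$.

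For the second assertion, I would note that by Lemma~\ref{lemma:prec-qset-welldef} the set $\spaths X$ is a prefix order, hence carries a $T_0$ Alexandroff topology whose open sets are the upward closed subsets, and measurable sets are the Borel sets $\Sigma_{\spaths X}=\mcal B(\mcal O_{\spaths X})$. Since order preserving maps between Alexandroff spaces are continuous (the preimage of an upward closed set is upward closed) and continuous maps are Borel measurable, the first part of the theorem immediately gives Borel measurability of $\spaths f$.

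The proof is essentially a diagram chase; the only conceptual obstacle is identifying the right intermediate step, namely interposing $\paths f$ acting on the stutter invariant representatives $\stutter p,\stutter q$ and then using Lemma~\ref{lemma:stutt-preserve} to rewrite the result back in terms of $f\circ p$ and $f\circ q$. All other ingredients (order preservation of $\paths f$, order preservation of the quotient, the well-definedness of the quotient order) are already in place from earlier in the section.
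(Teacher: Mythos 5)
Your proof is correct and follows the same route as the paper's own argument: unfold the quotient order to $\stutter p\preceq\stutter q$, apply Theorem~\ref{thm:historypres} to get $f\circ\stutter p\preceq f\circ\stutter q$, push through the quotient via Theorem~\ref{thm:quotientmea}, and rewrite using Lemma~\ref{lemma:stutt-preserve}. The measurability argument via continuity of order preserving maps between Alexandroff spaces is likewise the one the paper uses.
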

\begin{proof}
  Let $p,q\in\paths X$ with $\qset{[p]} \preceq \qset{[q]}$. Since $\qset{[p]}\preceq \qset{[q]}$, we have $\stutter p \preceq \stutter q$. Moreover, from Theorem~\ref{thm:historypres} we know that $\paths f(\stutter p) \preceq \paths f(\stutter q)$. Thus, $f\circ(\stutter p) \preceq f\circ(\stutter q)$. And using Theorem~\ref{thm:quotientmea} we get $\stutter{(f\circ\stutter p)} \preceq \stutter{(f\circ\stutter q)}$. Lastly, Lemma~\ref{lemma:stutt-preserve} gives us $\stutter {(f\circ p)} \preceq \stutter{(f\circ q)}$; thus, $\qset{[f\circ p]} \preceq \qset{[f\circ q]}$, i.e., $\spaths f\qset{[p]}\preceq \spaths f \qset{[q]}$.
\end{proof}

\begin{proposition}[Proposition~\ref{prop:irr-downclosed}]
  A directed subset of a prefix order is always totally ordered. In addition, an irreducible downward closed subset of a prefix order is always totally ordered. 
\end{proposition}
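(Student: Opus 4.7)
For the first claim, let $D$ be a directed subset of a prefix order $(X,\preceq)$. Given any two elements $x,y\in D$, directedness yields a common upper bound $z\in D$ with $x\preceq z$ and $y\preceq z$. Both $x$ and $y$ therefore lie in the principal ideal $\history z$, which is totally ordered by the defining property of a prefix order. Hence $x$ and $y$ are comparable, and $D$ is totally ordered.

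For the second claim, let $C$ be an irreducible closed (i.e.\ downward closed, in the Alexandroff topology) subset of a prefix order. I would argue by contradiction: suppose $C$ contains two incomparable points $x$ and $y$. The idea is to exhibit a decomposition $C=C_1\cup C_2$ into two strictly smaller closed subsets, contradicting irreducibility. Define
\[ C_1 \;=\; C\setminus\future x, \qquad C_2 \;=\; C\setminus\future y. \]
Since $\future x$ and $\future y$ are upward closed, removing them from the downward closed set $C$ yields downward closed sets again, so both $C_i$ are closed. Moreover $x\notin C_1$ and $y\notin C_2$, so both inclusions $C_i\subsetneq C$ are proper. By incomparability, $x\notin\future y$ and $y\notin\future x$, witnessing that $y\in C_1$ and $x\in C_2$, so neither $C_i$ is empty.

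The crucial step is verifying that $C_1\cup C_2=C$, i.e.\ that no $z\in C$ can simultaneously satisfy $z\succeq x$ and $z\succeq y$. Here the prefix-order axiom does the work: if such a $z$ existed, both $x$ and $y$ would belong to the principal ideal $\history z$, which is totally ordered, forcing $x$ and $y$ to be comparable and contradicting our assumption. Hence $C=C_1\cup C_2$ with both pieces properly contained in $C$, contradicting irreducibility. Therefore no two elements of $C$ are incomparable, and $C$ is totally ordered.

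The main (and essentially the only) subtlety is the last step: recognising that the prefix-order axiom not only totally orders downward cones of single points, but thereby also obstructs common upper bounds of incomparable pairs, which is precisely what drives the decomposition. The rest is routine verification that $\future x$ and $\future y$ behave well with respect to taking complements inside a downward closed set.
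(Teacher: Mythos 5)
Your proof is correct. The first claim is handled exactly as in the paper: take two elements of the directed set, pass to a common upper bound, and invoke total orderedness of its principal ideal. For the second claim you take a genuinely more self-contained route. The paper simply cites the well-known fact that an irreducible (Alexandroff-)closed subset of a poset is an ideal, hence directed, and then falls back on the first claim. You instead argue directly by contradiction: from an incomparable pair $x,y\in C$ you build the decomposition $C=(C\setminus\future{x})\cup(C\setminus\future{y})$ into two proper closed subsets, where the covering property $C\subseteq (C\setminus\future{x})\cup(C\setminus\future{y})$ is exactly the point at which the prefix-order axiom enters (a common upper bound $z$ of $x$ and $y$ would place both in the totally ordered ideal $\history{z}$). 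This is essentially the standard proof of the cited fact unwound and fused with the reduction to part one, so it buys self-containedness at the cost of a little more work; all the individual verifications (downward closedness of the $C_i$, properness, and the covering) check out.
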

\begin{proof}
  Let $D \subseteq \paths X$ be a directed subset of paths with $p,q\in D$ for $p\neq q$. Then, we find some $p'\in D$ such that $p\preceq p'$ and $q\preceq p'$, since $D$ is directed. And downward totality tells us $p\preceq q$ or $q\preceq p$. Thus, every directed set is totally order.

  Furthermore, it is well-known that every irreducible downward closed subset of a partial order is an ideal (see, e.g., \cite{Fink-Goub:2009:wsts}). Thus, in particular, every directed subset of paths is totally ordered.
\end{proof}

\begin{lemma}[Lemma~\ref{lemma:dtotal-lpaths}]
  The set $X^\infty$ is prefix ordered by the relation $\preceq '$, whenever $(X,\preceq)$ is a prefix order.
\end{lemma}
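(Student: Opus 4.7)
The plan is to verify the two defining conditions of a prefix order in turn: first that $\preceq'$ is a partial order on $X^\infty$, and second that each principal ideal in $(X^\infty,\preceq')$ is totally ordered. Reflexivity is immediate from the construction (the reflexive pairs $(\infty_C,\infty_C)$ are added explicitly, while the reflexive pairs on $X$ are inherited from $\preceq$). For antisymmetry, suppose $a \preceq' b$ and $b \preceq' a$; a case split on whether each of $a,b$ lies in $X$ or is of the form $\infty_C$ reduces to either antisymmetry of $\preceq$ on $X$, or the fact that the only relations involving limit points are the reflexive pairs $(\infty_C,\infty_C)$ and the pairs $(x,\infty_C)$ with $x\in C$ (so mixed cases cannot yield both $a\preceq' b$ and $b\preceq' a$).

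The main step to be careful with is transitivity. The critical case is a chain of the form $x \preceq x' \preceq' \infty_C$, where $x,x'\in X$ and $x'\in C$; here I need $x\in C$ in order to conclude $x \preceq' \infty_C$. This is exactly where I use that $C$ is \emph{closed} in the Alexandroff topology on $(X,\preceq)$: since the open sets of $\paths X$ (more generally, of an Alexandroff space induced by a poset) are the upward closed sets, closed sets are downward closed. Non-sober sets are closed by assumption, so $C$ is downward closed in $(X,\preceq)$, and transitivity goes through. The remaining mixed cases (e.g.\ $x \preceq' \infty_C \preceq' \infty_{C'}$, or $\infty_C \preceq' \infty_{C'} \preceq' \infty_{C''}$) collapse via the observation that the only $\infty$-to-$\infty$ relations are the reflexive ones, forcing $C=C'$ in each such chain.

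It remains to show that $\history{a} = \{b \in X^\infty \mid b \preceq' a\}$ is totally ordered for every $a\in X^\infty$. If $a=x\in X$, the definition of $\preceq'$ admits no pair of the form $\infty_C \preceq' x$, so $\history{x}$ inside $X^\infty$ coincides with the history of $x$ in $(X,\preceq)$, which is totally ordered because $(X,\preceq)$ is already a prefix order. If $a=\infty_C$, then unfolding the definition gives $\history{\infty_C} = C \cup \{\infty_C\}$, and since $\infty_C$ is comparable with every element of $C$, total orderedness of $\history{\infty_C}$ reduces to total orderedness of $C$. But $C$ is, by hypothesis, an irreducible closed (hence downward closed) subset of the prefix order $(X,\preceq)$, so Proposition~\ref{prop:irr-downclosed} applies and yields that $C$ is totally ordered. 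This completes the verification.
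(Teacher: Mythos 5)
Your proof is correct and follows essentially the same route as the paper's: the crux in both is that a principal ideal below a limit point $\infty_C$ reduces to the non-sober set $C$ itself, which is totally ordered by Proposition~\ref{prop:irr-downclosed} because it is irreducible and (being closed in the Alexandroff topology) downward closed. You are more explicit than the paper about the partial-order axioms --- in particular your observation that transitivity of $\preceq'$ in the case $x \preceq x' \preceq' \infty_C$ genuinely needs downward closure of $C$ is a detail the paper subsumes under ``follows directly from the construction'' --- but this is the same argument, carried out more carefully.
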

\begin{proof}
  That the relation $\preceq'$ is a partial order follows directly from the construction of $\preceq'$. The interesting case is to show that the $X^\infty$ satisfies the downward totality. Let $p_1,p_2,p_3\in X^\infty$ such that $p_1 \preceq' p_3$ and $p_2 \preceq' p_3$. If $p_3\in X$ then the result follows directly from $\preceq$. So suppose $p_3=\infty_C$, for some non-sober subset $C\subseteq X$. Then, we find $p_1,p_2\in C$. And since irreducible closed sets are totally ordered (Proposition~\ref{prop:irr-downclosed}), we have $p_1 \preceq p_2 \lor p_2 \preceq p_1$, i.e., $p_1 \preceq' p_2 \lor p_2 \preceq' p_1$.
\end{proof}

\begin{proposition}[Proposition~\ref{prop:scott-sober}]
  Let $(X,\preceq)$ be a simple prefix order. A subset $U\subseteq X^\infty$ is \emph{Scott} open if and only if $U$ is upward closed and it is inaccessible by directed joins, i.e., for any directed set $D\subseteq X^\infty$ if $\sup D$ exists and $\sup D\in U$ then $D \cap U \neq \emptyset$. Let $\mcal S_{X^\infty}$ denote the collection of Scott open subsets of $X^\infty$. Then, the space $(X^\infty,\mcal S_{X^\infty})$ is a sober space.
\end{proposition}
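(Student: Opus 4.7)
The plan is to verify that $\mcal S_{X^\infty}$ is closed under finite intersections (and hence forms a topology), and then to establish sobriety by showing every irreducible Scott-closed subset $D \subseteq X^\infty$ is the principal downset of a unique element. The stated characterization of Scott open sets is essentially just the standard Scott-topology definition transplanted to $X^\infty$.

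For the topological part, closure under arbitrary unions is immediate; for finite intersections I would note that upward closure is preserved under intersection, and that if $D$ is directed with $\sup D \in U_1 \cap U_2$, then $d_i \in D \cap U_i$ exist by inaccessibility of each $U_i$, and an element $d \in D$ above both $d_1,d_2$ (provided by directedness) lies in $U_1 \cap U_2$ by upward closure.

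For sobriety, let $D \subseteq X^\infty$ be irreducible Scott-closed. Since every Scott-closed set is downward closed, Proposition~\ref{prop:irr-downclosed} gives that $D$ is totally ordered and (by nonemptyness of irreducible sets) directed. I will prove $D$ has a maximum element $m$, whence $D = \history{m} = \closure{\{m\}}$ in the Scott topology, with uniqueness of $m$ immediate from antisymmetry of $\preceq'$.

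The main obstacle is ruling out that $D$ has no maximum. Suppose it does not. By the definition of $\preceq'$, every $\infty_C$ is maximal in $X^\infty$, so $D$ cannot contain any such limit point; hence $D \subseteq X$, and being a chain without maximum it must be infinite. Simplicity of the prefix order makes $\history{x}$ finite for every $x \in X$, so no point of $X$ can bound $D$ from above. Viewed inside $X$, the set $D$ is therefore a non-sober irreducible closed set in the Alexandroff topology, and so $\infty_D \in X^\infty$. I then plan to argue that $\sup D = \infty_D$ in $X^\infty$: it is an upper bound by construction of $\preceq'$, and any other upper bound must be of the form $\infty_{C'}$ with $D \subseteq C'$; any $y \in C' \setminus D$ is comparable to every $x \in D$ by totality of $C'$, cannot lie below such $x$ (downward closure of $D$ would force $y \in D$) and cannot lie above all such $x$ (else $D \subseteq \history{y}$ would be finite, contradicting infiniteness of $D$), so $C' = D$. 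Consequently, Scott-closedness of $D$ together with the existence of $\sup D$ forces $\infty_D = \sup D \in D$, contradicting $D \subseteq X$.
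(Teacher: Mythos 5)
Your proof is correct and follows essentially the same route as the paper's: reduce to an irreducible Scott-closed set, use Proposition~\ref{prop:irr-downclosed} to see it is a chain, and show that a chain without a maximum would have its supremum $\infty_D$ adjoined in $X^\infty$, so Scott-closedness forces the maximum to exist and the set to be $\history{m}=\closure{\{m\}}$. You are in fact slightly more careful than the paper at one point: you explicitly verify that $\infty_D$ is the \emph{least} (indeed the unique) upper bound of an infinite chain $D\subseteq X$, using simplicity and downward closure to rule out any other $\infty_{C'}$ with $D\subsetneq C'$, whereas the paper asserts $\sup C=\infty_C$ without justification.
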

\begin{proof}
  Let $C\subseteq X^\infty$ be an irreducible closed set. Recall that $C$ is Scott closed if and only if $C$ is downward closed and for any directed set $D\subseteq C$, if $\sup D$ exists then $\sup D \in C$. Moreover, an irreducible closed set is totally ordered (cf. Proposition~\ref{prop:irr-downclosed}) and every chain is a directed set. If $C=\history p$ for some $p$, then clearly $\sup C$ exists and $C=\closure p =\history p$. On the other hand, if $C$ is an infinite chain, then $C$ can be either a non-sober set w.r.t. Alexandroff topology with or without the limit point $\infty_C$. However, if $C$ is without the limit point $\infty_C$, then $C$ is not a Scott-closed set because the supremum $\sup C=\infty_C$ exists in the $X^\infty$ and $\infty_C\not\in C$. Thus, if $C$ is an infinite chain then $C=\history \infty_C$, whenever $C$ is an irreducible Scott-closed set.
  Clearly, we have $C=\history \sup {\ C}$, for any irreducible Scott closed set $C$. Thus, the space $(X^\infty,\mcal S_{X^\infty})$ is sober.
\end{proof}

\begin{theorem}[Theorem~\ref{thm:val-scott-extension}]
  Let $(X,\preceq)$ be a prefix order and let $\mcal O_{X} \rTo^\mu \nnreals$ be a locally finite and Scott-continuous valuation. Then, the function $\mcal S_{X^\infty} \rTo^{\tilde \mu} \nnreals$ defined in the following way:
  \[\tilde \mu(V) = \mu(V \cap X) \quad \text{(for every Scott-open set $V\in \mcal S_{X^\infty}$)}\]
  is a Scott-continuous valuation. If $X$ is simple then $\tilde\mu$ is locally finite.
\end{theorem}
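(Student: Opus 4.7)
The plan is to verify each defining clause of a Scott-continuous valuation in turn, with the bulk of the real work sitting in Scott-continuity and in handling the limit points $\infty_C$ for local finiteness. The first observation is that $\tilde\mu$ is well-defined: a Scott-open $V \subseteq X^\infty$ is $\preceq'$-upward closed, and hence $V \cap X$ is $\preceq$-upward closed, i.e., Alexandroff-open in $X$, so $\mu(V \cap X)$ is meaningful. Strictness, monotonicity, and modularity then transfer directly from $\mu$ via the identities $(V \cup V') \cap X = (V \cap X) \cup (V' \cap X)$ and $(V \cap V') \cap X = (V \cap X) \cap (V' \cap X)$. For Scott-continuity, a directed family $(V_i)_{i \in I}$ of Scott-opens in $X^\infty$ restricts to a directed family $(V_i \cap X)_{i \in I}$ of Alexandroff-opens in $X$ whose union equals $(\bigcup_i V_i) \cap X$, so Scott-continuity of $\mu$ yields $\tilde\mu(\bigcup_i V_i) = \sup_i \tilde\mu(V_i)$.

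The delicate clause is local finiteness under simplicity. The key auxiliary claim I would establish is that for every $y \in X$, the principal filter $\uparrow y$ in $X^\infty$ is itself Scott-open. Upward closure is automatic; for inaccessibility by directed joins, take a directed $D \subseteq X^\infty$ with $\sup D \in \uparrow y$ and split on whether $\sup D \in X$ or $\sup D = \infty_C$. If $\sup D \in X$, then $D$ sits inside the $\preceq'$-downset of $\sup D$ which, by Proposition~\ref{prop:irr-downclosed} and simplicity of $X$, is a finite chain; hence $\sup D \in D$. If $\sup D = \infty_C$, then $y \in C$; assuming $\infty_C \notin D$, the chain $D$ must be cofinal in $C$, since otherwise some $c \in C$ would dominate $D$ inside $X$, producing an upper bound $c \in X$ with $\sup D \preceq' c$---impossible because $\infty_C$ is $\preceq'$-below only itself. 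Cofinality together with $y \in C$ then yields some $d \in D$ above $y$, so $d \in D \cap \uparrow y$.

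From this auxiliary claim, local finiteness drops out quickly. For $p \in X$, take $V = \uparrow p$; then $\tilde\mu(V) = \mu(\uparrow p)$ is bounded by $\mu(U)$ for any finitely-valued Alexandroff-open neighborhood $U$ of $p$ supplied by local finiteness of $\mu$, since $\uparrow p$ is the smallest such neighborhood. For $p = \infty_C$, pick any $x \in C$ and take $V = \uparrow x$; this contains $\infty_C$ because $x \preceq' \infty_C$, and $\tilde\mu(V) = \mu(\uparrow x) < \infty$ by the same reasoning. I expect the main obstacle to be the cofinality step inside the second sub-case of the Scott-openness verification, since this is the point at which the structure of the added $\infty$-points interacts non-trivially with the $\preceq'$-order and where simplicity is genuinely used; everything else is a routine transfer of properties of $\mu$ along the restriction $V \mapsto V \cap X$.
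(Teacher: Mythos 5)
Your proof is correct and follows essentially the same route as the paper: the valuation axioms and Scott-continuity transfer along $V \mapsto V \cap X$ exactly as in the paper's proof, and your local-finiteness argument is the paper's argument specialised to principal filters --- the paper shows that $\scott{U} = U \cup \{\infty_C \mid \exists_{p\in U}\ p \preceq \infty_C\}$ is Scott-open for \emph{every} Alexandroff-open $U$ and applies this directly to the finitely-valued neighbourhood supplied by local finiteness of $\mu$, whereas you show only that $\future{y}$ lifts to a Scott-open set and then bound $\mu(\future{y})$ via monotonicity and minimality of the smallest neighbourhood, using the same two-case split on $\sup D$ (finite chain versus limit point) in the inaccessibility argument. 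One small correction to your closing remark: simplicity is actually needed in the case $\sup D \in X$, to make $\history{\sup D}$ and hence the chain $D$ finite, not in the cofinality step for $\sup D = \infty_C$, which only uses total orderedness of $C$ and the fact that $\infty_C$ lies $\preceq'$-below nothing in $X$.
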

\begin{proof}
  First we note that $V \cap X \in \mcal O_{X}$, whenever $V\in\mcal S_{X^\infty}$. To see this, let $p,p'\in X$ with $p\preceq p'$ and $p\in V\cap X$. Clearly, $p\in V$ and $V$ is an upward closed set, i.e., $p'\in V$. Moreover, $p'\in V \cap X$.
  Secondly, we note that the function $\tilde\mu$ is strict since the function $\mu$ is strict and $\emptyset=\emptyset\cap X$.

  Next, we show that $\tilde\mu$ is order preserving. Let $V\subseteq V'$, for some $V,V'\in\mcal S_{X^\infty}$. Then $V \cap X \subseteq V' \cap X$. And the order preservation of $\mu$ gives $\tilde\mu(V) =\mu(V\cap X) \leq \mu(V'\cap X) = \tilde\mu(V')$.

  The modularity of $\tilde\mu$ follow directly from the modularity of $\mu$ since the operator $\cap$ distributes over union, i.e., for $\Box \in \{\cup,\cap\}$ and $V,V'\in\mcal S_{X^\infty}$ we have
  \[(V \cap X) \ \Box \ (V' \cap X) = (V \Box V') \cap X.\]

  Lastly, to show that $\tilde\mu$ is Scott-continuous, let $(V_i)_{i\in I}$ be a directed family of open sets with each $V_i\in\mcal S_{X^\infty}$. Then, we find that the family $(V_i\cap X)_{i\in I}$ is a directed family of open sets in $X$. Moreover,
  \[
  \tilde\mu(\bigcup_{i\in I} V_i)
    =\mu((\bigcup_{i\in I} V_i)\cap X)
    =\mu(\bigcup_{i\in I} (V_i \cap X))
    =\sup_{i\in I} \mu(V_i \cap X)
    =\sup_{i\in I} \tilde\mu(V_i).\]
  For local finiteness, we assume that $X$ is a simple prefix order. Then, we claim that the set $\scott U=U \cup \{\infty_C \mid \exists_{p\in U}\ p \preceq \infty_C\}$ is Scott-open, whenever $U\in\mcal O_X$. From this local finiteness of $\tilde\mu$ directly follows from the local finiteness of $\mu$.

  So let $U$ be Alexandroff-open set, i.e., $U\in\mcal O_X$. Clearly, $\scott U$ is upward closed. Now it remains to show that $U$ is inaccessible by directed joins. Let $D\subseteq X^\infty$ be a directed subsets of paths with $\scott p = \sup D$ and $\scott p \in U$. Then, we distinguish the following two cases:
  \begin{enumerate}
    \item Let $\scott p\in X$. Notice that every directed set $D\subseteq X$ is a chain (Proposition~\ref{prop:irr-downclosed}). Moreover, only finite chains in the ordered set $X$ has suprema. And since in a finite chain the supremum coincides with maximum, we find that $\scott p\in D$. Thus, $D \cap \scott U \neq \emptyset$.
    \item Let $\scott p=\infty_C$, for some non-sober set $C$ in $X$. Then, by construction we find some $p'\in U$ such that $p'\preceq \infty_C$. Moreover, we have $p \preceq \infty_C$, for all $p\in D$. Thus, by downward totality of paths (cf. Lemma~\ref{lemma:dtotal-lpaths}) we find
        \[\forall_{p\in D}\ p \preceq p' \lor p' \preceq p.\]
        We claim that $\exists_{p\in D} \neg (p \preceq p')$. Suppose otherwise that $\forall_{p\in D}\ p \preceq p'$. Then, we find $\infty_C= \sup D \preceq p'$. I.e., $p'=\infty_C$. But $p'\in X$. Thus, a contradiction. Hence, $p' \preceq p$, for some $p\in D$. And $U$ is upward closed, i.e., $p\in U$. Therefore, $D \cap \scott U \neq \emptyset$.
  \end{enumerate}
  Thus, for every open set $U\in\mcal O_{X}$, the set $\scott U$ is Scott-open.
\end{proof}

\begin{theorem}[Theorem~\ref{thm:bs-contained}]
  For countable sets $A$ and $X$, the sigma-algebra $\Sigma_{\paths X}$ is contained in the Borel sigma-algebra induced by Scott-open sets, i.e., $\Sigma_{\paths X} \subseteq \Sigma_{\lpaths X}=\mcal B(\mcal S_{\lpaths X})$. Moreover, we also have $\Sigma_{\spaths X} \subseteq \Sigma_{\slpaths{X}}$.
\end{theorem}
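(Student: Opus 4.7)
The plan is to reduce the inclusion $\Sigma_{\paths X} \subseteq \Sigma_{\lpaths X}$ to the following two facts: (i) the generators of $\Sigma_{\paths X}$ — the Alexandroff-opens of $\paths X$ — can be written as countable unions of principal filters $\future p$; and (ii) each such principal filter is itself Borel in $\lpaths X$. For (i), observe that the countability of $\tact$ and $X$ implies that $\paths X$ is countable: a path is determined by a word $\sigma \in \tact^*$ together with a function $\history\sigma \to X$ on a finite domain, and $\tact^*$ is countable. Since every Alexandroff-open $U \in \mcal O_{\paths X}$ is upward closed, it coincides with $\bigcup_{p \in U} \future p$, a countable union.

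The crux is to show that $\paths X$ itself is a Borel subset of $\lpaths X$. Using countability of $\paths X$, this reduces to showing that each singleton $\{p\}$ with $p \in \paths X$ is Borel. For this I would use the identity $\{p\} = \scott{\future p} \cap \history p$, where $\scott{\future p}$ is the Scott-open set constructed in the proof of Theorem~\ref{thm:val-scott-extension} (which applies because $\paths X$ is a simple prefix order), and $\history p$ is Scott-closed. The latter holds because the definition of $\preceq'$ only adds arrows pointing \emph{into} the limit points $\infty_C$, never out of them, so no $\infty_C$ sits below $p$ in $\lpaths X$; consequently the downward closure of $p$ in $\lpaths X$ coincides with the finite downward closure in $\paths X$, which is then trivially closed under directed suprema (a directed subset of a finite chain has its maximum as its supremum). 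Thus $\{p\}$ is Borel as an intersection of a Scott-open and a Scott-closed set, and $\paths X = \bigcup_{p \in \paths X} \{p\}$ is Borel as a countable union.

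With $\paths X \in \Sigma_{\lpaths X}$ established, each principal filter $\future p$ equals $\scott{\future p} \cap \paths X$, the intersection of two Borel sets; combined with (i), this establishes $\Sigma_{\paths X} \subseteq \Sigma_{\lpaths X}$. The proof of $\Sigma_{\spaths X} \subseteq \Sigma_{\slpaths X}$ is verbatim the same argument, using that $\spaths X$ is a simple prefix order (noted just before Proposition~\ref{prop:irr-downclosed}) and is countable as a quotient of the countable set $\paths X$, so Theorem~\ref{thm:val-scott-extension}'s Scott-open extension $\scott{\future{\qset{[p]}}}$ is available. The delicate point, which I expect to be the main obstacle to get right, is the verification that $\history p$ taken in $\lpaths X$ agrees with $\history p$ taken in $\paths X$ and is therefore finite: everything downstream hinges on this, and it is only guaranteed because the augmented order $\preceq'$ was designed so that each $\infty_C$ lies strictly above all elements of $C$ and is incomparable with everything else.
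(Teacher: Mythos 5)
Your proposal is correct and follows essentially the same route as the paper's proof: both hinge on the identity $\{p\}=\scott{(\future p)}\cap\history p$ with $\scott{(\future p)}$ Scott-open and $\history p$ Scott-closed, the countability of $\paths X$ (which you establish by a direct enumeration of finite-domain functions rather than the paper's induction on path length), and exhibiting $\paths X$ as a countable union of Borel singletons so that Alexandroff opens are recovered by intersecting Scott-opens with $\paths X$. Your explicit verification that $\history p$ taken in $\lpaths X$ coincides with the finite history in $\paths X$ is a welcome elaboration of a step the paper passes over with ``clearly''.
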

\begin{proof}
  It suffices to show that $\paths X$ is measurable in the space $\lpaths X$, i.e., $\paths X \in \Sigma_{\lpaths X}$. From this result follows directly by constructing the trace sigma-algebra restricted to $\paths X$, i.e., $\Sigma_{\lpaths X}|_{\paths X}=\{U\cap \paths X \mid U\in\Sigma_{\lpaths X}\}$. Since $\paths X$ is measurable we have $\Sigma_{\lpaths X}|_{\paths X} \subseteq \Sigma_{\lpaths X}$. Moreover, every open set $U\in \mcal O_{\paths X}$ is contained in $\Sigma_{\lpaths X}|_{\paths X}$ because $U=\scott U \cap \paths X$ and $\scott U\in \Sigma_{\lpaths X}$; thus, $\Sigma_{\paths X}\subseteq \Sigma_{\lpaths X}|_{\paths X}$.

  Thus, it suffices to show that the set $\paths X$ is measurable in $\lpaths X$.
  Next, we claim that every singleton $\{p\}$ (for $p\in\paths X$) is measurable in the space $\lpaths X$, i.e., $\{p\}\in\Sigma_{\lpaths X}$, for every $p\in \paths X$. To see this, first note that $\{p\}=\history p \cap \scott{(\future p)}$. From Theorem~\ref{thm:val-scott-extension} we know that $\scott{(\future p)}$ is a Scott-open. Clearly, the history $\history p$ is a Scott-closed set. Moreover, every open and closed set are Borel sets and since sigma-algebra is closed under countable intersection, we find that the singleton $\{p\}\in\Sigma_{\lpaths X}$ is Borel measurable.

  Now to complete the proof, it is sufficient to show that the Alexandroff open set $\future{\eseq x}=\{p\in \paths X \mid \eseq x \preceq p\}$ is countable (for $x\in X$). From this it follows that the set $\paths X$ is Borel measurable $\paths X\in \Sigma_{\lpaths X}$ because the set $\paths X = \bigcup_{x\in X} \future{\eseq x}$ is countable (since countable unions of countable sets is countable). 

  Thus, it remains to verify that the set $\future {\eseq x}$ is countable. Consider the following sets
  \[\future{(\eseq x,n)}=\{p\in\future{\eseq x}\mid |\max \dom p|=n\},\]
  where $|\sigma|$ denote the length of the word $\sigma\in\tact^*$. Clearly, $\future{\eseq x}=\bigcup_{n\in\mathbb N}\future{(\eseq x,n)}$. Now, by induction on $n$ we prove that each set $\future{(\eseq x,n)}$ is countable. Suppose that the set $\future{(\eseq x,n)}$ is countable. Then, a path in $\future{(\eseq x,n)}$ can be extended to only countably many paths in $\future{(\eseq x,n+1)}$ because the sets $\act$ and $X$ are. Let $p\in\paths X$. For $a\in\tact,x\in X$, we write $p_{a,x}$ to denote the path that extends $p$ such that $p_{a,x}(\max \dom p a)=x$. Then, the set $\{p_{a,x} \mid a\in \tact \land x\in X\}$ (for a given $p$) is isomorphic to the countable set $A\times X$ (Cartesian product of countable sets is countable). Therefore, the set $\future{(\eseq x,n+1)}=\bigcup_{p\in\future{(\eseq x,n)}} \{p_{a,x} \mid a\in\tact \land x\in X\}$ is countable. Thus, the set $\future {\eseq x}$ is also countable (since countable union of countable sets is a countable set).

  In the context of stutter equivalent paths, we show that $\spaths X\in \Sigma_{\slpaths{X}}$. First observe that every singleton $\{\qset{[p]}\}$ is measurable in $\slpaths{X}$ because $\{\qset{[p]}\}=\history{\qset{[p]}} \cap
  \scott{(\future{\qset{[p]}})}$. Lastly, $\paths X=\bigcup_{p\in\paths X} \{\qset{[p]}\}$ is measurable since countable union of measurable sets is measurable.
\end{proof}

\section{Proofs concerning Section~\ref{sec:prob}}\label{app-prob}
%

\begin{proposition}[Proposition~\ref{prop:separated}]
  In an Alexandroff space $(X,\mcal O_X)$, a separated subset $U\subseteq X$ (i.e., $U=U^\star$) has topologically distinguishable points. Moreover, 
  \begin{enumerate}
    \item the separation closure of a set is always separated, i.e., $U^\star = {U^\star}^\star$, for any $U\subseteq X$.
    \item the collection of separated sets is hereditary, i.e., if $U_1\subseteq U_2$ and $U_2=U_2^\star$, then $U_1=U_1^\star$.
    \item for any subset $U\subseteq X$, we have $(\future U)^\star\subseteq U^\star$. Moreover the converse also holds, if the underlying space $X$ is a $T_0$ space. Here, upward closure is w.r.t. the specialisation order $\preceq$, i.e., $x \preceq x' \iff \closure x \subseteq \closure{x'}$, for any $x,x'\in X$.
  \end{enumerate}
\end{proposition}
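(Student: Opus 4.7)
The plan is to work throughout with the specialization order characterization $x \in \closure{y} \iff y \preceq x \iff \closure{y} \subseteq \closure{x}$, and to unpack the definition $U^\star=\{x\in U \mid \closure x \cap U=\{x\}\}$ directly. Every claim then reduces to a short diagram-chase of set-containments, except for the converse of item (3), where the $T_0$ hypothesis must enter at exactly one place.

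For the main clause on topological distinguishability, I would suppose $x,y\in U=U^\star$ with $\closure x=\closure y$. Since $y\in\closure y=\closure x$, we get $y\in\closure x\cap U=\{x\}$, so $y=x$, proving distinct points have distinct closures. Items (1) and (2) are straightforward along the same lines: both rest on the trivial observation that if $V\subseteq W$ then $\closure x\cap V\subseteq\closure x\cap W$, together with $x\in\closure x\cap V$ whenever $x\in V$. Concretely, for (1) take $V=U^\star, W=U$ and any $x\in U^\star$; for (2) take $V=U_1, W=U_2$ and any $x\in U_1$. In both cases the singleton identity propagates to the smaller set immediately.

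Item (3) needs a little more care. For the inclusion $(\future U)^\star\subseteq U^\star$, I would first show that every $x\in(\future U)^\star$ actually lies in $U$: membership in $\future U$ produces some $u\in U$ with $u\preceq x$, i.e.\ $u\in\closure x$; since $U\subseteq\future U$ this $u$ lies in $\closure x\cap\future U=\{x\}$, forcing $u=x\in U$. Then $\closure x\cap U\subseteq\closure x\cap\future U=\{x\}$ gives $x\in U^\star$. The reverse inclusion uses $T_0$ in the last step: given $x\in U^\star$ and $y\in\closure x\cap\future U$, pick $u\in U$ with $u\preceq y$, so $u\preceq y\preceq x$, whence $u\in\closure x\cap U=\{x\}$. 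This forces $u=x$ and hence $x\preceq y\preceq x$.

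The single nontrivial ingredient, and the only obstacle worth flagging, is that $T_0$ is precisely what makes the specialization preorder antisymmetric; invoking this converts $x\preceq y\preceq x$ into $y=x$, which completes the argument that $\closure x\cap\future U=\{x\}$ and thus $x\in(\future U)^\star$. Without $T_0$ the chain $x\preceq y\preceq x$ only gives topological indistinguishability of $x$ and $y$, and one can produce easy counterexamples in the indiscrete two-point space where $U^\star\not\subseteq(\future U)^\star$, confirming that the hypothesis is essential at exactly this step and nowhere else.
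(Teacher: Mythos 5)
Your proposal is correct and follows essentially the same route as the paper's proof: direct unpacking of $U^\star$ via the identification $\closure x=\history x$, with the $T_0$/antisymmetry hypothesis entering only to collapse $x\preceq y\preceq x$ in the converse of item (3), exactly as in the paper. The only cosmetic differences are that you prove items (1) and (2) by one uniform monotonicity observation instead of the paper's contradiction argument for (1), and you certify distinguishability via distinct closures rather than by exhibiting the open set $X\setminus\closure x$; both are equivalent, and your indiscrete two-point counterexample correctly confirms that $T_0$ is indispensable there.
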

\begin{proof}
  Let $U=U^\star$ be a separated set and let $x,x'\in U$ with $x\neq x'$. Then, $\closure x \cap U = \{x\}$ and $\closure{ x'} \cap U =\{x'\}$. Thus, $x'\not\in \closure x,x\not\in\closure {x'}$. Clearly, the points $x,x'$ are topologically distinguishable because there is an open set $X\setminus \closure x$ which contains $x'$, but not $x$, i.e., $x'\in X\setminus \closure x$ and $x\not\in X\setminus \closure x$.

  \begin{description}
    \item[1] Let $U\subseteq X$ and let $x\in U^\star$. Then, we show that $\closure x\cap U^\star =\{x\}$. Suppose otherwise that there is a point $x'\in\closure x\cap U^\star$ with $x\neq x'$. Then, $x'\in U^\star$. Moreover, from the above we find $x'\not\in \closure x$; but this violates the assumption $x'\in \closure x$. Therefore, $x\in {U^\star}^\star$. The other direction ${U^\star}^\star \subseteq U^\star$ is trivial.
    \item[2] Let $U_1 \subseteq U_2$ and $U_2 = U_2^\star$. Then, it suffices to show that $U_1\subseteq U_1^\star$, i.e., $\closure x \cap U_1 =\{x\}$, for any $x\in U_1$. Let $x\in U_1$. Then, we find $\closure x \cap U_1 \subseteq \closure x \cap U_2 =\{x\}$. Clearly, $x\in \closure x \cap U_1$. Thus, $\closure x \cap U_1 =\{x\}$; whence, $x\in U_1^\star$.
    \item[3] Let $x\in(\future U)^\star$. Then, $\closure x \cap \future U = \{x\}$. I.e., $y \preceq x$ for some $y\in U$. And by the definition of specialisation order we find $\closure y \subseteq \closure x$ (i.e., $y\in\closure x$). But $\closure x \cap \future U = \{x\}$, thus, $y=x$ and $x\in U$. Moreover, $\closure x \cap U \subseteq \closure x \cap \future U=\{x\}$; whence, $x\in U^\star$.

        For the other direction, let $x\in U^\star$. Clearly, $x\in \future U$ because $x\in U$. Thus, it remains to show that $\closure x \cap \future U = \{x\}$. Suppose $y\neq x$ and $y\in \closure x \cap \future U$, i.e., $y\in \closure x$ and $y\in \future U$. Then, we find some $x'\in U$ such that $y\in \future {x'}$ and $x'\in U$. By transitivity of the specialisation order we find, $x' \preceq y \preceq x$. Moreover, $x\in U^\star$; thus, $x'=x$. And, since the given space $X$ is $T_0$, we know that $\preceq$ is antisymmetric. Thus, $y=x$, which contradicts the assumption $y\neq x$; whence $x\in (\future U)^\star$.\qedhere
  \end{description}
\end{proof}

\begin{lemma}[Lemma~\ref{lemma:step-sepclosure}]
  For a given probabilistic transition system $(X,\tact,P)$, define a set as follows:
  $x \steps L Y =\{p\in\paths X \mid p(\eseq{})=x\land \trace p\in L\land \last p\in Y \}.$
  Then, $\sum_{p\in x \step L Y}\mu_P(p)= \sum_{p\in (x \steps L Y)^\star} \mu_P(p) $.
\end{lemma}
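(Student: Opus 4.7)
The plan is to show that the two index sets differ only by paths on which $\mu_P$ vanishes, so the two sums are equal term by term (up to zero contributions). First, I would unfold the topological side: recall from earlier in the paper that in the Alexandroff space $\paths X$ the closure of a point is its history, $\closure{p}=\history p$. Hence $p\in (x\steps L Y)^\star$ is equivalent to the conjunction of $p\in x\steps L Y$ together with the prefix-minimality requirement that no $q\prec p$ lies in $x\steps L Y$. Since every $q\preceq p$ automatically satisfies $q(\eseq{})=p(\eseq{})=x$, this unfolds to: $p(\eseq{})=x$, $\trace p\in L$, $\last p\in Y$, and for every $q\prec p$ with $\trace q\in L$ one has $\last q\notin Y$.

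Second, I would establish the containment $x\step L Y\subseteq (x\steps L Y)^\star$ (which is already noted after the lemma statement) by inspecting the definition of $x\step L Y$: it adds precisely the extra requirement $p\in\exec x$ to the bullet list above, so every path in $x\step L Y$ satisfies the conditions isolating $(x\steps L Y)^\star$.

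Third, I would analyse the complement $(x\steps L Y)^\star\setminus (x\step L Y)$ and show that every $p$ lying in it satisfies $\mu_P(p)=0$. Indeed, such a $p$ already satisfies all the ``set-theoretic'' conditions of $x\step L Y$ by the characterisation above, so the only way for $p$ to fall outside $x\step L Y$ is that $p\notin\exec x$. By definition of an execution, this means there exists $\sigma a\in\dom p$ with $P(p(\sigma),a,p(\sigma a))=0$, and then the defining formula for $\mu_P$ forces $\mu_P(p)=0$.

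Finally, I would split the sum,
\[
\sum_{p\in (x\steps L Y)^\star}\mu_P(p) \;=\; \sum_{p\in x\step L Y}\mu_P(p) \;+\; \sum_{p\in (x\steps L Y)^\star\setminus (x\step L Y)}\mu_P(p),
\]
and observe that the second summand vanishes by the previous step, yielding the claim. There is no serious obstacle here; the only subtlety is lining up the ``earliest visit to $Y$'' clause in the definition of $x\step L Y$ with the separation-closure condition, which becomes transparent once one uses that closure coincides with history in the Alexandroff topology on $\paths X$.
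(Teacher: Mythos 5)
Your proof is correct and takes essentially the same route as the paper's: both identify $\closure p=\history p$ to unfold the separation closure, deduce $x\step L Y=(x\steps L Y)^\star\cap\exec x$, and then note that the remaining paths in $(x\steps L Y)^\star$ are non-executions on which $\mu_P$ vanishes, so the sums agree.
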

\begin{proof}
We begin by proving that $x \step L Y \subseteq (x \steps L Y)^\star$. Let $p\in x \step L Y$. Then, we need to show that $\mcal N(p) \cap x \steps L Y = \{p\}$. Suppose otherwise that $\mcal N(p) \cap x \steps L Y \neq \{p\}$. Note that $p\in \mcal N(p)=\history p$ and $p\in x\steps L Y $ (because $p\in x\step L Y$). Therefore, $\{p\}\subseteq \mcal N(p)\cap x \steps L Y$. Now consider a path $q\in \history p \cap q\in x\steps L Y$ with $p\neq q$. Then, we find $q \prec p$, $\trace q \in L$, and $\last q\in Y$. Moreover, $q\prec p$ and $p(\eseq{})=x$ implies that $q(\eseq{})=x$, i.e., $q\in\exec x$. Thus, $p\not\in x \step L Y$, which is a contradiction. Hence, $\sum_{p\in x \step L Y} \mu_P(p)\leq \sum_{p\in (x \steps L Y)^\star} \mu_P(p) $.

Now consider a path $p\in (x\steps L Y)^\star$. We will first show that if $p\in\exec x$ then $p\in x \step L Y$. Let $p\in (x \steps L Y)^\star$ with $p\in\exec x$. Then, we need to show that there is no path $q$ such that $q\prec p$, $\trace q \in L$, and $\last q\in Y$. Suppose otherwise that there is a $q$ with $q\prec p$, $\trace q \in L$, and $\last q\in Y$. Then, $q\in x \steps L Y$. But, $\mcal N(p) \cap x \steps L Y \neq \{p\}$, since $q\in \mcal N(p)$ and $q\in x \steps L Y$. Thus, $p \not\in (x\steps L Y)^\star$, which is a contradiction. Hence, $(x\steps L Y)^\star \cap \exec x = x \step L Y$.
Moreover, for any $p\in (x\steps L Y)^\star$ and $p\not\in\exec x$ we have $\mu_P(p)=0$. Thus, $\sum_{p\in x \step L Y} \mu_P(p) =\sum_{p\in (x \steps L Y)^\star} \mu_P(p)$.
\end{proof}

\begin{theorem}[Theorem~\ref{thm:sep:prob-char}]
Let $p\in\paths{X}$ be a path and let $U\subseteq\paths X$ be a separated set of paths such that $p\preceq U$, i.e., $\forall_{q\in U}\ p\preceq q$. Then, $\sum_{q\in U}\mu_P(q)\leq\mu_P(p)$.
\end{theorem}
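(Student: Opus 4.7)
My plan is to first reduce the claim to the case where $U$ is finite, and then prove the finite case by induction on the ``height'' of $U$ above $p$ in the prefix order. For the reduction, recall that arbitrary sums of nonnegative reals satisfy $\sum_{q\in U}\mu_P(q) = \sup\{\sum_{q\in U'} \mu_P(q)\mid U'\fsubseteq U\}$. By Proposition~\ref{prop:separated}\eqref{prop:separated:hereditary}, every subset of a separated set is separated; hence every finite $U'\fsubseteq U$ is itself a separated set with $p\preceq U'$, and it suffices to bound $\sum_{q\in U'}\mu_P(q)$ by $\mu_P(p)$.

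So fix a finite separated $U$ with $p\preceq q$ for every $q\in U$. If $p$ is not an execution, then by definition of $\mu_P$ no extension of $p$ is an execution either, so every term $\mu_P(q)$ vanishes and the inequality holds trivially. Assume therefore $p\in\exec{x}$ for some $x$, and proceed by induction on $n = \max\{|\max\dom q|\mid q\in U\} - |\max\dom p|$. For $n=0$ every $q\in U$ has the same domain length as $p$, and since $p\preceq q$ forces $q=p$, either $U=\emptyset$ or $U=\{p\}$, and in either case the bound is immediate.

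For the inductive step $n\geq 1$, I would distinguish whether $p\in U$. If $p\in U$ then separatedness of $U$ forces $U=\{p\}$, since any strict extension $q\succ p$ would satisfy $p\in\closure q\cap U$. Otherwise, every $q\in U$ strictly extends $p$, so it has a unique ``first step above $p$'', and $U$ decomposes as the disjoint union $U = \bigsqcup_{(a,y)} U_{a,y}$ with $U_{a,y}=\{q\in U\mid p_{a,y}\preceq q\}$, where $p_{a,y}$ denotes the one-step extension of $p$ with $p_{a,y}(\max\dom p\cdot a) = y$. Each nonempty $U_{a,y}$ is separated by hereditariness, satisfies $p_{a,y}\preceq U_{a,y}$, and has strictly smaller height than $U$; moreover a nonempty $U_{a,y}$ forces $p_{a,y}\in\exec x$, so $P(\last p,a,y)>0$ and $\mu_P(p_{a,y})=\mu_P(p)\cdot P(\last p,a,y)$. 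The induction hypothesis yields $\sum_{q\in U_{a,y}}\mu_P(q)\leq \mu_P(p_{a,y})$, and summing over $(a,y)$ gives
\[
\sum_{q\in U}\mu_P(q) \;\leq\; \mu_P(p)\sum_{(a,y)} P(\last p, a, y) \;\leq\; \mu_P(p),
\]
using the substochastic condition $\sum_{(a,y)} P(\last p,a,y)\in\{0,1\}$ from the definition of a fully probabilistic system.

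The main obstacle is keeping the decomposition of $U$ honest: one must verify that whenever $p\notin U$ every $q\in U$ really does have a well-defined first step strictly above $p$, that the resulting sub-antichains $U_{a,y}$ remain separated, and that the induction is actually well-founded despite the fact that a generic separated $U$ can be infinite. The first two points are immediate consequences of the prefix-order structure of paths together with hereditariness (Proposition~\ref{prop:separated}), while the third is precisely why the sup-over-finite-subsets reduction at the start of the argument is essential; after that, everything collapses to the multiplicative telescoping of $\mu_P$ against the substochastic branching weights.
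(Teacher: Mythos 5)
Your proof is correct and follows essentially the same route as the paper's: reduce to a finite separated $U$ via the definition of infinite sums, factor each $\mu_P(q)$ as $\mu_P(p)$ times the product of the weights of the steps beyond $\dom p$, and bound the residual sum by $1$ using the substochastic branching $\sum_{(a,y)}P(\last p,a,y)\in\{0,1\}$ at each node --- the paper merely phrases your induction on height as an explicitly constructed weighted tree rooted at $p$ whose recursion tree is exactly your decomposition $U=\bigsqcup_{(a,y)}U_{a,y}$. (One immaterial slip: a nonempty $U_{a,y}$ does not force $p_{a,y}$ to be an execution, since $U$ need not consist of executions; but then $\mu_P(p_{a,y})=0=\mu_P(p)\cdot P(\last p,a,y)$ and the corresponding sub-sum vanishes, so every inequality in your chain still holds.)
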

\begin{proof}
We first compute:
{\allowdisplaybreaks
\begin{align*}
&\sum_{q\in U}\mu_P(q)=\sum_{q\in U\cap\bigcup\{\exec x\mid x\in X\}}\prod_{\sigma a\in\dom q}P(q(\sigma),y,q(\sigma a))\\
=&\sum_{q\in U\cap\bigcup\{\exec x\mid x\in X\}}\prod_{\sigma a\in\dom p}P(p(\sigma),a,P(\sigma a))\cdot\prod_{\sigma a\in\dom q \setminus\dom p}P(q(\sigma),a,q(\sigma a))\\
=&\sum_{q\in U\cap\bigcup\{\exec x\mid x\in X\}}\mu_P(p)\cdot\prod_{\sigma a\in\dom q \setminus\dom p}P(q(\sigma),a,q(\sigma a))\\
=&\mu_P(p)\cdot\sum_{q\in U\cap\bigcup\{\exec x\mid x\in X\}}\prod_{\sigma a\in\dom q\setminus\dom p}P(q(\sigma),a,q(\sigma a)).
\end{align*}
}
Now asume $U$ is finite -- if it is not, pick arbitrarily a finite $\overline U\fsubseteq U$, the result for an infinite $U$ follows then by definition of infinite sums -- then, by separation, there exists an $n\in\mathbb N_0$, such that for all $q_1, q_2\in U\cap\bigcup\{\exec x\mid x\in X\}$ it holds that $q_1\neq q_2\Rightarrow\dom{q_1}\cap\dom{q_2}\cap \tact^n=\emptyset$. From $\sum_{(a,x')\in\tact\times X}P(x,a,x')\in\{0,1\}$ and non-negativity of $P$ it follows for any path $q\in U\cap\bigcup\{\exec x\mid x\in X\}$ that $$\prod_{\sigma a\in\dom{q}\setminus\dom p}P(q(\sigma),a,q(\sigma a))\leq\prod_{\sigma a\in(\dom q\setminus\dom p)\cap\tact^{\leq n}}P(q(\sigma),a,q(\sigma a))$$
We will now show that, in turn $$\sum_{q\in U\cap\bigcup\{\exec x\mid x\in X\}}\prod_{\sigma a\in(\dom q\setminus\dom p)\cap\tact^{\leq n}}P(q(\sigma),a,q(\sigma a))\leq1.$$ Plugging this into the previous computation yields the desired result.

In order to see that $$\sum_{q\in U\cap\bigcup\{\exec x\mid x\in X\}}\prod_{\sigma a\in(\dom q\setminus\dom p)\cap\tact^{\leq n}}P(q(\sigma),a,q(\sigma a))\leq1$$ holds, we model the weights each path $q\in U\cap\bigcup\{\exec x\mid x\in X\}$  contributes to the sum using a rooted, directed tree $G=(V,E)$, where $V$ is the set of vertices and $E$ is the set of edges. In this tree, each vertex is labelled with a path $q$ and  and will be associated with a weight $w(q)$ that is the sum of all weights of $q'\in U\cap\bigcup\{\exec x\mid x\in X\}$ in the above sum, i.e. it holds that $$w(q)=\sum_{q'\in U\cap\bigcup\{\exec x\mid x\in X\}, \dom{q'}\supseteq\dom q}\prod_{\sigma a\in(\dom{q'}\setminus\dom q)\cap\tact^{\leq n}}P(q'(\sigma),a,q'(\sigma a)).$$
For this purpose, build the tree from the root downwards as follows. The root is labelled $p$. Now, given a vertex with label $q$, if $q\in U$, $q$ is a leaf, otherwise consider all pairs $(a,x)\in \tact\times X$ such that there exists a $q'\in U$ with $\max\dom{q}a\in\dom{q'}$ and $q'\mid_{\dom{q}}=q$. For all such $(a,x)$, there exists a successor vertex labelled $q'\mid_{\history{\max\dom{q}a}}$ and the edge $e=(q, q'\mid_{\history{\max\dom{q}a}})$ has weight $$w(e)=P(q'(\max\dom{q}),a,q'(\max\dom{q}a)).$$ Note, that if $U$ is non-empty, this is a mutually exclusive choice, because of separetedness of $U$; if $U$ is empty, then the sum is taken over an empty set, and the result holds trivially. Now, given any vertex $v$, we can inductively define its weight $w(v)$ via a case distinction. If $v$ is a leaf, then $w(v)=1$. Otherwise, $$w(v)=\sum_{\{v'\mid (v,v')\in E\}}w(v,v')\cdot w(v).$$ This way, we ensure that each node labelled as $q$ has the weight $$w(q)=\sum_{q'\in U\cap\bigcup\{\exec x\mid x\in X\}, \dom{q'}\supseteq\dom q}\prod_{\sigma a\in(\dom{q'}\setminus\dom q)\cap\tact^{\leq n}}P(q'(\sigma),a,q'(\sigma a)).$$ In particular, it holds that $$w(p)=\sum_{q\in U\cap\bigcup\{\exec x\mid x\in X\}}\prod_{\sigma a\in(\dom{q}\setminus\dom p)\cap\tact^{\leq n}}P(q(\sigma),a,q(\sigma a)).$$ Now it is easy to see inductively, that for each vertex $q$ it holds that $w(q)\leq1$. If $q$ is a leaf, this follows by definition. Otherwise, we can see inductively, that all successor vertices have a weight less than or equal $1$. Now, let $x=q(\max\dom q)$, then the weight of $q$ must be smaller than or equal to $\sum_{(a,x')\in\tact\times X}P(x,a,x')$ -- each edge starting in $q$ corresponds to one such $P(x,a,x')$ and all these values get multiplied with a value that is not greater than $1$ -- and by definition of $P$, this smaller or equal to $1$, which concludes the proof.
\end{proof}

\begin{theorem}[Theorem~\ref{thm:pts-valuation}]
  Let $(X,\tact,P)$ be a given probabilistic transition system. Then, the function $\mcal O_{\paths X} \rTo^{\tilde\mu_P} \nnreals$ defined as $\tilde\mu_P(U)=\mu_P(U^\star)$ (for every open set $U\in\mcal O_{\paths X}$) is a locally finite and Scott-continuous valuation.
\end{theorem}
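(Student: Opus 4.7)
The plan is to verify the four defining clauses of a locally finite, Scott-continuous valuation in turn, with modularity being the main obstacle. Strictness is immediate since $\emptyset^\star=\emptyset$ and $\mu_P$ vanishes on the empty sum, giving $\tilde\mu_P(\emptyset)=0$. Local finiteness is equally direct: the smallest open neighbourhood $\future p$ of any path $p$ satisfies $(\future p)^\star=\{p\}$, so $\tilde\mu_P(\future p)=\mu_P(p)\leq 1<\infty$.

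For monotonicity, suppose $U\subseteq V$ are open. Since path domains are finite, every history $\history p$ is a finite chain; thus for each $p\in U^\star\subseteq V$ the set $\history p\cap V$ admits a minimum element $q_p$, and $q_p\in V^\star$ by construction. The fibres $F_q=\{p\in U^\star\mid q_p=q\}$ partition $U^\star$, and each is separated (Proposition~\ref{prop:separated:hereditary}) and lies above $q$. Applying Theorem~\ref{thm:sep:prob-char} fibrewise yields $\sum_{p\in F_q}\mu_P(p)\leq\mu_P(q)$, and summing over $q\in V^\star$ gives $\mu_P(U^\star)\leq\mu_P(V^\star)$.

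The crux is modularity. The key combinatorial observation is that no path $p$ can admit two predecessors $q_U\prec p$ in $U\setminus V$ and $q_V\prec p$ in $V\setminus U$ simultaneously: since $\history p$ is totally ordered we would have, say, $q_U\preceq q_V$, and then upward closure of $U$ would force $q_V\in U$, a contradiction. Writing $A=U^\star\cap V^\star$, $B=U^\star\cap(V\setminus V^\star)$, $B'=V^\star\cap(U\setminus U^\star)$, $C=U^\star\setminus V$, and $C'=V^\star\setminus U$, this observation rules out the case of a path $p\in(U\cap V)^\star$ belonging to neither $U^\star$ nor $V^\star$, and case analysis delivers the disjoint decompositions
\[U^\star=A\sqcup B\sqcup C,\ \ V^\star=A\sqcup B'\sqcup C',\ \ (U\cup V)^\star=A\sqcup C\sqcup C',\ \ (U\cap V)^\star=A\sqcup B\sqcup B'.\]
Summing $\mu_P$ over each piece then reduces modularity to a routine arithmetic identity in which $\mu_P(A)$ appears twice on each side.

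For Scott-continuity, let $(U_i)_{i\in I}$ be directed. Monotonicity gives $\sup_i\tilde\mu_P(U_i)\leq\tilde\mu_P(\bigcup_i U_i)$. For the reverse, every finite $F\fsubseteq(\bigcup_i U_i)^\star$ is contained in some $U_j$ by directedness; each $p\in F$ is minimal in $\bigcup_i U_i\supseteq U_j$ and hence minimal in $U_j$, so $F\subseteq U_j^\star$ and $\sum_{p\in F}\mu_P(p)\leq\mu_P(U_j^\star)\leq\sup_i\tilde\mu_P(U_i)$. Since an arbitrary sum of non-negative reals is defined as the supremum over its finite sub-sums, passing to the supremum over $F$ completes the argument.
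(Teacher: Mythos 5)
Your proof is correct. Strictness, local finiteness, monotonicity and Scott-continuity run essentially parallel to the paper's argument: the paper also partitions $U^\star$ by minimal predecessors lying in $V^\star$ and bounds each fibre by the weight of its predecessor (though it argues that bound by hand via image-finiteness of $P$ rather than invoking Theorem~\ref{thm:sep:prob-char}, which is the cleaner route you take), and its Scott-continuity step likewise absorbs a finite subset of $(\bigcup_i U_i)^\star$ into a single member of the directed family. Where you genuinely diverge is modularity, which both you and the paper identify as the crux. The paper unwinds $\mu_P(U_1^\star)+\mu_P(U_2^\star)$ and $\mu_P((U_1\cup U_2)^\star)+\mu_P((U_1\cap U_2)^\star)$ into suprema over finite subsets and proves two opposite inequalities, each requiring the construction of auxiliary ``smallest'' finite sets of dominating minimal elements and a case analysis on whether distinct predecessors coincide. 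You instead exhibit an exact disjoint decomposition of all four separation closures into the five pieces $A,B,B',C,C'$, justified by the single observation that a totally ordered history cannot meet both $U\setminus V$ and $V\setminus U$ (upward closure of open sets forbids it); modularity then follows by cancellation, using only additivity of non-negative sums over disjoint unions. I checked the four identities and they hold: for instance, a point of $U^\star$ survives into $(U\cup V)^\star$ exactly when it has no strict predecessor in $V$, which by openness of $V$ happens exactly when it lies in $C$ or $A$, and your key observation is precisely what rules out a point of $(U\cap V)^\star$ escaping both $U^\star$ and $V^\star$. Your version is shorter, avoids the paper's delicate bookkeeping with $\fsubseteq$-approximants, and makes the combinatorial content of modularity explicit as an equality of sets rather than a pair of inequalities of suprema; the paper's route only ever needs inequalities, so it would degrade more gracefully if the pointwise weight $\mu_P$ were replaced by something merely subadditive, but for the statement as given your decomposition is the more transparent proof.
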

  \begin{proof}
  Clearly, $\tilde\mu_P$ is strict because $\emptyset^\star=\emptyset$ and $\mu_P(\emptyset)=0$. Also, local finiteness is straightforward because for any path $p$ we have a finitely valued open neighbourhood $\future p$. Moreover, using Proposition~\ref{prop:separated}.\ref{prop:separated:future}  we find that $\tilde\mu_P(\future p) = \mu_P((\future p)^\star)=\mu_P(\{p\}^\star)=\mu_P(p)<\infty$.

  \begin{description}
    \item[Order preservation] Let $U_1\subseteq U_2$ for some $U_1,U_2\in\mcal C_{\paths X}$. Let $p\in U_1^\star$. I.e., $p\in U_2$. Since $U_2^\star$ contains all the minimal elements of $U_2$, we find some $q\in U_2^\star$ such that $q\preceq p$. Thus, we can partition the set $U_1^\star$ in the following way: $p \approx p' \iff \exists_{q\in U_2^\star}\ q \preceq p \land q\preceq p'$. To verify transitivity of $\approx$, let $p\approx p'$ and $p'\approx p''$. Then we find $q\preceq p,q\preceq p',q'\preceq p'$, and $q'\preceq p''$, for some $q,q'\in U_2^\star$. By downward totality of paths we have $q \preceq q' \vee q'\preceq q$. In any case, $q=q'$ since $q,q'\in U_2^\star$. Thus, $p\approx p''$. Next, we prove that for any $q\in U_2^\star$ and $[p]_q=\{p'\in U_1^\star \mid p\approx p'\}$ with $p\in U_1^\star$ and $q\preceq p$ we have $\mu_P(q)\geq \sum_{p'\in[p]_q} \mu_P(p')$. Note that each $p'\in [p]_q$ is a future of $q$, i.e., $q\preceq p'$ and any two paths in $p',p''\in[p]_q$ are topologically distinguishable paths (i.e., in order theoretic terms they are incomparable). Furthermore, probabilistic transition system are image finite, i.e., there are finitely many elements $p'\in[p]_q$ with $\mu_P(p')\geq 0$ and the sum of outgoing transitions is either 0 or 1. I.e., there are some finitely many real numbers $r_1,\cdots,r_n\in (0,1)$ such that $\sum_{i=1}^n r_i\leq 1$ and $\sum_{p'\in [p]_q} \mu_P(p')=\sum_{i=1}^n r_i \mu_P(q)\leq \mu_P(q)$. Thus, $\mu_P(U_1^\star) \leq\mu_P(U_2^\star)$. The proof for the case when $U_1,U_2$ are infinite sets follows immediately from above.
    \item[Modularity] Let $U_1,U_2\in\mcal O_{\paths X}$ be any two open sets. Then we find
  \begin{align*}
    \tilde\mu_P(U_1) + \tilde\mu_P(U_2) =& \ \mu_P(U_1^\star) +\mu_P(U_2^\star)\\
    =& \ \sup \{\mu_P(\bar U_1) \mid \bar U_1 \fsubseteq U_1^\star\} + \sup \{\mu_P(\bar U_2) \mid \bar U_2 \fsubseteq U_2^\star\}\\
    =& \ \sup \{\mu_P(\bar U_1) + \mu_P(\bar U_2)\mid \bar U_1 \fsubseteq U_1^\star\land \bar U_2 \fsubseteq U_2^\star\}.
  \end{align*}
  and
  \begin{align*}
    \tilde\mu_P(U_1 \cup U_2) +& \tilde\mu_P(U_1 \cap U_2) = \mu_P((U_1\cup U_2)^\star) + \mu_P((U_1\cap U_2)^\star)\\
    =& \ \sup \{\mu_P(\bar U) \mid \bar U\fsubseteq (U_1\cup U_2)^\star\} + \sup \{\mu_P(\bar U') \mid \bar U' \fsubseteq (U_1\cap U_2)^\star\}\\
    =& \ \sup \{\mu_P(\bar U) + \mu_P(\bar U') \mid \bar U \fsubseteq (U_1\cup U_2)^\star \land \bar U' \fsubseteq (U_1\cap U_2)^\star\}.
  \end{align*}
  Let $\bar U_1 \fsubseteq U_1^\star$ and $\bar U_2 \fsubseteq U_2^\star$. Let $\bar U\subseteq (U_1 \cup U_2)^\star$ and $\bar U' \subseteq (U_1 \cap U_2)^\star$ be the smallest sets satisfying the condition (respectively):
  \[\forall_{p\in \bar U_1 \cup \bar U_2}\exists_{q\in\bar U}\ q\preceq p \qquad \text{and}\qquad \forall_{p\in \bar U_1 \cap \bar U_2} \exists_{q\in\bar U'}\ q\preceq p.\]
  Then, we find that the sets $\bar U,\bar U'$ are finite since the given sets $\bar U_1\cup\bar U_2,\bar U_1\cap \bar U_2$ are finite. Next, we prove the inequality $\mu_P(\bar U_1) + \mu_P(\bar U_2) \leq \mu_P(\bar U) + \mu_P(\bar U')$ indirectly by establishing:
  \[\mu_P(\bar U_1) + \mu_P(\bar U_2) \leq  \mu_P(\bar U \cup \bar U').\]
  Let $p_1\in\bar U_1 ,p_2 \in\bar U_2$. I.e., there are paths $q_1,q_2\in \bar U$ such that $q_1 \preceq p_1$ and $q_2 \preceq p_2$. If $q_1 \neq q_2$ then we are done. So suppose $q_1 = q_2$. Then, we find that $p_1,p_2\in U_1 \cap U_2$ because $U_1,U_2$ are open sets. I.e., there are paths $q_1',q_2'\in \bar U'$ such that $q_1' \preceq p_1$ and $q_2' \preceq p_2$. Notice that $q_1'\preceq p_1$ and $q_1 \preceq p_1$. So by downward totality we get $q_1 \preceq q_1' \lor q_1' \preceq q_1$. Consider the case $q_1'\preceq q_1$. Then, we find $q_1'\in U_1 \cup U_2$ and $q_1=q_1'$ since $q_1\in (U_1\cup U_2)^\star$. Therefore, the above condition simplifies to $q_1 \preceq q_1'$. Likewise, we can derive $q_1 \preceq q_2'$. Thus, $\mu_P(p_1) + \mu_P(p_2) \leq \mu_P(q) + \mu_P(q_1') + \mu_P(q_2')$.

  For the other direction, let $\bar U\fsubseteq (U_1 \cup U_2)^\star$ and $\bar U' \fsubseteq (U_1 \cap U_2)^\star$. Let $\bar U_1\subseteq U_1^\star$ and $\bar U_2\subseteq U_2^\star$ (respectively) be the smallest sets satisfying the conditions:
  \[\forall_{p\in(\bar U\cap U_1)\cup \bar U'}\exists_{q\in \bar U_1}\ q \preceq p \qquad \text{and} \qquad \forall_{p\in(\bar U\cap U_2)\cup \bar U'}\exists_{q\in \bar U_2}\ q \preceq p.\]
  Then, the sets $\bar U_1,\bar U_2$ are finite just because the given sets $\bar U,\bar U'$ are finite. Next, we prove the inequality: $\mu_P(\bar U) + \mu_P(\bar U') \leq \mu_P(\bar U_1) + \mu_P(\bar U_2)$. Let $p\in\bar U,q\in\bar U'$. Then, we find $p\in U_1\cup U_2$. So let $p\in U_1$ (the case when $p\in U_2$ is symmetric). Then, we find $p\in \bar U\cap U_1$. Clearly, $p,q\in (\bar U\cap U_1) \cup \bar U'$. I.e., there are some paths $p',q'\in \bar U_1$ such that $p'\preceq p$ and $q' \preceq q$. If $p'\neq q'$, then we are done. So suppose $p'=q'$. Then, we observe that $q\in (\bar U\cap U_2) \cup \bar U'$ (because $q\in \bar U'$). I.e., there is some $q''\in \bar U_2$ such that $q''\preceq q$. Thus, $\mu_P(p)+\mu_P(q) \leq \mu_P(p')+\mu_P(q'')$.
    \item[Scott-continuity] Let $(U_i)_{i\in I}$ be a directed family of open sets. Then, we have to show that $\tilde\mu_P(\bigcup_{i\in I} U_i) = \sup_{i\in I} \tilde\mu_P(U_i)$. First, we note that $U_i \subseteq \bigcup_{j\in I} U_j$ (for each $i\in I$). Thus, by order preservation of $\tilde\mu_P$ we get $\tilde\mu_P(U_i) \leq \tilde\mu_P(\bigcup_{j\in I} U_j)$, for each $i\in I$. And by the universal property of $\sup$ we find that $\sup_{i\in I}\tilde\mu_P(U_i) \leq \tilde\mu_P(\bigcup_{i\in I} U_i)$.

        Therefore, it remains to show that $\tilde\mu_P(\bigcup_{i \in I} U_i) \leq \sup_{i\in I}\tilde\mu_P(U_i)$. I.e., we have to prove
        \[
        \sup\left\{\mu_P(\bar U) \mid \bar U \fsubseteq \Big(\bigcup_{i\in I} U_i\Big)^\star\right\}
        \leq
        \sup\left\{\mu_P(U_i^\star) \mid i\in J \land J\fsubseteq I\right\}.
        \]
        Let $\bar U\fsubseteq(\bigcup_{i\in I}U_i)^\star$ and $p\in\bar U$. Then there exists an $i\in I$ such that $p\in U_i=:U_p$. Since $\bar U$ is finite, we can write $\bar U=\{p_1, p_2, ..., p_n\}$ for some natural number $n\geq 1$. We now inductively define a \emph{finite family} of open sets $V_m$ as follows: $V_{p_1}=U_{p_1}$ and given an $m\geq 1$, there exists a set $V_{p_{m+1}}$ such that $V_{p_{m+1}}\supseteq V_{p_m}\cup U_{p_{m+1}}$ (since the given family $(U_i)_{i\in I}$ is directed). It is clear that $\bar U$ is separated (cf. Proposition~\ref{prop:separated}.\ref{prop:separated:hereditary}). Thus, we can conclude
        \[\bar U\subseteq V_{p_n}
        \implies
            \tilde\mu_P(\bar U)\leq\tilde\mu_P(V_{p_n})
                \implies
                    \mu_P(\bar U)\leq\mu(V_{p_n}^*)\enspace.\qedhere\]
  \end{description}
\end{proof}
\begin{proposition}[Proposition~\ref{prop:alpha-measure}]
  The mapping defined in \eqref{eq:measure-alpha} induced by a probabilistic transition system is a probability measure.
\end{proposition}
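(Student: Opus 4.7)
The plan is to verify that for every state $x$, the map $U \mapsto \tilde\mu_P(\inv{\pi_X}(U) \cap \future{\eseq x})$ satisfies the three defining axioms of a probability measure on $(\spaths X, \Sigma_{\spaths X})$. Before doing so, I would check well-definedness: for any $U\in\Sigma_{\spaths X}$ we have $\inv{\pi_X}(U)\in\Sigma_{\paths X}$ by Borel measurability of $\pi_X$ (Theorem~\ref{thm:quotientmea}), and $\future{\eseq x}$ is open and thus lies in $\Sigma_{\paths X}$, so their intersection is measurable and $\tilde\mu_P$ can be evaluated on it.

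Nullity and $\sigma$-additivity I would handle together. Nullity is immediate from $\tilde\mu_P(\emptyset)=0$. For $\sigma$-additivity, given a countable family of pairwise disjoint sets $U_i \in \Sigma_{\spaths X}$, preimages under any function preserve disjointness, so the sets $\inv{\pi_X}(U_i) \cap \future{\eseq x}$ are still pairwise disjoint; combined with the identity $\inv{\pi_X}(\bigcup_i U_i) \cap \future{\eseq x} = \bigcup_i (\inv{\pi_X}(U_i) \cap \future{\eseq x})$ and the $\sigma$-additivity of the Borel measure $\tilde\mu_P$ (supplied by Corollary~\ref{cor:meas-gen}), the desired equation $\alpha(x)(\bigcup_i U_i) = \sum_i \alpha(x)(U_i)$ follows at once.

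The main obstacle is normalisation, i.e.\ showing $\alpha(x)(\spaths X) = \tilde\mu_P(\future{\eseq x}) = 1$. My plan is to apply Proposition~\ref{prop:separated}(\ref{prop:separated:future}) to the singleton $\{\eseq x\}$: since $\paths X$ is a $T_0$ space, this yields $(\future{\eseq x})^\star = \{\eseq x\}^\star$, and the latter equals $\{\eseq x\}$ because any singleton is trivially separated. Since the Borel extension $\tilde\mu_P$ agrees with the valuation of Theorem~\ref{thm:pts-valuation} on open sets (by the uniqueness clause of Theorem~\ref{thm:alvarez} via the lifting machinery of Theorems~\ref{thm:val-scott-extension} and~\ref{thm:bs-contained}), I can then write $\tilde\mu_P(\future{\eseq x}) = \mu_P((\future{\eseq x})^\star) = \mu_P(\eseq x)$, and Proposition~\ref{prop:ord-reversal} gives $\mu_P(\eseq x) = 1$. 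Finally, monotonicity of $\tilde\mu_P$ guarantees $\alpha(x)$ takes values in $[0,1]$, completing the verification. The delicate conceptual point is the compatibility between the Borel measure on $\lpaths X$ and the valuation on open subsets of $\paths X$, but this is exactly what Theorems~\ref{thm:alvarez}--\ref{thm:bs-contained} were set up to ensure.
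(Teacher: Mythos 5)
Your proposal is correct and follows essentially the same route as the paper's proof: strictness of $\tilde\mu_P$ for nullity, disjointness of preimages plus $\sigma$-additivity of the extended Borel measure for countable additivity, and the reduction $\tilde\mu_P(\future{\eseq x})=\mu_P((\future{\eseq x})^\star)=\mu_P(\eseq x)=1$ for normalisation. You merely make explicit a few steps the paper leaves implicit (well-definedness via Theorem~\ref{thm:quotientmea}, and the use of Proposition~\ref{prop:separated}.\ref{prop:separated:future} in the normalisation step), which is a faithful and slightly more detailed rendering of the same argument.
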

\begin{proof}
  Clearly, $\alpha(x)(\emptyset)=\tilde\mu_P(\emptyset \cap \future{ \eseq x}) = \tilde\mu_P(\emptyset)=0$. Now, let $(U_i)_{i\in I}$ be a countable family of pairwise disjoint Borel subsets $U_i\in \Sigma_{\spaths{X}}$ (for each $i\in I$). Then, we claim that for any $i,j\in I$ with $i\neq j$ we have $\inv{\pi_X}(U_i)\cap \inv{\pi_X}(U_j)=\emptyset$. Suppose otherwise we have $p\in \inv{\pi_X}(U_i)\cap \inv{\pi_X}(U_j)$. Then, $\qset{[p]}\in U_i\cap U_j$ violating the aforementioned assumption. Thus,
  {\allowdisplaybreaks
  \begin{align*}
    \alpha(x)(\bigcup_{i\in I} U_i)
        &=\tilde\mu_P\Big(\inv{\pi_X}\big(\bigcup_{i\in I} U_i\big) \cap \future{\eseq x}\Big)\\
        &=\tilde\mu_P\Big(\big(\bigcup_{i\in I} \inv{\pi_X}(U_i)\big) \cap \future{\eseq x}\Big)\\
        &=\tilde\mu_P\Big(\bigcup_{i\in I} \big(\inv{\pi_X}(U_i) \cap \future{\eseq x}\big)\Big)\\
        &=\sum_{i\in I}\tilde\mu_P\big
        (\inv{\pi_X}(U_i)\cap\future{\eseq x}\big)=\sum_{i\in I}\alpha(x)(U_i).
  \end{align*}
  }
  Lastly, $\alpha(x)(\spaths X)=\tilde\mu_P(\paths X\cap \future{\eseq x})=\tilde\mu_P(\future{\eseq x})=\mu_P(\eseq x)=1$.
\end{proof}

\begin{theorem}[Theorem~\ref{thm:pbisim}]\label{thm:apbisim}
  Two states are probabilistic delay bisimilar if and only if they are $\mcal G\circ \tspaths$-behaviourally equivalent.
\end{theorem}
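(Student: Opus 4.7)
Both directions were sketched in the main body of Section~\ref{sec:prob}; my plan is to fill in the remaining technical gaps. For the $(\Rightarrow)$ direction, the candidate valuation $\nu_{f(x)}$ on $\mcal{O}_{\spaths{X/R}}$ has already been defined, and it remains to verify it is independent of the chosen representative of $[x]_R$. I would reduce well-definedness to equality on basic Alexandroff cones $\future{\qset{[q]}}$ for $q\in\paths{X/R}$. For each such cone, unfolding the definitions and applying Lemma~\ref{blemma}\eqref{blemma:pi-pres} yields $\nu_{f(x)}(\future{\qset{[q]}}) = \tilde\mu_P(\inv{\paths f}\inv{\pi_Y}(\future{\qset{[q]}}) \cap \future{\eseq{x}})$, which by Proposition~\ref{prop:separated}\eqref{prop:separated:future} and Lemma~\ref{lemma:step-sepclosure} rewrites as a sum of the form $P(x, L_q, Y_q)$, where $L_q$ and $Y_q$ encode the trace and terminal data of $q$. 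Because $R$ is a delay bisimulation, iterating the defining equality $P(x, \tau^*\hat a, [y]_R) = P(x', \tau^*\hat a, [y]_R)$ along the stages of $q$ (which is precisely what Lemma~24 of \cite{sokolova:2009:sacs} formalises) shows this value is stable under swapping $x$ for $x'$.

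For the $(\Leftarrow)$ direction, set $x R x' \iff f(x) = f(x')$; I must prove that $R$ is a probabilistic delay bisimulation. Fix $x R x'$, $a \in \tact$, and an equivalence class $C = [x'']_R$. My plan is to build a single open set $V \in \mcal{O}_{\spaths Y}$ capturing exactly \enquote{reach $f(C)$ from $f(x)$ via a trace in $\tau^*\hat a$}, namely $V = \bigcup_{q \in A} \future{\qset{[q]}}$, where $A$ is the set of paths $q \in \paths Y$ with $q(\eseq{}) = f(x)$, $\trace{q} \in \tau^*\hat a$, and $\last q \in f(C)$. Using Lemma~\ref{blemma}\eqref{blemma:pi-pres} and the definition \eqref{eq:measure-alpha} of $\alpha$, I would establish the key set identity
\begin{equation*}
\inv{\paths f}\bigl(\inv{\pi_Y}(V)\bigr) \cap \future{\eseq{x}} \;=\; \future{(x \steps{\tau^*\hat a} C)},
\end{equation*}
so that $\alpha(x)(\inv{\spaths f}(V)) = \tilde\mu_P(\future{(x \steps{\tau^*\hat a} C)}) = \mu_P((x \steps{\tau^*\hat a} C)^\star) = P(x, \tau^*\hat a, C)$ by Proposition~\ref{prop:separated}\eqref{prop:separated:future} and Lemma~\ref{lemma:step-sepclosure}. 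Running the same computation starting from $x'$ and invoking the homomorphism property $\beta \circ f = \G\spaths f \circ \alpha$ then yields $P(x, \tau^*\hat a, C) = \beta(f(x))(V) = \beta(f(x'))(V) = P(x', \tau^*\hat a, C)$, as required.

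The main obstacle is justifying the displayed set identity. The $\supseteq$ inclusion is immediate: if $p \in x \steps{\tau^*\hat a} C$, then $f \circ p \in A$, so $p \in \inv{\paths f}\inv{\pi_Y}(V) \cap \future{\eseq{x}}$, and the cone $\future{(x \steps{\tau^*\hat a} C)}$ is carried along by the upward closure of the right-hand side. The $\subseteq$ direction requires a lifting argument: given $p \in \future{\eseq x}$ with $\qset{[f \circ p]} \in \future{\qset{[q]}}$ for some $q \in A$, I would apply Lemma~\ref{lemma:stutt-hreflect} to $\stutter q \preceq \stutter{(f \circ p)}$ to produce $\tilde p \sim q$ with $\tilde p \preceq f \circ p$; then the reflection property in Theorem~\ref{thm:historypres} lifts $\tilde p$ back to a prefix $p' \preceq p$ with $f \circ p' = \tilde p$. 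Lemma~\ref{blemma}\eqref{blemma:stutt-pres-startstate} transfers the trace shape, while Lemma~\ref{blemma}\eqref{blemma:stutt-pres-laststate} combined with the $R$-saturation of $C$ (which holds since $C = f^{-1}(f(x''))$) places a well-chosen prefix $p'' \preceq p'$ into $x \steps{\tau^*\hat a} C$. The delicate point is trimming trailing $\tau$-transitions: any such trailing $\tau$-step of $p'$ must be stuttering under $f$, so the intermediate states in $X$ all lie in $f^{-1}(f(\last{p'})) = C$, which is exactly what guarantees the cut-off prefix $p''$ legitimately ends in $C$ while having trace exactly $\tau^*\hat a$.
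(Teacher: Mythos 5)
Your proposal is correct and follows essentially the same route as the paper's proof: the forward direction rests on the same well-definedness argument via Lemma~24 of \cite{sokolova:2009:sacs} followed by the extension and Dynkin steps, and the backward direction establishes the same two identities (reducing $\alpha(x)(\inv{\spaths f}V)$ to $P(x,\tau^*\hat a,C)$ via Lemma~\ref{lemma:stutt-hreflect}, the reflection property of Theorem~\ref{thm:historypres}, Lemma~\ref{blemma}, Proposition~\ref{prop:separated} and Lemma~\ref{lemma:step-sepclosure}) before invoking the homomorphism property. The only cosmetic deviation is that you verify well-definedness on principal cones and propagate by modularity and Scott-continuity, whereas the paper argues directly for an arbitrary open set by showing that $\trace U$ is a saturated block and $\last U$ is $R$-saturated.
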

\begin{proof}
\fbox{$\Rightarrow$} For this direction, it was left to show that $\nu_{f(x)}$ in \eqref{eq:qconst} is well-defined. We first recall a technical result from \cite[Lemma~24]{sokolova:2009:sacs}. For any branching bisimulation $R$, any $R$-saturated set $M\subseteq X$, and any saturated block $W\subseteq \tact^*$, the following equation holds
  \begin{equation}\label{eq:completeness}
    P(x,W,M)=P(x',W,M), \quad \text{for any $x,x'\in X$ such that $x Rx'$}.
  \end{equation}
  A subset $M\subseteq X$ is $R$-saturated if for all $x\in M$, the whole equivalence class of $x$ is contained in $M$, i.e., $\forall_{x,x'}\ (x\in M \land x R x') \implies x'\in M$. A subset $W \subseteq \tact^*$ is saturated if it can be decomposed into a collection of blocks $B_i$, for some index set $I$, i.e., $W=\bigcup_{i\in I} B_i$, where each block $B_i=\tau^*a_1\tau^*\cdots\tau^*a_n\tau^*$, for a word $a_1a_2\cdots a_n\in \tact^*$.

  Let $U=\inv{\pi}(\inv{\spaths f} V)$. First we claim that $\trace U=\{\trace p\mid \qset{[fp]}\in V\}$ is saturated by establishing the equation:
  \[\trace U = \bigcup_{\qset{[fp]}\in V} g(\trace{\stutter p}),\]
  where the operation $\tact^* \rTo^g \mcal P(\tact^*)$ is defined inductively as follows: $g(\eseq{}) =\tau^*$ and $g(\sigma a) = g(\sigma) \tau^* a\tau^*$. The direction $\trace U \subseteq \bigcup_{\qset{[fp]}\in U} g(\trace{\stutter p})$ because $\sigma \in g(\sigma)$, for any $\sigma\in\tact^*$. For the other direction, let $\sigma\in g(\trace{\stutter{p}})$, for some $\qset{[fp]}\in V$. Let $\trace{\stutter{p}}=a_1a_2\cdots a_n$ (for some $n\geq 1$). Then, we find $\sigma=\tau^{m_1}a_1\tau^{m_2}\cdots \tau^{m_{n}} a_n \tau^{m_{n+1}}$.

  Define a path $q$ whose domain is $\sigma$ as follows:
  \[
  q(\sigma')=
  \begin{cases}
    \stutter{p}(\eseq {}) & \text{if}\ \sigma'\preceq \tau^{m_i},\\
    \stutter{p}(a_1\cdots a_i) & \text{if}\ \tau^{m_1}a_1\cdots \tau^{m_{i}}a_i \prec \sigma' \preceq \tau^{m_1}a_1\cdots \tau^{m_{i}}a_i\tau^{m_i+1}\land i\leq n,\\
    \stutter{p}(a_1\cdots a_i) & \text{if}\ \sigma'=\tau^{m_1}a_1\cdots \tau^{m_{i}}a_i \land i\leq n.
  \end{cases}
  \]
  Clearly, $\stutter{p}=\stutter{q}$, i.e., $p\sim q$. Thus, $\qset{[fp]}=\qset{[fq]}\in V$; hence, $q\in \trace U$.

  Next, we verify that $\last U$ is an $R$-saturated set. To see this let, $x\in \last U$ and $x R x'$. Then, there is a path $p\in U$ such that $\last p=x$ and $\qset{[fp]}\in V$. Suppose $\dom p=\eseq{}$. Then, we find that $f\circ \eseq x=f\circ \eseq{x'}$ (because $x Rx'$). I.e., $\eseq{x'}\in U$. Now suppose $p$ has a non-empty domain, i.e., there is some path $p'\prec p$ such that $\max \dom {p'}=\sigma$, $\max \dom p=\sigma a$, for some $\sigma\in\tact^*,a\in\tact$. Then, we define a path $q$ whose domain is identical to $p$ as follows: $q(\sigma')=p'(\sigma')$ if $\sigma'\in\dom {p'}$ and $q(\sigma a)=x'$. Again, we find that $f\circ p =f\circ q$ because $x R x'$. Thus, $q\in U$, i.e., $x'\in U$.

  Now we can prove that $\nu_{f(x)}$ in \eqref{eq:qconst} is well-defined. Suppose $x R x'$. Then it suffices to show that $\tilde\mu(U\cap \eseq x)=\tilde\mu(U\cap \eseq x')$, where $U=\inv{\pi}(\inv{\spaths f} V)$ and $V\in\mcal O_{\spaths{X/R}}$. This follows directly from the fact that $\tilde\mu(U\cap \eseq x)=\mu((U\cap \eseq x)^\star)=P(x,\trace U,\last U)$ and Equation~\ref{eq:completeness}.

  (\fbox{$\Leftarrow$})
  Let $(X,\tact,P)$ be a fully probabilistic system and $(X,\alpha)$ be the corresponding $\mcal G\circ \tspaths$-coalgebra. Moreover, let $(Y,\beta)$ be a $\mcal G\circ \tspaths$-coalgebra and $X \rTo^f Y$ be a $\mcal G\circ \tspaths$ coalgebra homomorphism.
  We will show that the equivalence relation $x R x' \iff f(x)=f(x')$ is a probabilistic branching bisimulation in two stages. First, we claim that
  \begin{equation}
        \label{eq:1}
          \beta(f(x))\Big(\bigcup_{q\in f(x) \steps{\hat a} \{y\}} \future {\qset {[q]} }\Big)=
            \alpha(x)\Big(\bigcup_{p\in x \steps{\hat a} \inv f\{y\}} \future{\qset{[p]}}\Big) \qquad \text{for any $x\in X,y\in Y$ }.
        \end{equation}
  Here, we abbreviate $f(x) \steps{\tau^*\hat a} \{y\}$ by $f(x) \steps{\hat a} \{y\}$. Notice that the sets $\bigcup_{q\in f(x) \steps{\hat a} \{y\}} \future {\qset {[q]} }$ and $\bigcup_{p\in x \steps{\hat a} \inv f\{y\}} \future{\qset{[p]}}$ are open sets (hence Borel sets).
  Second, we claim that
  \begin{equation}\label{eq:2}
    P(x,\tau^*\hat a,\inv f \{y\})=\alpha(x)\Big(\bigcup_{p\in x \steps{\hat a} \inv f\{y\}} \future{\qset{[p]}}\Big) \qquad
    \text{for any $x\in X,y\in Y$}.
  \end{equation}
  Using the above two properties we can now prove the transfer property of bisimulation as follows:
  \begin{align*}
    P(x,\tau^*\hat a,\inv f \{y\})
    =&\ \alpha(x)\Big(\bigcup_{p\in x \steps{\hat a} \inv f\{y\}} \future{\qset{[p]}}\Big) & (\text{Using Equation~\eqref{eq:2}})\\
    =&\ \beta(f(x))\Big(\bigcup_{q\in f(x) \steps{\hat a} \{y\}} \future {\qset {[q]} }\Big)& (\text{Using Equation~\eqref{eq:1}})\\
    =&\  \beta(f(x'))\Big(\bigcup_{q\in f(x') \steps{\hat a} \{y\}} \future {\qset {[q]} }\Big) & (\because f(x)=f(x'))\\
    =&\ \alpha(x')\Big(\bigcup_{p\in x' \steps{\hat a} \inv f\{y\}} \future{\qset{[p]}}\Big) & (\text{Using Equation~\eqref{eq:1}})\\
    =&\ P(x',\tau^*\hat a,\inv f \{y\}) & (\text{Using Equation~\eqref{eq:2}}).
  \end{align*}

  Next, we show the satisfaction of Property \eqref{eq:1}. For this we derive
  {\allowdisplaybreaks
  \begin{align*}
    \beta(f(x))\Big(\bigcup_{q\in f(x) \steps{\hat a} \{y\}} \future {\qset {[q]} }\Big) &=
        \tilde\mu\bigg( \inv {\paths f} \Big(\inv {\pi_Y} (\bigcup_{q\in f(x) \steps{\hat a} \{y\}} \future {\qset {[q]} }) \Big) \cap \future {\eseq x}\bigg)\\
        &=
        \tilde\mu\bigg( \inv {\paths f} \Big(\bigcup_{q\in f(x) \steps{\hat a} \{y\}} \inv {\pi_Y} (\future {\qset {[q]}}) \Big) \cap \future{\eseq x}\bigg) \tag{Lemma~\ref{lemma:stutt-hreflect}}\\
        &=
        \tilde\mu\bigg( \inv {\paths f} \Big(\bigcup_{q'\in\{q' \mid \exists_q\ q' \sim q \land q\in f(x) \steps{\hat a} \{y\}\}} \future {q'} \Big) \cap \future{\eseq x}\bigg)\\
        &=
        \tilde\mu\bigg( \Big(\bigcup_{q'\in\{q' \mid \exists_q\ q' \sim q \land q\in f(x) \steps{\hat a} \{y\}\}} \inv {\paths f}(\future {q'})\Big) \cap \future{\eseq{x}} \bigg)\\
        &=
        \tilde\mu\bigg( \Big(\bigcup_{p\in f(x) \stepsX{\hat a}\{y\}} \future {p} \Big) \cap \future{\eseq x}\bigg)\tag{where, \(f(x) \stepsX{\hat a}\{y\}=\{p \mid \exists_q\ fp \sim q \land q\in f(x) \steps{\hat a} \{y\}\}\)}\\
        &= \tilde\mu\bigg( \bigcup_{p\in f(x) \stepsX{\hat a}\{y\}} \Big(\future {p}  \cap \future{\eseq x} \Big)\bigg)\\
        &= \tilde\mu\bigg( \bigcup_{p\in f(x) \stepsX{\hat a}\{y\}} \future {p} \bigg) \\
        &=
        \tilde\mu( \bigcup_{p\in x \steps {\hat a} \inv f \{y\}} \future p).
  \end{align*}
  }
  To show that why the last two steps in the above derivation are equivalent, we prove that $f(x) \stepsX{\hat a}\{y\} = x \steps {\hat a} \inv f \{y\}$. Let $p\in f(x) \stepsX{\hat a}\{y\}$. Then we find $\eseq x \preceq p$, $fp\sim q$, and $q\in f(x) \steps{\hat a}\{y\}$, for some $q\in\paths Y$. From Lemma~\ref{blemma}\eqref{blemma:stutt-pres-startstate} and \ref{blemma}\eqref{blemma:stutt-pres-laststate} we find that $p\in x \steps{\hat a}\inv f\{y\}$. For the other direction, take $q=fp$ for any $p\in x \steps{\hat a} \inv f\{y\}$.

  Lastly, we show the satisfaction of \eqref{eq:2}. Let $x\in X,y\in Y$. Then we derive
        \begin{align*}
          P(x,\tau^*\hat a,\inv f \{y\})=&\ \mu_P((x \steps{\tau^*\hat a} \inv f\{y\})^\star) \quad (\text{Proposition~\ref{prop:separated}.\ref{prop:separated:future} and $\future U=\bigcup_{p\in U}\future p$})\\
            =&\ \mu_P\bigg(\Big(\bigcup_{p\in x \steps{\hat a} \inv f \{y\}} \future{p}\Big)^\star\bigg)\\
            =&\ \tilde\mu_P\bigg(\bigcup_{p\in x \steps{\hat a} \inv f \{y\}} \future{p}\bigg)\\
            =&\ \alpha(x)(\bigcup_{p\in x \steps{\hat a}\inv f \{y\}} \future{\qset{[p]}}).\qedhere
        \end{align*}
\end{proof}
\end{document}